\newcommand{\Rmnum}[1]{\expandafter\@slowromancap\romannumeral #1@}
\newtheorem{proposition}{Proposition}[section]
\newtheorem{lemma}[proposition]{Lemma}
\newtheorem{remark}[proposition]{Remark}
\newtheorem{thm}[proposition]{Theorem}
\newtheorem{definition}[proposition]{Definition}
\newtheorem{ex}[proposition]{Example}
\begin{document}

\begin{center}
\title{\large \sc \bf On a matrix constrained CKP hierarchy}

\maketitle

\author{Song Li}

\address{School of Mathematics (Zhuhai)\\ Sun Yat-sen University\\ Zhuhai, Guangdong, 519082, China.\\Email: lisong57@mail2.sysu.edu.cn}

\medskip
\author{Kelei Tian }

\address{School of Mathematics\\ Hefei University of Technology\\ Hefei, Anhui, 230601, China.\\Email: kltian@hfut.edu.cn}

\medskip
\author{Zhiwei Wu}

\address{Corresponding author\\ School of Mathematics (Zhuhai)\\ Sun Yat-sen University\\ Zhuhai, Guangdong, 519082, China.\\Email: wuzhiwei3@mail.sysu.edu.cn}
\bigskip

\renewcommand{\thefootnote}{\fnsymbol{footnote}}


\end{center}

\bigskip
\textbf{Abstract:} The algebraic structures of integrable hierarchies play an important role in the study of soliton equations. In this paper, we use splitting theory to give a matrix  representation of a constrained CKP hierarchy, which can be considered as a generalization
of the $\hat{A}_{2n}^{(2)}$-KdV hierarchy and the constrained KP hierarchy. An equivalent construction in terms of the pseudo-differential operator is discussed. Darboux transformations, scaling transformation and tau functions $\ln \tau_f$ for this constrained hierarchy are studied. Moreover, we present formulas for the Virasoro vector fields on $\ln \tau_f$ for the $\hat{A}_{2 n}^{(2)}$-KdV hierarchy.

\medskip

\textbf{Keywords:} matrix constrained CKP hierarchy, Darboux transformation, scaling transformation, tau function, Virasoro vector field

\textbf{Mathematics Subject Classification:} 17B80 $\cdot$ 35Q51 $\cdot$ 37K10 $\cdot$ 37K30 $\cdot$ 37K35
\bigskip


\section{\sc \bf Introduction}

Integrable equations, which describe nonlinear phenomena ubiquitous in domains such as fluid dynamics, plasma physics, biological systems, quantum field theory and etc, are essential in \cite{DavydovSi1991,HadaMNw1997,LetourneuxVQg1992, Mineev-WeinsteinIs2000}. In literature, one systematic way to generate such kind of equations is by the pseudo-differential operators \cite{DickeySe2003}.

Let $\mathcal{D}$ be the algebra of pseudo-differential operators $X=\sum_{i\leq i_{0}}X_i\partial^i, X_i\in C^{\infty}(\mathbb{R}, \mathbb{C})$.
Let symbols $( A)_{\geq 0}$ and $( A)_{<0}$  denote $\sum\nolimits_{i=0}^{m} a_{i}\partial^{i}$ and $\sum\nolimits_{i=-\infty}^{-1} a_{i}\partial^{i}$ respectively for $A=\sum\nolimits_{i=-\infty}^{m} a_{i}\partial^{i}\in\mathcal{D}$.
Let $\mathcal{N}$ denote the algebra of pseudo-differential operators of the form
\begin{eqnarray}
L=\partial+u_1\partial^{-1}+u_2\partial^{-2}+\cdots, \nonumber
\end{eqnarray}
where $\partial=\frac{\partial}{\partial x}$ and the coefficients $u_i=u_i(x, t)$ are functions.
It was shown that the following equation generated hierarchies of commuting Hamiltonian flows on $\mathcal{N}$,
\begin{eqnarray}
	\frac{\partial L}{\partial t_{n}}=[(L^{n})_{\geq 0},L],\;\;n=1,2,3,\cdots,\nonumber
\end{eqnarray}
which is  the KP (Kadomtsev-Petviashvili) hierarchy \cite{DateJKMNi1983,JimboMSa1983,DickeySe2003}.

The constrained integrable hierarchy, what can be considered as the invariant submanifold of the KP hierarchy, encompasses many noteworthy systems such as the $GD_{n}$ (Gelfand-Dickey) hierarchy \cite{DickeySe2003,AdlerMoserOa1978,AdlerOa1979,YangCBmLMP2024}, CKP hierarchy \cite{DateKashiwaraJimboMiwaKh1981} and constrained KP hierarchy \cite{LiuBsJMP1996, AratynNPVsPLA1997, DickeyOtLMP1995,LiuZZCiJGP2015, WuOtJMP2012} and so on.
For example, $\{L\in\mathcal{N}|(L^{n})_{<0}=0\}$
is invariant under the KP flows, then the $GD_{n}$ hierarchy  \cite{DickeySe2003,AdlerMoserOa1978,AdlerOa1979} is
\begin{eqnarray}
\frac{\partial L}{\partial t_{j}}=[(L^j)_{\geq0},L],\;j\neq 0\; \text{mod} \;n,\nonumber
\end{eqnarray}
which can derive the KdV equation describing shallow water waves.
Let $^\ast$ denote the adjoint action with respect to the $L^2$ norm.
The set $\{L\in\mathcal{N}|L^{*}=-L\}$
is an invariant submanifold under the KP flows.
The $(2n+1)$-th flow of the CKP hierarchy \cite{DateKashiwaraJimboMiwaKh1981} is defined by
\begin{eqnarray}\label{CKPLax}
\frac{\partial L}{\partial t_{2i-1}}=[(L^{2i-1})_{\geq0},L],\;\;\;i=1,2,\cdots.
\end{eqnarray}
The Kaup-Kuperschmidt equation is its first non trivial equation that can be used to describe certain nonlinear waves in \cite{KaupOtSAM1980,KupershmidtAsPLA1984}.

Let
\begin{eqnarray}
\mathcal{N}_{cKP}=\{L\in\mathcal{N}|L^{k}=(L^{k})_{\geq0}+\sum_{i=1}^{N} \psi_{i} \partial^{-1}\phi_{i}\}.\nonumber
\end{eqnarray}
For $L\in\mathcal{N}_{cKP}$, if $\psi_i$'s and $\phi_i$'s are eigenfunctions and the adjoint eigenfunctions, submanifold $\mathcal{N}_{c}$ is an invariant under the KP flows. The hierarchy is called the constrained KP hierarchy\cite{KonopelchenkoSSDiPLA1991,ChengLTcPLA1991,AratynNPVsPLA1997, DickeyOtLMP1995,LiuZZCiJGP2015}.
The interplay between it and discrete multi-matrix models was described, and its additional symmetries and Darboux-B\"{a}cklund solutions were obtained \cite{AratynNPCKIJMPA1997}. Oevel and Strampp constructed bi-Hamiltonian structures and Wronskian solutions of the constrained KP hierarchy\cite{OevelSCkCMP1993,OevelSWsJMP1996,OevelDtPa1993}.
A geometric description of the constrained KP hierarchy was provided within the Grassmann manifold of Segal and Wilson, and its equivalence to Krichever's general rational reductions of the KP hierarchy was established \cite{HelminckvAaCMP1998}.

Let
\begin{eqnarray}
\mathcal{N}_{cCKP}=\{L\in\mathcal{N}|L^{2n+1}=(L^{2n+1})_{\geq0}+\sum_{i=1}^{m}\left(q_{i}\partial^{-1}r_{i}+r_{i}\partial ^{-1}q_{i}\right)\}.\nonumber
\end{eqnarray}
For $L\in\mathcal{N}_{cCKP}$, if the $q_{i}$'s and $r_{i}$'s are eigenfunctions, $\mathcal{N}_{cCKP}$ is an invariant submanifold under the KP flows.
The hierarchy is called the constrained CKP hierarchy \cite{LorisOr1999}. When $n=0$, gauge transformations and additional symmetries of constrained CKP hierarchy are constructed \cite{HeWCGtJMP2007,TianHCAsSCM2011}.

Soliton equations can be generated from a splitting of a Lie algebra \cite{TerngUhlenbeckTn2011, DrinfeldSokolovLa1984,LiuWZDhCMP2020}.
The $\hat{A}_{2n}^{(2)}$-KdV hierarchy was constructed by using splitting theory \cite{TerngWuDt2023} and it is equivalent to the $GD_{2n+1}$ hierarchy generated by
\begin{eqnarray}
\mathcal{A}=\left\{L=\partial^{2n+1}+\sum_{i=1}^{n}\left(\partial^{n+1-i}u_i\partial^{n-i}+\partial^{n-i}u_i\partial^{n+1-i}\right)\mid u_i \in \mathbb{C}(\mathbb{R}, \mathbb{C})\right\}.\nonumber
\end{eqnarray}
Darboux transformations for the $\hat{A}_{2n}^{(2)}$-KdV hierarchy are constructed and a geometric interpretation of the soliton hierarchy the equations was given \cite{TerngWuDt2023}.
In this paper, we will generalize the constrained CKP hierarchy and construct its matrix formulation. Subsequently, we investigate the Darboux transformations, permutability formula, scaling transformations and tau functions of the matrix constrained CKP hierarchy. In this way, we can study the Virasoro vector fields on $\ln \tau_f$ for the $\hat{A}_{2 n}^{(2)}$-KdV hierarchy.

The paper is organized as follows. In Section 2, we construct the matrix constrained CKP hierarchy by using the splitting theory and prove that the matrix constrained CKP hierarchy and the constrained CKP hierarchy are equivalent. The Darboux transformations, scaling transformations and tau functions are given in Section 3 and 4, respectively. In Section 5, we prove that the Virasoro vector fields on $\ln \tau_f$ for the $\hat{A}_{2 n}^{(2)}$-KdV hierarchy are given by partial differential operators. Section 6 is left for conclusions and dicsussions.

\bigskip

\section{\sc \bf Splitting Theory for the Matrix Constrained CKP Hierarchy}

In this section, we construct the matrix constrained CKP hierarchy by a splitting of the Lie algebra \cite{TerngUhlenbeckTn2011, DrinfeldSokolovLa1984,LiuWZDhCMP2020}.

We define an involution of $SL(2m+2n+1,\mathbb{C})$ as
\begin{eqnarray}
\sigma(Y)=\left(C_{m,n} Y^t C_{m,n}^{-1}\right)^{-1}, \;Y \in SL(2m+2 n+1, \mathbb{C}),\nonumber
\end{eqnarray}
where
\begin{eqnarray}
C_{m,n}=\operatorname{diag}(\alpha_m, \beta_n), \;\;\alpha_m=\sum_{i=1}^{2m}e_{i, 2m+1-i},\beta_n=\sum_{i=1}^{2n+1}(-1)^{n+i-1}e_{i, 2n+2-i}.\nonumber
\end{eqnarray}
Note that
\begin{eqnarray}
C_{m,n}^2=I_{2m+2n+1}.\nonumber
\end{eqnarray}
Then the induced involution $\sigma_*$ on $sl(2m+2 n+1, \mathbb{C})$ is
\begin{eqnarray}
\sigma_*(y)=-C_{m,n} y^t C_{m,n}^{-1}, \;y\in sl(2m+2 n+1, \mathbb{C}).\nonumber
\end{eqnarray}
Let $\mathcal{K}$ and $\mathcal{P}$ denote the 1 and $-1$ eigenspaces of $\sigma_*$ on $sl(2m+2n+1, \mathbb{C})$, respectively,
then
\begin{eqnarray}
[\mathcal{K}, \mathcal{K}] \subset \mathcal{K}, \quad[\mathcal{K}, \mathcal{P}] \subset \mathcal{P}, \quad[\mathcal{P}, \mathcal{P}] \subset \mathcal{K}.\nonumber
\end{eqnarray}
Let
\begin{eqnarray}
\mathcal{G}_{m,n}=\left\{\xi(\lambda)=\sum_{i \leq i_0} \xi_i \lambda^i \mid
\sigma_*(\xi(-\lambda))=\xi(\lambda),\;
 \xi_i \in sl(2m+2n+1, \mathbb{C})\right\},\nonumber
\end{eqnarray}
and
\begin{eqnarray}
 \left(\mathcal{G}_{m,n}\right)_{+}&=&\left\{\xi(\lambda)=\sum_{i \geq 0} \xi_i \lambda^i \in \mathcal{G}_{m,n}\right\},\nonumber \\
 \left(\mathcal{G}_{m,n}\right)_{-}&=&\left\{\xi(\lambda)=\sum_{i<0} \xi_i \lambda^i \in \mathcal{G}_{m,n}\right\}.\nonumber
\end{eqnarray}
The pair $\left(\left(\mathcal{G}_{m,n}\right)_+, \left(\mathcal{G}_{m,n}\right)_{-}\right)$ is a splitting of $\left(\mathcal{G}_{m,n}\right)$. Then $\xi(\lambda) \in \mathcal{G}_{m,n}$ if and only if $\xi_{2 i} \in \mathcal{K}$ and $\xi_{2 i+1} \in \mathcal{P}$ for all $i$.

Let $G_{m,n}$ and $(G_{m,n})_{ \pm}$ be formal Lie groups associated with the Lie algebras $\mathcal{G}_{m,n}$ and $(\mathcal{G}_{m,n})_{ \pm}$, respectively.
We take a vacuum sequence of the splitting as
\begin{eqnarray}\label{vacseq}
\mathcal{J}=\left\{J^{2i-1} \mid J=J(\lambda)=\operatorname{diag}(0_{2m},J_0(\lambda)),  i=1,2,\cdots\right\}
\end{eqnarray}
in $(G_{m,n})_{+}$,
where
\begin{eqnarray}\label{J0}
J_0(\lambda)=e_{1,2n+1}\lambda+b, \;\;b=\sum_{i=1}^{2n}e_{i+1,i}.
\end{eqnarray}
The phase space of the soliton flows is $C^{\infty}(\mathbb{C},\mathcal{M})$,
where
\begin{eqnarray}
\mathcal{M}=\left\{(gJg^{-1})_{+}\mid g\in (G_{m,n})_{-}\right\}.\nonumber
\end{eqnarray}
Let $Y_{m,n}=\left\{[J, g]_{+}\in sl(2m+2n+1)|g\in\left(\mathcal{G}_{m,n}\right)_{-}\right\}.$
By direct calculation, we can obtain that
\begin{eqnarray}
\mathcal{M}=J+Y_{m,n}.\nonumber
\end{eqnarray}

Given a map $q: \mathbb{R}\rightarrow Y_{m,n}$, there exists a unique $S(q,\lambda)$, such that
\begin{equation}\label{Lxq}
\left\{\begin{array}{l}
{\left[\partial_x+J+q, S(q,\lambda)\right]=0,} \\
S(q,\lambda) \text { is conjugate to } J  \text { by }  (G_{m,n})_{-}.
\end{array}\right.
\end{equation}
Note that
\begin{eqnarray}
S^{2n+2}(q, \lambda)=\lambda S(q,\lambda).\nonumber
\end{eqnarray}
The above solution $S(q,\lambda)$ can be rewritten as
\begin{eqnarray}
S(q, \lambda)=e_{2m+1,2m+2 n+1} \lambda+\sum_{i \leq 0} S_{1, i}(q) \lambda^i,\nonumber
\end{eqnarray}
where $S_{1, i}(q) '$s are differential polynomials in $q$ for all $i \leq 0$.
Furthermore,
\begin{eqnarray}
S^{2 j-1}(q, \lambda)=\sum_{i } S_{2 j-1, i}(q) \lambda^i.\nonumber
\end{eqnarray}
\begin{remark}
By \eqref{Lxq}, we have
\begin{eqnarray}
\left[\partial_x+J+q, S^{2 j-1}(q, \lambda)\right]=0.\nonumber
\end{eqnarray}
Then by comparing the coefficients of $\lambda^i$ in the above equation, we have
\begin{eqnarray}\label{eqbydu}
\left[\partial_x+\operatorname{diag}(0_{2m},b)+q, S_{2 j-1, i}(q)\right]=\left[S_{2 j-1, i-1}(q),e_{2m+1,2m+2 n+1}\right].
\end{eqnarray}
In particular,
\begin{eqnarray}
\left[\partial_x+\operatorname{diag}(0_{2m},b)+q, S_{2 j-1, 0}(q)\right]=\left[S_{2 j-1,-1}(q),e_{2m+1,2m+2 n+1}\right].\nonumber
\end{eqnarray}
Hence $\left[\partial_x+\operatorname{diag}(0_{2m},b)+q, S_{2j-1,0}(q)\right]$ lies in $Y_{m,n}$.
\end{remark}

\begin{definition}
For maps $q:\mathbb{R}^2\rightarrow Y_{m,n}$, the $(2j-1)-$th $\mathcal{G}_{m,n}$ flow is
\begin{eqnarray}\label{def1Y}
q_t=\left[\partial_x+\operatorname{diag}(0_{2m},b)+q, S_{2j-1,0}(q)\right].
\end{eqnarray}
\end{definition}

\begin{proposition}\label{propqud}

Let
\begin{eqnarray}
\mu_i & =&e_{i, 2m+2n+1}+(-1)^{n+1}e_{2m+1,m+i},\nonumber\\
\nu_i & =&(-1)^{n}e_{m+i, 2m+2n+1}-e_{2m+1, i},\nonumber\\
\omega_i & =&-e_{2m+n+1-i, 2m+n+i}-e_{2m+n+2-i, 2m+n+1+i},\nonumber\\
V_{m,n}&=&\left\{\sum_{i=1}^{m}q_i\mu_i+\sum_{i=1}^{m}r_{m+1-i}\nu_i+\sum_{i=1}^{n}u_i\omega_i\in sl(2m+2n+1)\right\},\nonumber\\
\mathcal{B}^{+}&=&\left\{\sum_{i=1}^{m}q_i\mu_i+\sum_{i=1}^{m}r_{i}\nu_i+\sum_{i,j=2m+1,i\leq j}^{2m+2n+1}c_{i,j}e_{i,j}\in sl(2m+2n+1)\right\},\nonumber\\
\mathcal{N}^{+}&=&\left\{\sum_{i=1}^{m}q_i\mu_i+\sum_{i=1}^{m}r_{i}\nu_i+\sum_{i,j=2m+1,i< j}^{2m+2n+1}c_{i,j}e_{i,j}\in sl(2m+2n+1)\right\}.\nonumber
\end{eqnarray}
Let $B^{+}$ and $N^{+}$ be Lie groups associated with the Lie algebras $\mathcal{B}^{+}$ and $\mathcal{N}^{+}$, respectively.
Let $N_{n}^{+}$ be the upper triangular matrix of order $n$ with 1 on the diagonal.
Given $q \in C^{\infty}\left(\mathbb{R},Y_{m,n}\right)$, there exist a unique $\Delta \in C^{\infty}\left(\mathbb{R}, N^{+}\right)$ and $u \in$ $C^{\infty}\left(\mathbb{R}, V_{m,n}\right)$ such that
\begin{eqnarray}
\Delta\left(\partial_x+\operatorname{diag}(0_{2m},b)+q\right) \Delta^{-1}=\partial_x+\operatorname{diag}(0_{2m},b)+u.\nonumber
\end{eqnarray}
Moreover,
\begin{eqnarray}
\Delta\left(\partial_x+J+q\right) \Delta^{-1}=\partial_x+J+u.\nonumber
\end{eqnarray}
We write $u$ as $\Delta\ast q$.
\end{proposition}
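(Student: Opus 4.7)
The plan is a Drinfeld-Sokolov style gauge reduction. Writing $B := \operatorname{diag}(0_{2m}, b)$ and $\Delta = e^\xi$ with $\xi \in \mathcal{N}^+$, the target identity $\Delta(\partial_x + B + q)\Delta^{-1} = \partial_x + B + u$ becomes
\begin{equation*}
\operatorname{Ad}(e^\xi)(B + q) - (\partial_x e^\xi)\, e^{-\xi} = B + u, \qquad u \in V_{m,n}.
\end{equation*}
Expanding $\operatorname{Ad}(e^\xi) = \exp(\operatorname{ad}\xi)$, the leading-order equation reads $[\xi, B] + q \equiv u \pmod{\text{higher order in }\xi}$, so the proof hinges on a cross-section theorem of linear-algebra type: every $q$ can be written uniquely as $u$ plus something in the image of $\operatorname{ad}(B)|_{\mathcal{N}^+}$, and correction terms can be absorbed by iteration.

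First I would equip the ambient Lie algebra with a principal $\mathbb{Z}$-grading: on the bottom-right $(2n+1) \times (2n+1)$ block set $\deg(e_{i,j}) = j - i$, and extend this by assigning appropriate degrees to the off-block generators $\mu_i, \nu_i$ so that $\operatorname{ad}(B)$ is homogeneous of degree $-1$ throughout. Under this grading, each of $\mathcal{N}^+$, $Y_{m,n}$, $V_{m,n}$ decomposes into a bounded direct sum of graded pieces, and $\operatorname{ad}(B): \mathcal{N}^+ \to sl(2m+2n+1)$ becomes a graded map.

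The crucial linear-algebra step is then to establish the transversal decomposition
\begin{equation*}
Y_{m,n} \;\subseteq\; V_{m,n} \;\oplus\; \operatorname{ad}(B)\bigl(\mathcal{N}^+\bigr),
\end{equation*}
together with injectivity of $\operatorname{ad}(B)$ on $\mathcal{N}^+$. Injectivity holds because $b$ is a regular nilpotent in the bottom-right block, and the $\mu_i, \nu_i$ directions of $\mathcal{N}^+$ lie in a sector where $\operatorname{ad}(B)$ is independent of the bottom-right contribution. Transversality comes from an explicit computation using $[e_{k+1,k}, e_{i,j}] = \delta_{k,i} e_{k+1,j} - \delta_{j,k+1} e_{i,k}$: the antidiagonal pairs $\omega_i = -e_{2m+n+1-i,2m+n+i} - e_{2m+n+2-i,2m+n+1+i}$ are placed exactly on the positions that fail to lie in the image of $\operatorname{ad}(B)$ inside the $\sigma_*$-symmetric sector, while the $\mu_i, \nu_i$ components of $V_{m,n}$ and $\mathcal{N}^+$ match by definition.

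With this decomposition, I would construct $\xi$ and $u$ iteratively on the grading: at each grade, the known input from lower-order terms splits uniquely into a $V_{m,n}$-piece (defining $u_d$) and an $\operatorname{ad}(B)$-piece (defining $\xi_d$ via the injective inverse on $\mathcal{N}^+$). The recursion terminates because the grading has finite range, yielding unique $\Delta \in N^+$ and $u \in V_{m,n}$. The second identity $\Delta(\partial_x + J + q)\Delta^{-1} = \partial_x + J + u$ then follows at no extra cost: a short case check on the generators $e_{i,j}$ with $i<j$ in the bottom-right block and on the off-block $\mu_i, \nu_i$ shows $[\mathcal{N}^+, e_{2m+1, 2m+2n+1}] = 0$, so $\Delta$ commutes with the $\lambda$-part $J - B = e_{2m+1, 2m+2n+1}\lambda$. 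The main obstacle is the linear-algebra verification in the third paragraph — especially pinning down the correct extension of the principal grading to the off-block $\mu_i, \nu_i$ generators so that $V_{m,n}$ remains a clean transversal to $\operatorname{ad}(B)(\mathcal{N}^+)$, a finite but delicate case analysis governed by the combinatorics of the involution $\sigma_*$ defined by $C_{m,n}$.
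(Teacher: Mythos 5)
Your overall strategy (a graded Drinfeld--Sokolov iteration) is the standard one, and several of your auxiliary claims do check out: $\operatorname{ad}(B)$ is injective on $\mathcal{N}^{+}$, the grading that places the $2m$-block in degree $0$ makes $\mu_i,\nu_i$ homogeneous of degree $n$, and $[\mathcal{N}^{+},e_{2m+1,2m+2n+1}]=0$ does give the second identity for free. The gap is that the linear-algebra statement you propose to verify, $Y_{m,n}\subseteq V_{m,n}\oplus\operatorname{ad}(B)(\mathcal{N}^{+})$, is not the statement the iteration needs, and the iteration as described does not close at the top grades. The known input at grade $d$ is not an element of $Y_{m,n}$: it contains $\partial_x\xi_d$ and iterated brackets of the $\xi$'s with $B+q$, and these escape $V_{m,n}\oplus\operatorname{ad}(B)(\mathcal{N}^{+})$. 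Concretely, for $m=n=1$ the degree-$2$ graded piece of $sl(5)$ is $\mathbb{C}e_{3,5}$, while $(V_{1,1})_2=0$ and $\operatorname{ad}(B)\bigl((\mathcal{N}^{+})_3\bigr)=0$; the grade-$2$ equation therefore reads ``known terms $=0$'' and cannot be solved by choosing $\xi$ or $u$ --- it is a consistency condition on data already fixed at grades $0$ and $1$. It does hold, but only because the construction can be confined to the $+1$-eigenspace $\mathcal{K}$ of $\sigma_{*}$: one must show inductively that $\xi$ lands in $\mathcal{K}\cap\mathcal{N}^{+}$ (so all correction terms stay in $\mathcal{K}$) and then use that the would-be obstruction directions, such as $e_{3,5}$, lie in $\mathcal{P}$. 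The same $\mathcal{K}/\mathcal{P}$ bookkeeping is what forces the $\mathcal{P}$-directions of $\mathcal{N}^{+}$ (e.g.\ $e_{3,4}-e_{4,5}$ and $e_{3,5}$ for $n=1$) to carry zero coefficient, which is needed for your uniqueness claim. You defer exactly this step (``a finite but delicate case analysis governed by the combinatorics of the involution $\sigma_{*}$''), but it is the heart of the proof, and the transversal decomposition you do state is not sufficient to replace it.

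For contrast, the paper sidesteps all of this by a block reduction: writing $\Delta$ in $2\times2$ block form, it shows the off-diagonal blocks $\Delta_{12},\Delta_{21}$ vanish, observes that a block-diagonal unipotent $\Delta_{22}$ leaves the $\mu_i,\nu_i$-entries of $q$ untouched (they occupy the last column of the top-right block and the first row of the bottom-left block, which are fixed by left and right multiplication by upper unitriangular matrices), and then reduces the claim to the known cross-section theorem for $\partial_x+b+q_{22}$ on the $(2n+1)\times(2n+1)$ block, i.e.\ the $\hat{A}_{2n}^{(2)}$-KdV case. If you want to keep your from-scratch argument, the missing ingredient is the $\sigma_{*}$-refinement of your graded decomposition, namely $\mathcal{K}_d=(V_{m,n})_d\oplus\operatorname{ad}(B)\bigl((\mathcal{K}\cap\mathcal{N}^{+})_{d+1}\bigr)$ together with the inductive proof that the gauge element stays in the fixed subgroup.
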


\begin{proof}
Set
\begin{gather*}
q=\left(\begin{array}{ll}
0 & q_{12} \\
q_{21} & q_{22}
\end{array}\right)\in Y_{m,n}
\end{gather*}
with $q_{12}\in \mathbb{C}^{2m \times(2n+1)}, q_{21} \in \mathbb{C}^{(2n+1) \times 2m}, q_{22} \in \mathbb{C}^{(2n+1) \times(2n+1)},$
and
\begin{gather*}
\Delta=\left(\begin{array}{ll}
I_{2m} & \Delta_{12} \\
\Delta_{21} & \Delta_{22}
\end{array}\right)\in N^{+}
\end{gather*}
with $\Delta_{22}\in N_{2n+1}^{+}$.
We can easily obtain that $\Delta_{21}=\Delta_{12}=0$.
It can be shown through direct computation that there exist a unique $\Delta_{22}\in N_{2n+1}^{+}$, such that
\begin{eqnarray}
\Delta_{22}\left(\partial_x+b+q_{22}\right) \Delta^{-1}_{22}=\partial_x+b+u_{22}.\nonumber
\end{eqnarray}
\end{proof}

For $u\in V_{m,n}$, we can compute the induced cross section flow on $\partial_x+J+u$.

\begin{thm}\label{thmuC}
For $u\in V_{m,n}$, there exists a unique matrix $\eta_{2j-1}(u)\in C^{\infty}\left(\mathbb{R},\mathcal{N}^{+}\right)$ such that
\begin{eqnarray}
\left[\partial_x+\operatorname{diag}(0_{2m},b)+u, S_{2j-1,0}(u)+\eta_{2j-1}(u)\right] \in C^{\infty}\left(\mathbb{R}, V_{m,n}\right).\nonumber
\end{eqnarray}

\end{thm}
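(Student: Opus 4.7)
The plan is to reduce Theorem~\ref{thmuC} to an infinitesimal version of Proposition~\ref{propqud}. Specifically, the key step will be to prove the following lemma: for every $u\in C^\infty(\mathbb{R},V_{m,n})$, every element $\alpha\in C^\infty(\mathbb{R},Y_{m,n})$ admits a unique decomposition
\[
\alpha \;=\; v \;+\; [\partial_x+\operatorname{diag}(0_{2m},b)+u,\;\eta],\qquad v\in C^\infty(\mathbb{R},V_{m,n}),\;\eta\in C^\infty(\mathbb{R},\mathcal{N}^+).
\]
Granting this lemma, the Remark preceding the definition tells us that $\alpha:=-[\partial_x+\operatorname{diag}(0_{2m},b)+u,\,S_{2j-1,0}(u)]$ already lies in $C^\infty(\mathbb{R},Y_{m,n})$. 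Applying the lemma to this particular $\alpha$ produces unique $\eta_{2j-1}(u)\in C^\infty(\mathbb{R},\mathcal{N}^+)$ and $v\in C^\infty(\mathbb{R},V_{m,n})$ with
\[
-[\partial_x+\operatorname{diag}(0_{2m},b)+u,\, S_{2j-1,0}(u)] \;=\; v \;+\; [\partial_x+\operatorname{diag}(0_{2m},b)+u,\,\eta_{2j-1}(u)],
\]
which rearranges to $[\partial_x+\operatorname{diag}(0_{2m},b)+u,\, S_{2j-1,0}(u)+\eta_{2j-1}(u)] = -v \in V_{m,n}$, exactly the conclusion.

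To prove the key lemma, I would linearize Proposition~\ref{propqud} along the family $q(t)=u+t\alpha$. The proposition produces unique $\Delta(t)\in C^\infty(\mathbb{R},N^+)$ and $u(t)\in C^\infty(\mathbb{R},V_{m,n})$ with $\Delta(t)\ast q(t)=u(t)$, $\Delta(0)=I$, $u(0)=u$. Writing $\Delta(t)=I+t\eta+O(t^2)$ so that $\eta\in\mathcal{N}^+$, and expanding the conjugation identity
\[
\Delta(t)\bigl(\partial_x+\operatorname{diag}(0_{2m},b)+u+t\alpha\bigr)\Delta(t)^{-1} \;=\; \partial_x+\operatorname{diag}(0_{2m},b)+u(t)
\]
at order $t$, I obtain $\alpha = \dot u(0) + [\partial_x+\operatorname{diag}(0_{2m},b)+u,\,\eta]$, giving the desired decomposition with $v=\dot u(0)$. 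Smoothness of $\eta$ in $x$ follows from the smooth dependence of $\Delta(t)$ on $t$ provided by the proposition, and uniqueness follows directly from the uniqueness clause of Proposition~\ref{propqud} applied to the difference of two hypothetical decompositions (a nonzero pair $(\eta,v)$ with $v+[\partial_x+\operatorname{diag}(0_{2m},b)+u,\,\eta]=0$ would yield two distinct gauge transformations of the same $q$ into $V_{m,n}$).

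Alternatively, the lemma can be proved by hand along the lines of the block argument already used in Proposition~\ref{propqud}: one writes $\eta\in\mathcal{N}^+$ in block form, checks that the off-diagonal blocks are forced to vanish, and then in the $(2n+1)\times(2n+1)$ lower-right block exploits the nilpotence of $\operatorname{ad}_b$ on strictly upper-triangular matrices to solve a triangular first-order linear system diagonal-by-diagonal, which simultaneously produces $\eta$ and the residual $v\in V_{m,n}$. The main obstacle I anticipate is the combinatorial bookkeeping to verify that the image of $\eta\mapsto[\partial_x+\operatorname{diag}(0_{2m},b)+u,\,\eta]$ is exactly complementary to $V_{m,n}$ inside $Y_{m,n}$; the infinitesimal route via Proposition~\ref{propqud} sidesteps this counting cleanly, so I would adopt it as the primary strategy and relegate the direct matrix argument to a verification that $\eta_{2j-1}(u)$ is a differential polynomial in the entries of $u$.
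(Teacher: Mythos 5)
Your overall strategy is sound and genuinely different from the paper's. The paper proves the theorem by brute force: it writes $\eta_{2j-1}$ and the target bracket in $2\times 2$ block form, observes that the off-diagonal blocks of $\eta_{2j-1}$ must vanish, and then solves for the lower-right block $\eta_{22}$ diagonal-by-diagonal by induction on the grading $\mathcal{G}_i$ of $gl(2n+1)$, i.e. exactly the ``alternative'' route you relegate to a footnote. Your primary route --- reducing the theorem to the statement that $Y_{m,n}=V_{m,n}\oplus \operatorname{ad}(\partial_x+\operatorname{diag}(0_{2m},b)+u)(\mathcal{N}^+)$ and obtaining that splitting by linearizing Proposition \ref{propqud} along $q(t)=u+t\alpha$ --- is more conceptual and correctly identifies Theorem \ref{thmuC} as the infinitesimal version of the gauge-fixing proposition. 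The reduction itself (taking $\alpha=-[\partial_x+\operatorname{diag}(0_{2m},b)+u,\,S_{2j-1,0}(u)]$, which the Remark places in $Y_{m,n}$) and the first-order expansion giving $\alpha=\dot u(0)+[\partial_x+\operatorname{diag}(0_{2m},b)+u,\eta]$ are both correct, granted the implicit facts that $V_{m,n}\subset Y_{m,n}$ and that $\Delta$ and $\Delta\ast q$ depend differentiably (in fact differential-polynomially) on $q$; the latter is true but is supplied by the \emph{proof} of Proposition \ref{propqud}, not by its statement, so it should be said explicitly.

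The one genuine soft spot is uniqueness. You assert that a nonzero pair $(\eta,v)$ with $v+[\partial_x+\operatorname{diag}(0_{2m},b)+u,\eta]=0$ ``would yield two distinct gauge transformations of the same $q$ into $V_{m,n}$,'' but that does not follow directly: exponentiating $\eta$ to $\Delta_s=\exp(s\eta)$ moves $u$ into $V_{m,n}$ only to first order in $s$, so the nonlinear uniqueness clause of Proposition \ref{propqud} is not immediately contradicted --- injectivity of a map does not by itself give injectivity of its derivative. To close this you must either (i) note that both the factorization map and its inverse are differential-polynomial, hence differentiable, so the chain rule forces the linearization $(\eta,v)\mapsto v+[\partial_x+\operatorname{diag}(0_{2m},b)+u,\eta]$ to be bijective, or (ii) run the triangular block computation on the homogeneous equation $[\partial_x+\operatorname{diag}(0_{2m},b)+u,\eta]\in V_{m,n}\Rightarrow\eta=0$, which is precisely what the paper does. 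Either repair is routine, but as written the uniqueness step is not justified.
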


\begin{proof}
since $\left[\partial_x+J+u, S(u,\lambda)\right]=0,$ we have
\begin{gather*}
\left[\partial_x+\operatorname{diag}(0_{2m},b)+u, S_{2j-1,0}(u)\right]=-[e_{2m+1,2m+2n+1},S_{2j-1,-1}(u)]\in Y_{m,n},
\end{gather*}
Let
\begin{gather*}
\left(\begin{array}{ll}
0 & H_{12} \\
H_{21} & H_{22}
\end{array}\right)=-[e_{2m+1,2m+2n+1},S_{2j-1,-1}(u)]
\end{gather*}
with $ H_{12}\in \mathbb{C}^{2m \times(2n+1)}, H_{21}\in \mathbb{C}^{(2n+1) \times 2m}, H_{22}\in \mathbb{C}^{(2n+1) \times(2n+1)}.$

Let
\begin{gather*}
u=\left(\begin{array}{ll}
0 & u_{12} \\
u_{21} & u_{22}
\end{array}\right)\in V_{m,n}
\end{gather*}
with $ u_{12}\in \mathbb{C}^{2m \times(2n+1)}, u_{21}\in \mathbb{C}^{(2n+1) \times 2m}, u_{22}\in \mathbb{C}^{(2n+1) \times(2n+1)},$
\begin{gather*}
\eta_{2j-1}=\left(\begin{array}{ll}
0 & \eta_{12} \\
\eta_{21} & \eta_{22}
\end{array}\right)\in C^{\infty}\left(\mathbb{R},\mathcal{N}^{+}\right)
\end{gather*}
with $ \eta_{12}\in \mathbb{C}^{2m \times(2n+1)}, \eta_{21}\in \mathbb{C}^{(2n+1) \times 2m}, \eta_{22}\in \mathbb{C}^{(2n+1) \times(2n+1)},$

Let
\begin{gather*}
\left(\begin{array}{ll}
0 & D_{12} \\
D_{21} & D_{22}
\end{array}\right)=[\partial_x+\operatorname{diag}(0_{2m},b)+u,\eta_{2j-1}(u)]-[e_{2m+1,2m+2n+1},S_{2j-1,-1}(u)],
\end{gather*}
we have
\begin{eqnarray}
D_{12}&=&\eta_{12, x}+u_{12}\eta_{22}-\eta_{12}(b+u_{22})+H_{12},\nonumber\\
D_{21}&=&\eta_{21,x}+(b+u_{22})\eta_{21}-\eta_{22}u_{21}+H_{21},\nonumber\\
D_{22}&=&\eta_{22,x}+u_{21}\eta_{12}+(b+u_{22})\eta_{22}-\eta_{21}u_{12}-\eta_{22}(b+u_{22})+H_{22}.\nonumber
\end{eqnarray}
From the first and second equations above, it follows that $\eta_{12}= 0$ and $\eta_{21} = 0$.
Let $\mathcal{G}_i=gl(2n+1)\bigcap \operatorname{\{e_{j,j+i}\}}$ and $\xi_{\mathcal{G}_i}$ denote the $\mathcal{G}_i$-component of $\xi\in gl(2n+1)$. Form the third equation above, we get
\begin{gather}\label{ggDH}
[b,\eta_{22}]_{\mathcal{G}_i}=(D_{22})_{\mathcal{G}_i}-(\eta_{22,x})_{\mathcal{G}_i}-[u_{22},\eta_{22}]_{\mathcal{G}_i}-(H_{22})_{\mathcal{G}_i}.
\end{gather}
For $i=0$, we can use $(H_ {22}) _ {\mathcal {G}_1 }$ to uniquely determine $(\eta_{22})_{\mathcal{G}_1}$. For $i=1$,
\begin{gather*}
[b,\eta_{22}]_{\mathcal{G}_1}=(D_{22})_{\mathcal{G}_1}-(\eta_{22,x})_{\mathcal{G}_1}-[u_{22},\eta_{22}]_{\mathcal{G}_1}-(H_{22})_{\mathcal{G}_1}.
\end{gather*}
Direct calculation shows that $(\eta_{22})_{\mathcal{G}_2}$ and $(D_{22})_{\mathcal{G}_1}$ are uniquely determined by $(H_ {22}) _ {\mathcal {G}_1 }$ and $(\eta_{22})_{\mathcal{G}_1}$. 
By induction for $i$ in \eqref{ggDH}, we prove that $\eta_{22}$ can be uniquely determined.
\end{proof}

\begin{definition}
For maps $u:\mathbb{R}^2\rightarrow V_{m,n}$, the $(2j-1)$-th $\mathcal{G}_{m,n}$-KdV flow is
\begin{eqnarray}\label{def2V}
u_t=\left[\partial_x+\operatorname{diag}(0_{2m},b)+u, S_{2j-1,0}(u)+\eta_{2j-1}(u)\right],
\end{eqnarray}
where $\eta_{2j-1}$ is given by Theorem \ref{thmuC}.
\end{definition}

Hence, we get Lax expression of the $(2j-1)$-th $\mathcal{G}_{m,n}$-flow \eqref{def1Y} and the $(2j-1)$-th $\mathcal{G}_{m,n}$-KdV flow \eqref{def2V}.

\begin{thm}\label{thmFq}
The following statements are equivalent for map $q: \mathbb{R}^2 \rightarrow Y_{m,n}$,
\begin{itemize}
\item[\rm(1).] $q$ is a solution of the $(2j-1)$-th $\mathcal{G}_{m,n}$-flow \eqref{def1Y},

\item[\rm(2).] $q_t=\left[\partial_x+J(\lambda)+q,\left(S^{2 j-1}(q, \lambda)\right)_{+}\right]$,

\item[\rm(3).]
$
\left[\partial_x+J(\lambda)+q, \partial_t+\left(S^{2 j-1}(q, \lambda)\right)_{+}\right]=0,
$

\item[\rm(4).] The following linear system
\begin{equation}
\left\{\begin{array}{l}
 F^{-1}F_x=J(\lambda)+q, \\
F^{-1}F_t=\left(S^{2 j-1}(q, \lambda)\right)_{+},
\end{array}\right.\label{thmFq4}
\end{equation}
is solvable for $F(x, t, \lambda)\in SL(2m+2n+1,\mathbb{C})$ satisfying equations
\begin{eqnarray}\label{Fxlam}
\sigma(F(x, t, -\lambda))=F(x, t, \lambda).
\end{eqnarray}
\end{itemize}
\end{thm}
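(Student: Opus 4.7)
The plan is to prove the equivalences in the cycle $(1) \Leftrightarrow (2) \Leftrightarrow (3) \Leftrightarrow (4)$. The key input for $(1) \Leftrightarrow (2)$ is the identity $[\partial_x + J(\lambda) + q, S(q,\lambda)] = 0$ from \eqref{Lxq}, which immediately yields $[\partial_x + J + q, S^{2j-1}(q,\lambda)] = 0$. Splitting $S^{2j-1} = (S^{2j-1})_+ + (S^{2j-1})_-$, I would rewrite this as
\[
[\partial_x + J + q, (S^{2j-1})_+] = -[\partial_x + J + q, (S^{2j-1})_-].
\]
Because $J(\lambda) = \operatorname{diag}(0_{2m}, b) + \lambda\, e_{2m+1, 2m+2n+1}$, the left side contains only non-negative powers of $\lambda$ while the right side contains only non-positive powers; hence both sides collapse to their $\lambda^0$ coefficient, which on the left equals $[\partial_x + \operatorname{diag}(0_{2m}, b) + q, S_{2j-1, 0}(q)]$, the right-hand side of \eqref{def1Y}. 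This simultaneously verifies $(1) \Leftrightarrow (2)$ and confirms that the right-hand side of (2) is $\lambda$-independent.

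The equivalence $(2) \Leftrightarrow (3)$ is a direct expansion: since $J$ does not depend on $t$, I have $[\partial_x + J + q, \partial_t] = -q_t$, so the zero-curvature equation (3) is literally a rewriting of (2). For $(3) \Leftrightarrow (4)$, the implication $(4) \Rightarrow (3)$ follows by equating the mixed partials of $F$, and the converse uses Frobenius integrability: condition (3) is precisely the compatibility condition for the overdetermined system \eqref{thmFq4}, so a local solution $F$ normalized by $F(x_0, t_0, \lambda) = I$ exists.

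The main obstacle will be establishing the reality condition \eqref{Fxlam}. The crucial observation is that both $J(\lambda) + q$ and $(S^{2j-1}(q, \lambda))_+$ lie in $\mathcal{G}_{m,n}$, i.e., satisfy $\sigma_*(\xi(-\lambda)) = \xi(\lambda)$. For $J + q$ this follows from $\operatorname{diag}(0_{2m}, b) \in \mathcal{K}$, $e_{2m+1, 2m+2n+1} \in \mathcal{P}$, together with the fact that the $\lambda^0$-part of any element $[J, g]_+$ with $g \in (\mathcal{G}_{m,n})_-$ lands in $[\mathcal{P}, \mathcal{P}] \subset \mathcal{K}$. For $(S^{2j-1})_+$ one uses $S \in \mathcal{G}_{m,n}$ together with the identity $\sigma_*(S(q,-\lambda)^{2j-1}) = S(q,\lambda)^{2j-1}$, which follows from $\sigma_*(X) = -C_{m,n} X^t C_{m,n}^{-1}$ precisely because the exponent $2j-1$ is odd. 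A short matrix manipulation using $\sigma(Y) = C_{m,n}(Y^{-1})^t C_{m,n}^{-1}$ then shows that $\widetilde F(x, t, \lambda) := \sigma(F(x, t, -\lambda))$ satisfies the same system \eqref{thmFq4} as $F$; since $\sigma(I) = I$, uniqueness with the normalization $F(x_0, t_0, \lambda) = I$ forces $\widetilde F = F$, yielding \eqref{Fxlam}.
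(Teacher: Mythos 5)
Your proof is correct, and since the paper states Theorem \ref{thmFq} without proof (implicitly appealing to the standard splitting-theory arguments of Terng--Uhlenbeck and Terng--Wu), your write-up supplies exactly the argument those references use: the $\lambda$-degree bookkeeping $[\partial_x+J+q,(S^{2j-1})_+]=-[\partial_x+J+q,(S^{2j-1})_-]$ forcing both sides to equal the $\lambda^0$-coefficient, the Frobenius compatibility for $(3)\Leftrightarrow(4)$, and the uniqueness argument showing $\sigma(F(x,t,-\lambda))$ solves the same normalized system. Your observation that the oddness of the exponent $2j-1$ is what makes $\sigma_*(S(q,-\lambda)^{2j-1})=S(q,\lambda)^{2j-1}$ (since $\sigma_*$ is an anti-automorphism of the associative product) is precisely the point that needs checking for the reality condition, and you have it right.
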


\begin{thm}\label{thmEu}
The following statements are equivalent for map $u: \mathbb{R}^2 \rightarrow V_{m,n}$,

\begin{itemize}
\item[\rm(1).]  $u$ is a solution of the $(2j-1)$-th $\mathcal{G}_{m,n}$-KdV flow \eqref{def2V},

\item[\rm(2).]  $u_t=\left[\partial_x+J(\lambda)+u,\left(S^{2 j-1}(u, \lambda)\right)_{+}+\eta_{2j-1}(u)\right]$,

\item[\rm(3).]
$
\left[\partial_x+J(\lambda)+u, \partial_t+\left(\left(S^{2 j-1}(u, \lambda)\right)_{+}+\eta_{2j-1}(u)\right)\right]=0,
$

\item[\rm(4).]  The following linear system
\begin{equation}
\left\{\begin{array}{l}
E^{-1}E_x=J(\lambda)+u, \\
E^{-1}E_t=\left(S^{2 j-1}(u, \lambda)\right)_{+}+\eta_{2j-1}(u),
\end{array}\right.\label{thmEu4}
\end{equation}
is solvable for $E(x, t, \lambda)\in SL(2m+2n+1,\mathbb{C})$ satisfying equations
\begin{eqnarray}\label{Exlam}
\sigma(E(x, t, -\lambda))=E(x, t, \lambda).
\end{eqnarray}
\end{itemize}
\end{thm}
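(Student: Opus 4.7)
The plan is to prove the four-way equivalence as the chain $(1) \Leftrightarrow (2) \Leftrightarrow (3) \Leftrightarrow (4)$, in direct analogy with Theorem \ref{thmFq}. For $(1) \Leftrightarrow (2)$, the main task is to show that the right-hand side of (2) is actually independent of $\lambda$, and that its constant value reproduces the right-hand side of the flow \eqref{def2V}. Starting from $[\partial_x + J(\lambda) + u, S(u,\lambda)] = 0$, which follows from \eqref{Lxq}, taking the $(2j-1)$-st power gives $[\partial_x + J + u, S^{2j-1}] = 0$. Splitting $S^{2j-1} = (S^{2j-1})_+ + (S^{2j-1})_-$ then yields
\begin{eqnarray*}
[\partial_x + J(\lambda) + u, (S^{2j-1})_+] = -[\partial_x + J(\lambda) + u, (S^{2j-1})_-].
\end{eqnarray*}
A direct inspection of $\lambda$-powers shows that the left side contains only non-negative powers of $\lambda$ while the right side contains only non-positive powers, so both reduce to their common $\lambda^0$ coefficient, which by \eqref{eqbydu} with $i=0$ equals $[\partial_x + \operatorname{diag}(0_{2m},b) + u, S_{2j-1,0}(u)]$.

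It remains to control the contribution $[\partial_x + J(\lambda) + u, \eta_{2j-1}(u)]$. Since $\eta_{2j-1}$ is $\lambda$-independent, this bracket can produce only $\lambda^0$ and $\lambda^1$ terms. The $\lambda^1$ coefficient is $[e_{2m+1, 2m+2n+1}, \eta_{2j-1}]$, which vanishes because the proof of Theorem \ref{thmuC} forces $\eta_{12} = \eta_{21} = 0$ and $\eta_{22}$ strictly upper triangular in the $(2n+1) \times (2n+1)$ block, while $e_{2m+1, 2m+2n+1}$ sits at position $(1, 2n+1)$ of that block. The remaining $\lambda^0$ term reproduces precisely the $\eta$-contribution in \eqref{def2V}, closing the equivalence $(1) \Leftrightarrow (2)$.

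The equivalences $(2) \Leftrightarrow (3) \Leftrightarrow (4)$ are then formal. Setting $A = J(\lambda) + u$ and $B = (S^{2j-1}(u,\lambda))_+ + \eta_{2j-1}(u)$, the standard identity $[\partial_x + A, \partial_t + B] = B_x - A_t + [A, B]$, together with $A_t = u_t$ (since $J$ is $t$-independent), shows that (3) is literally (2) after rearrangement. For $(3) \Leftrightarrow (4)$, Frobenius integrability applied to the overdetermined system \eqref{thmEu4} turns the compatibility $E_{xt} = E_{tx}$ into the zero-curvature equation in (3). The symmetry \eqref{Exlam} is preserved because both $A$ and $B$ lie in $\mathcal{G}_{m,n}$, i.e., satisfy $\sigma_*(\xi(-\lambda)) = \xi(\lambda)$; imposing $\sigma(E(x_0, t_0, -\lambda)) = E(x_0, t_0, \lambda)$ at one initial point and invoking uniqueness for the linear system propagates the symmetry to all $(x, t)$.

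The crux is the first step: verifying $\lambda$-independence of the bracket in (2), in particular the vanishing of the would-be $\lambda^1$ piece coming from $\eta_{2j-1}$. This rests on the specific structural information about $\eta_{2j-1}(u)$ extracted in the proof of Theorem \ref{thmuC}; without it the flow \eqref{def2V} would fail to close in $V_{m,n}$. Once this point is settled, the remaining implications proceed exactly as in the proof of Theorem \ref{thmFq}.
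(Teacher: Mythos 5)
The paper states Theorem \ref{thmEu} without proof (as it does Theorem \ref{thmFq}), treating both as standard consequences of the splitting formalism, so there is no in-paper argument to compare against; judged on its own merits, your proof is correct and supplies exactly the details the paper leaves implicit. The key step is the one you identify: the right-hand side of (2) must be shown to be $\lambda$-independent, and your degree count (the bracket with $(S^{2j-1})_+$ has only powers $\lambda^{\ge 0}$, the bracket with $-(S^{2j-1})_-$ only powers $\lambda^{\le 0}$, so both collapse to the $\lambda^0$ coefficient $[\partial_x+\operatorname{diag}(0_{2m},b)+u,\,S_{2j-1,0}(u)]$) together with the vanishing of $[e_{2m+1,2m+2n+1},\eta_{2j-1}(u)]$ is the right mechanism; the latter commutator is indeed zero because $\eta_{12}=\eta_{21}=0$ and $\eta_{22}$ is strictly upper triangular, so the last row and first column of $\eta_{2j-1}$ in the lower-right block vanish. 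The steps $(2)\Leftrightarrow(3)\Leftrightarrow(4)$ are, as you say, formal zero-curvature/Frobenius arguments. The one place where you are slightly too quick is the claim that the reality condition \eqref{Exlam} propagates because ``both $A$ and $B$ lie in $\mathcal{G}_{m,n}$'': for the $\lambda$-independent summand $\eta_{2j-1}(u)$ this amounts to $\eta_{2j-1}(u)\in\mathcal{K}$, which is not true of a generic element of $\mathcal{N}^{+}$ and should be justified, e.g.\ by applying $\sigma_*\circ(\lambda\mapsto-\lambda)$ to the defining property of $\eta_{2j-1}(u)$ in Theorem \ref{thmuC} and invoking the uniqueness asserted there. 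This is a one-line repair, not a structural gap.
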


We can establish the connection between the frames $F(x, t, \lambda)$ and $E(x, t, \lambda)$ by using Proposition \ref{propqud} and Theorems \ref{thmFq} and \ref{thmEu}.

\begin{proposition}\label{EF}
 Let $F(x, t, \lambda)$ given as \eqref{thmFq4}. If $q$ is a solution of the $(2j-1)$-th $\mathcal{G}_{m,n}$-flow \eqref{def1Y}, $\Delta$ defined as Proposition \ref{propqud}, then $E(x, t, \lambda)= F(x, t, \lambda)\Delta^{-1}(x, t)$ is a frame of the solution $u=\Delta\ast q$ of the $(2j-1)$-th $\mathcal{G}_{m,n}$-KdV flow \eqref{def2V}. Conversely,  Let $E(x, t, \lambda)$ given as \eqref{thmEu4}. If $u$ is a solution of the $(2j-1)$-th $\mathcal{G}_{m,n}$-KdV flow \eqref{def2V} and $\Delta(x, t)$ satisfying $\Delta_t\Delta^{-1}=\eta_{2j-1}(u)$, where $\eta_{2j-1}$ is given by Theorem \ref{thmuC}, then $F(x, t, \lambda)=\Delta(x, t) E(x, t, \lambda)$ is a frame of the solution $q=\Delta^{-1}\ast u$ of the $(2j-1)$-th $\mathcal{G}_{m,n}$-flow \eqref{def1Y}.
\end{proposition}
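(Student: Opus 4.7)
My plan is to establish the forward direction by direct verification and to deduce the converse by running the same manipulations in reverse. Setting $E := F\Delta^{-1}$, I will check both equations of \eqref{thmEu4} and the reality condition \eqref{Exlam}. For the first equation, differentiating $E$ gives
\begin{equation*}
E^{-1}E_x \;=\; \Delta(F^{-1}F_x)\Delta^{-1} - \Delta_x\Delta^{-1} \;=\; \Delta(J(\lambda)+q)\Delta^{-1} - \Delta_x\Delta^{-1},
\end{equation*}
which by Proposition \ref{propqud} equals $J(\lambda)+u$ with $u = \Delta\ast q$.

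For the second equation, the analogous calculation yields
\begin{equation*}
E^{-1}E_t \;=\; \Delta\left(S^{2j-1}(q,\lambda)\right)_+\Delta^{-1} - \Delta_t\Delta^{-1} \;=:\; \tilde{S}(\lambda),
\end{equation*}
and I must identify $\tilde{S}(\lambda)$ with $(S^{2j-1}(u,\lambda))_+ + \eta_{2j-1}(u)$. I will exploit the compatibility of the (now-verified) $x$-equation with this $t$-equation, which forces $u_t=[\partial_x+J(\lambda)+u,\tilde{S}(\lambda)]$. Expanding $J(\lambda)=\operatorname{diag}(0_{2m},b)+e_{2m+1,2m+2n+1}\lambda$ and matching powers of $\lambda$ (the left-hand side is $\lambda$-independent) gives the recurrence
\begin{equation*}
[\partial_x+\operatorname{diag}(0_{2m},b)+u,\;\tilde{S}_k] \;=\; -[e_{2m+1,2m+2n+1},\;\tilde{S}_{k-1}], \qquad k\geq 1,
\end{equation*}
together with $u_t=[\partial_x+\operatorname{diag}(0_{2m},b)+u,\tilde{S}_0]$. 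The same recurrence is satisfied by the coefficients of $S^{2j-1}(u,\lambda)$ in view of \eqref{eqbydu}; using $u_t\in V_{m,n}$ and the uniqueness statement of Theorem \ref{thmuC}, I identify $\tilde{S}_0 = S_{2j-1,0}(u)+\eta_{2j-1}(u)$, and then propagating the recurrence upward forces $\tilde{S}_k = S_{2j-1,k}(u)$ for $k\geq 1$.

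The reality condition \eqref{Exlam} reduces, by \eqref{Fxlam} and the group-homomorphism property of $\sigma$, to showing $\sigma(\Delta) = \Delta$. I will obtain this by applying $\sigma_*$ to the gauge equation of Proposition \ref{propqud}, using that $\operatorname{diag}(0_{2m},b)$ is $\sigma_*$-fixed and that $Y_{m,n}$ and $V_{m,n}$ are $\sigma_*$-stable, and then invoking the uniqueness of $\Delta\in N^+$.

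For the converse, I will define $\Delta$ by solving $\Delta_t\Delta^{-1}=\eta_{2j-1}(u)$ together with a consistent $x$-gauge, and check that $F := \Delta E$ satisfies \eqref{thmFq4} for $q := \Delta^{-1}\ast u$ via the same computations in reverse. The main obstacle is the second step: the identification of $\tilde{S}(\lambda)$ with $(S^{2j-1}(u,\lambda))_+ + \eta_{2j-1}(u)$ relies essentially on the uniqueness of $\eta_{2j-1}(u)$ in Theorem \ref{thmuC}, which, combined with the recursive structure inherited from $[\partial_x+J+u,S(u,\lambda)]=0$, is what transfers the Lax pair from the $q$-picture to the $u$-picture.
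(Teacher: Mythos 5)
The paper prints no proof of Proposition \ref{EF}; it is simply asserted to follow from Proposition \ref{propqud} and Theorems \ref{thmFq}, \ref{thmEu}. Your gauge-transformation computation is the intended argument, and your treatment of the first equation of \eqref{thmEu4} and of the reality condition \eqref{Exlam} is correct. The step that does not close as written is the identification of $\tilde{S}(\lambda)=\Delta\bigl(S^{2j-1}(q,\lambda)\bigr)_+\Delta^{-1}-\Delta_t\Delta^{-1}$. First, to invoke the uniqueness in Theorem \ref{thmuC} you must already know that $\tilde{S}_0-S_{2j-1,0}(u)\in\mathcal{N}^{+}$; the compatibility condition alone does not supply this. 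Second, the recurrence $[\partial_x+\operatorname{diag}(0_{2m},b)+u,\tilde{S}_k]=-[e_{2m+1,2m+2n+1},\tilde{S}_{k-1}]$ determines $\tilde{S}_k$ from $\tilde{S}_{k-1}$ only modulo the kernel of $\operatorname{ad}(\partial_x+\operatorname{diag}(0_{2m},b)+u)$ (a first-order linear ODE has a nontrivial homogeneous solution space), so ``propagating upward'' does not by itself force $\tilde{S}_k=S_{2j-1,k}(u)$. Both gaps are closed by one fact you never state: by the uniqueness clause of \eqref{Lxq} one has $S(u,\lambda)=\Delta S(q,\lambda)\Delta^{-1}$ (the paper records exactly this inside the proof of Theorem \ref{thmtau11}), and since $\Delta$ is $\lambda$-independent, conjugation by $\Delta$ commutes with $(\cdot)_+$. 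This gives $\tilde{S}(\lambda)=\bigl(S^{2j-1}(u,\lambda)\bigr)_+-\Delta_t\Delta^{-1}$ identically, leaving only the $\lambda^0$ term $-\Delta_t\Delta^{-1}\in\mathcal{N}^{+}$ to be matched with $\eta_{2j-1}(u)$ via Theorem \ref{thmuC} and the $\lambda$-independent part of the compatibility condition.

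Two bookkeeping points you should confront rather than defer to ``the same computations in reverse.'' The computation above pins the gauge down as $\Delta_t\Delta^{-1}=-\eta_{2j-1}(u)$, not $+\eta_{2j-1}(u)$ as in the hypothesis of the converse (the paper is itself loose here: compare the $-\tilde{\eta}_j(u)$ appearing after Proposition \ref{propqud}); state which sign convention you adopt. More seriously, with the right-logarithmic-derivative convention $F^{-1}F_x=J(\lambda)+q$ of \eqref{thmFq4} and \eqref{thmEu4}, the inverse of $E=F\Delta^{-1}$ is $F=E\Delta$, whereas $F=\Delta E$ yields $F^{-1}F_x=E^{-1}(\Delta^{-1}\Delta_x)E+J(\lambda)+u$, whose first term is $\lambda$-dependent and cannot equal $q-u$; so the converse as you (and the proposition) write it would fail the verification, and you should prove it for $F=E\Delta$ and note the discrepancy with the statement.
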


\begin{ex}\label{exYV}

\begin{enumerate}

\item[\rm(1).] Set $m=0$, \eqref{def1Y} and \eqref{def2V} give the $\hat{A}_{2 n}^{(2)}$- and $\hat{A}_{2 n}^{(2)}$-KdV flows \cite{TerngWuDt2023}, respectively.

\item[\rm(2).]  Set $n=0, m=1$, \eqref{def1Y} and \eqref{def2V} give the $2\times2$ AKNS hierarchy \cite{KonopelchenkoSSDiPLA1991,ChengLTcPLA1991}.

\item[\rm(3).]Set $n=1, m=1$,
let

\begin{gather*}
q=\left(\begin{array}{ccccc}
0 & 0 & 0 &0 & q_1 \\
0 & 0 & 0 &0 & -r_1 \\
-r_1 & q_1 & a &b & 0  \\
0 & 0 & 0 &0 & b \\
0 & 0 & 0 &0 & -a
\end{array}\right)\in Y_{1,1}
\end{gather*}

\begin{gather*}
u=\left(\begin{array}{ccccc}
0 & 0 & 0 &0 & q_1 \\
0 & 0 & 0 &0 & -r_1 \\
-r_1 & q_1 & 0 &-u_1 & 0  \\
0 & 0 & 0 &0 & -u_1 \\
0 & 0 & 0 &0 & 0
\end{array}\right)\in V_{1,1}
\end{gather*}

the third flow equation of \eqref{def1Y} is
\begin{equation}\label{def1Yqrab}
\left\{\begin{array}{l}
a_{t_3}=-2q_1r_1,\\
b_{t_3}=2aq_1r_1-q_{1,x}r_1-q_1r_{1,x},\\
q_{1,t_3}=q_{1,x x x}-2\left(a_x+b+\frac{a^2}{2}\right)q_{1,x}-q_1\left(a_x+b+\frac{a^2}{2}\right)_{x}, \\
r_{1,t_3}=r_{1,x x x}-2\left(a_x+b+\frac{a^2}{2}\right)r_{1,x}-r_1\left(a_x+b+\frac{a^2}{2}\right)_{x}.
\end{array}\right.
\end{equation}
By gauge transformation $u_1=-\left(a_x+b+\frac{a^2}{2}\right)$, we obtain
the third flow equation of \eqref{def2V} is
\begin{equation}\label{def1Vqru}
\left\{\begin{array}{l}
u_{1,t_3}=3q_1r_{1,x}+3q_{1,x}r_1,\\
q_{1,t_3}=q_{1,x x x}+2u_1q_{1,x}+ q_1u_{1,x}, \\
r_{1,t_3}=r_{1,x x x}+2u_1r_{1,x}+r_1u_{1,x}.
\end{array}\right.
\end{equation}

\end{enumerate}

\end{ex}

Next we explain the equivalence between constrained CKP hierarchy and the $\mathcal{G}_{m,n}$-KdV flows \eqref{def2V}.

Let $\mathcal{D}$ be as in section 1.
Let $\mathcal{M}$ be the algebra of the formal power series $Y=\sum Y_i\lambda^i, Y_i\in \left(\mathbb{R}, \mathbb{C}^{2m+2n+1}\right)$.  Let $\mathcal{E}$ be the algebra on $\mathcal{M}$ equipped with the equivalence relation $\sim$ satisfying that for any $\beta=(\beta_1,\beta_2,\cdots,\beta_{2m+2n+1})^t,\xi=(\xi_1,\xi_2,\cdots,\xi_{2m+2n+1})^t\in\mathcal{M}$, if 
\begin{gather*} 
\beta_{2m+1}=\xi_{2m+1},\beta_{2m+2}=\xi_{2m+2},\cdots,\beta_{2m+2n+1}=\xi_{2m+2n+1}
\end{gather*}
holds, then $\beta\sim\xi$. For $X=\sum_{i\leq i_{0}}X_i\partial^i, X_i\in C^{\infty}(\mathbb{R}, \mathbb{C})$, $\mathcal{D}_{<0}$ and $\mathcal{D}_{\geq0}$ denote $\sum_{i\leq 0}X_i\partial^i$ and $\sum_{i\geq0}X_i\partial^i$, respectively. Similarly, $\mathcal{E}_{+}$ and $\mathcal{E}_{-}$ are denoted. Specifically, for $X=\sum_{i\leq i_{0}}X_i\partial^i, X_i\in C^{\infty}(\mathbb{R}, \mathbb{C})$, $\mathbb{C}((\lambda^{-1}))_{[0]}=X_0.$

Let $\mathcal{L}=\partial+J+u$.
Referring to \cite{DrinfeldSokolovLa1984}, we introduce a $\mathcal{D}$-module structure on the space $\mathcal{E}$ of smooth functions from $\mathbb{R}$ to $\mathbb{C}^{2m+2n+1}$ as follows:
For $\eta \in C^{\infty}\left(\mathbb{R}, \mathbb{C}^{2m+2n+1}\right)$ and $P=\sum P_i\partial ^i \in \mathcal{D},$ we define
\begin{eqnarray}
P\diamond \eta = \sum P_i \mathcal{L}^i \eta.\nonumber
\end{eqnarray}
Set $M$ be a Baker function of $\mathcal{L}$, i.e., $M \in G_{-}$ and $\mathcal{L}=M^{-1}\left(\partial+J\right) M$.
By \eqref{def2V}, we can get
\begin{gather*}
\frac{\partial \mathcal{L}}{\partial t}=\left[\mathcal{L}, \left(M^{-1} J^{2j-1} M\right)_{+}-\tilde{\eta}_j(u)\right],
\end{gather*}
i.e.,
\begin{gather*}
\left[\frac{\partial }{\partial t}+\left(M^{-1} J^{2j-1} M\right)_{+}-\tilde{\eta}_j(u), \mathcal{L}\right]=0.
\end{gather*}

Furthermore, $\left[\mathcal{L}, f\right]=\frac{\partial f}{\partial t},$ and
\begin{gather*}
\left[\frac{\partial }{\partial t}+\left(M^{-1} J^{2j-1} M\right)_{+}-\tilde{\eta}_j(u), f\right]=\frac{\partial f}{\partial t}
\end{gather*}
where $f\in \mathbb{C}^{\infty}\left(\mathbb{R}^2, \mathbb{C}\right)$. Then we introduce the structure of  a $\left(\partial, \partial_t\right)$-module on $\mathbb{C}^{\infty}\left(\mathbb{R}^2, \mathbb{C}^{2m+2n+1}\right)$ as follows:
For $e_{2m+1}\in C^{\infty}\left(\mathbb{R}, \mathbb{C}^{2m+2n+1}\right)$ and $P=\sum P_{i,j}\partial ^i \partial_t^j,$ we define
\begin{eqnarray}
P\diamond e_{2m+1}= \sum P_{i,l}\mathcal{L}^i \left(\frac{\partial}{\partial t}+\left(M^{-1} J^{2j-1} M\right)_{+}-\tilde{\eta}_j(u)\right)^le_{2m+1}\nonumber.
\end{eqnarray}

We can obtain the following two lemmas in \cite{DrinfeldSokolovLa1984}.

\begin{lemma}\label{Lamuniq}
Let $\eta=\left(\eta_1, \eta_2,\cdots, \eta_{2m+2n+1}\right)^t \in \mathcal{E}$. Then there exists a unique $P_{\mathcal{L}} \in \mathcal{D}$ such that
\begin{gather*}
\eta=P_{\mathcal{L}} \diamond e_{2m+1}.
\end{gather*}
\end{lemma}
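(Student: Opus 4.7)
The plan is to exploit the cyclic action of $J$ on the $e_{2m+1},\ldots,e_{2m+2n+1}$ subspace. From $J_0 e_1=e_2,\ J_0 e_2=e_3,\ldots,J_0 e_{2n}=e_{2n+1}$ and $J_0 e_{2n+1}=\lambda e_1$, one sees that $J$ cycles $e_{2m+1}\to e_{2m+2}\to\cdots\to e_{2m+2n+1}\to \lambda e_{2m+1}$. This furnishes a triangular matching between pseudo-differential operators $P=\sum P_i\partial^i$ and formal $\lambda$-series $\eta$, via the identification $\partial^i\leftrightarrow \mathcal L^i$.

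First I would establish the triangular structure by induction on $i$: modulo the equivalence relation $\sim$ (which kills the first $2m$ components),
\begin{eqnarray*}
\mathcal L^{q(2n+1)+r}\, e_{2m+1}\ \equiv\ \lambda^q\, e_{2m+1+r}\ +\ (\text{strictly lower $\lambda$-degree}),
\end{eqnarray*}
for all $q\ge 0$ and $0\le r\le 2n$. The inductive step uses $\mathcal L=\partial+J+u$: the $J$-piece produces the cyclic shift and the $\lambda$-factor every $(2n+1)$ steps; the $\partial$-piece differentiates scalar coefficients without changing $\lambda$-degree or vector support; and the $u$-piece, by the explicit form of the generators $\mu_i,\nu_i,\omega_i$ of $V_{m,n}$, either sends the last $(2n+1)$-block into the first $2m$ components (killed by $\sim$) or shifts indices strictly backwards along the cyclic order, yielding only strictly lower $\lambda$-degree corrections.

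With the triangular structure in hand, existence and uniqueness are immediate. Given $\eta\in\mathcal E$, represent it by its last $2n+1$ components, each a formal series in $\lambda$ of top degree $K$. Matching leading terms forces $P_{K(2n+1)+r}$ to equal the coefficient of $\lambda^K e_{2m+1+r}$ in $\eta$ for $r=0,\ldots,2n$; subtracting the corresponding combination of $\mathcal L^{K(2n+1)+r}e_{2m+1}$ from $\eta$ strictly lowers the top $\lambda$-degree, so the recursion continues indefinitely and determines all $P_i$. Uniqueness follows from the same triangular structure: if $P_{\mathcal L}\diamond e_{2m+1}\equiv 0$ yet some $P_i\ne 0$, take the largest such $i=q(2n+1)+r$; then the coefficient of $\lambda^q e_{2m+1+r}$ in $P_{\mathcal L}\diamond e_{2m+1}$ is exactly $P_i$, contradicting equivalence to zero.

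The main obstacle is verifying the leading-order claim in the induction, which requires a careful case analysis of how each of the supports described by $\mu_i,\nu_i,\omega_i$ acts on the cycled basis — in particular, checking that no contribution from $u$ can raise the $\lambda$-degree. The extension to negative $i$, needed for a full pseudo-differential $P_{\mathcal L}\in\mathcal D$, is handled by introducing the formal inverse $\mathcal L^{-1}$ (well defined in the pseudo-differential completion), which continues the cyclic/triangular recursion to strictly negative $\lambda$-powers; this part is routine once the nonnegative case is established.
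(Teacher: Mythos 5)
The paper offers no proof of this lemma at all --- it is quoted from Drinfeld--Sokolov --- and your argument is precisely the standard one from that source: filter the last $(2n+1)$-block by the weight $w=q(2n+1)+r$ of $\lambda^{q}e_{2m+1+r}$, show that $\mathcal{L}^{w}e_{2m+1}$ has unique leading term $\lambda^{q}e_{2m+1+r}$ (your case analysis is exactly the list of formulas displayed in the proof of Proposition~\ref{L_c}), and invert the resulting triangular system downward in $w$, using the formal $\mathcal{L}^{-1}$ for the negative-order part. The overall strategy, including the uniqueness argument via the top nonzero $P_i$, is sound.

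Two statements in your writeup are, however, literally false and need repair, though neither damages the architecture. First, the corrections to $\mathcal{L}^{q(2n+1)+r}e_{2m+1}$ are of strictly lower \emph{weight}, not strictly lower \emph{$\lambda$-degree}: for instance $\mathcal{L}^{n+1}e_{2m+1}=e_{2m+n+2}-u_1e_{2m+n}$ has its correction at the same $\lambda$-degree $0$, merely at a smaller cyclic index. Consequently your claim that ``matching leading terms forces $P_{K(2n+1)+r}$ to equal the coefficient of $\lambda^{K}e_{2m+1+r}$ in $\eta$ for $r=0,\dots,2n$'' fails for $r<2n$, and a single batch subtraction does not lower the top $\lambda$-degree; the recursion must run downward in the single index $w$ (first $r=2n$, then $r=2n-1$, and so on), with each $P_w$ read off from the remainder after the higher ones have been subtracted. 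Second, the first $2m$ components are not simply ``killed by $\sim$'' during the iteration: the equivalence is imposed only on the final $\eta$, and intermediate first-block terms do re-enter the last block, since $u$ maps $e_j$ with $j\le 2m$ to a multiple of $e_{2m+1}$ (via the $\mu_i,\nu_i$ entries $e_{2m+1,m+i}$ and $e_{2m+1,i}$). One must check --- and it is true --- that this feedback lands at weight $q(2n+1)$, strictly below the leading weight, so triangularity survives. With these two corrections your proof is complete and agrees with the argument the paper implicitly invokes.
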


\begin{lemma}\label{lem+-}
Set $g\in \mathcal{G}_{m,n}$ and $P_g\in \mathcal{D}$. If $P_g\diamond e_{2m+1}=ge_{2m+1}$, then
\begin{eqnarray}
\left(P_g\right)_{\geq0} \diamond e_{2m+1}=g_{+} e_{2m+1}, \nonumber
\end{eqnarray}
where $g_{+} \in \left(\mathcal{G}_{m,n}\right)_{ +}$ and $\left(P_g\right)_{ \geq0} \in \mathcal{D}_{\geq0}$.
\end{lemma}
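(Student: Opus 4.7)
The strategy is to match the algebraic splitting $\mathcal{G}_{m,n}=(\mathcal{G}_{m,n})_+\oplus(\mathcal{G}_{m,n})_-$ with the order splitting $\mathcal{D}=\mathcal{D}_{\geq 0}\oplus\mathcal{D}_{<0}$ through the uniqueness statement of Lemma \ref{Lamuniq}.

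First I would decompose $g=g_++g_-$ with $g_+\in(\mathcal{G}_{m,n})_+$ polynomial in $\lambda$ and $g_-\in(\mathcal{G}_{m,n})_-$ consisting only of strictly negative $\lambda$-powers. Applying Lemma \ref{Lamuniq} to each of $g_+e_{2m+1}$ and $g_-e_{2m+1}$ produces unique operators $P_+,P_-\in\mathcal{D}$ with $P_{\pm}\diamond e_{2m+1}=g_{\pm}e_{2m+1}$. Linearity of $\diamond$ then gives $(P_++P_-)\diamond e_{2m+1}=ge_{2m+1}=P_g\diamond e_{2m+1}$, and the uniqueness clause of Lemma \ref{Lamuniq} forces $P_g=P_++P_-$.

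The crux is to show that $P_+\in\mathcal{D}_{\geq 0}$ and $P_-\in\mathcal{D}_{<0}$. Here one has to track how $\mathcal{L}^k e_{2m+1}$ grows in $\lambda$. Because $\mathcal{L}=\partial+J+u$ contains the single $\lambda$-term $\lambda e_{1,2n+1}$ inside $J_0$ together with the shift $b=\sum_{i=1}^{2n}e_{i+1,i}$, successive applications of $\mathcal{L}$ to $e_{2m+1}$ move through the $(2n+1)$-cycle induced by $b$ and pick up a factor of $\lambda$ every $2n+1$ steps (compatibly with $S^{2n+2}=\lambda S$). Consequently, when one carries out the iterative matching of $\lambda$-coefficients used to prove Lemma \ref{Lamuniq}, a vector $\eta$ whose entries are polynomial in $\lambda$ is recovered by an operator of non-negative $\partial$-order, and a vector with only strictly negative $\lambda$-powers is recovered by an operator of strictly negative $\partial$-order. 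Hence $P_+\in\mathcal{D}_{\geq 0}$ and $P_-\in\mathcal{D}_{<0}$.

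Combining the two steps, $P_g=P_++P_-$ is precisely the decomposition of $P_g$ into its $\mathcal{D}_{\geq 0}$ and $\mathcal{D}_{<0}$ parts, so $(P_g)_{\geq 0}=P_+$ and therefore $(P_g)_{\geq 0}\diamond e_{2m+1}=g_+e_{2m+1}$, which is the claim (and which we only need to verify in the last $2n+1$ components, per the equivalence relation defining $\mathcal{E}$). The main obstacle is the compatibility step: verifying that the bijection of Lemma \ref{Lamuniq} respects the two splittings. This rests on the explicit shift-plus-$\lambda$ structure of $J_0$ and would be carried out by an inductive degree argument in $\lambda$, exploiting $S^{2n+2}=\lambda S$ to pass between $\lambda$-degree on the $\eta$-side and $\partial$-order on the $P$-side.
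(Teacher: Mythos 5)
Your argument is correct, and it is essentially the standard Drinfeld--Sokolov argument: the paper itself offers no proof of this lemma, stating only that it is obtained from \cite{DrinfeldSokolovLa1984}, and your decomposition $g=g_++g_-$ combined with the uniqueness clause of Lemma \ref{Lamuniq} is exactly how that reference proceeds. The one step worth writing out if you expand the sketch is the triangularity in both directions: $\mathcal{L}^{k}e_{2m+1}$ for $k\geq 0$ has leading term $e_{2m+1+(k\bmod(2n+1))}\lambda^{\lfloor k/(2n+1)\rfloor}$ plus lower-order terms, while $\mathcal{L}^{-1}e_{2m+1}=\lambda^{-1}e_{2m+2n+1}+O(\lambda^{-1})$ forces $\mathcal{L}^{-k}e_{2m+1}$ for $k\geq 1$ to involve only strictly negative powers of $\lambda$; this is what makes the bijection of Lemma \ref{Lamuniq} respect the two splittings in the last $2n+1$ components.
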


Using the $\mathcal{D}$-module structure, we can obtain the following proposition.

\begin{proposition}\label{L_c}
\begin{eqnarray}
L_c=\partial^{2n+1}+\sum_{i=1}^{n}\left(\partial^{n+1-i}u_i\partial^{n-i}+\partial^{n-i}u_i\partial^{n+1-i}\right)+\sum_{i=1}^{m}\left(q_{i}\partial^{-1}r_{i}+r_{i}\partial ^{-1}q_{i}\right),\nonumber
\end{eqnarray}
then $L_c\diamond e_{2m+1}=\lambda e_{2m+1}$.
\end{proposition}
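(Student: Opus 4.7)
The approach rests on the uniqueness guaranteed by Lemma \ref{Lamuniq}: there is at most one $P\in\mathcal{D}$ with $P\diamond e_{2m+1}=\lambda e_{2m+1}$ in $\mathcal{E}$, so it suffices to verify that $L_c$ has this property. Since $\mathcal{E}$ identifies vectors agreeing on components $2m+1,\ldots,2m+2n+1$, I only need to track the middle-block components of the iterated action, modulo top-block data that can be absorbed by $\partial^{-1}$-tails.

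The first step is to compute $\mathcal{L}^k e_{2m+1}$ for $k=0,1,\ldots,2n+1$ using $\mathcal{L}=\partial+J+u$. The matrix $J=\operatorname{diag}(0_{2m},J_0(\lambda))$ acts cyclically on the middle block, $J e_{2m+k}=e_{2m+k+1}$ for $1\leq k\leq 2n$ and $J e_{2m+2n+1}=\lambda e_{2m+1}$, so in the undeformed case $u=0$ one gets $\partial^{2n+1}\diamond e_{2m+1}=\mathcal{L}^{2n+1}e_{2m+1}=\lambda e_{2m+1}$. The plan is to track the two independent perturbations coming from $u\in V_{m,n}$: the $\omega_i$-pieces stay within the middle block, while the $\mu_i,\nu_i$-pieces connect the middle block to the top block (indices $1,\ldots,2m$) and back.

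The $\omega_i=-e_{2m+n+1-i,\,2m+n+i}-e_{2m+n+2-i,\,2m+n+1+i}$ pair is placed symmetrically about the anti-diagonal of the middle block, and a direct computation shows that this placement is exactly what forces the palindromic form $\partial^{n+1-i}u_i\partial^{n-i}+\partial^{n-i}u_i\partial^{n+1-i}$ in $L_c$; this recovers the $m=0$ case (cf.\ Example \ref{exYV}(1)) and is essentially the Drinfeld--Sokolov reduction for $\hat{A}_{2n}^{(2)}$. The genuinely new feature is the top-block excursion: when $u$ sends a middle-block basis vector into the top block through $q_i\mu_i$ or $r_{m+1-i}\nu_i$, the top block thereafter evolves under $\partial$ alone (since $J$ and the other components of $u$ vanish on the $2m\times 2m$ block); so its future is encoded by a formal $\mathcal{L}^{-1}$, i.e.\ $\partial^{-1}$ under $\diamond$, before being re-injected into the $(2m+1)$-st slot by the dual pieces of $\mu_i,\nu_i$. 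This mechanism is precisely what produces the summands $q_i\partial^{-1}r_i+r_i\partial^{-1}q_i$, while the CKP-symmetric pairing is forced by the dual pair $\mu_i\leftrightarrow\nu_i$ together with the signs $(-1)^n,(-1)^{n+1}$ in their definitions.

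The main obstacle is the combinatorial bookkeeping that the precise coefficients line up. Specifically, one must verify that the reindexing in $\sum_{i=1}^m r_{m+1-i}\nu_i$, combined with the anti-diagonal form of $\mu_i$ via $e_{i,2m+2n+1}$ and $(-1)^{n+1}e_{2m+1,m+i}$, produces exactly the symmetric combination $q_i\partial^{-1}r_i+r_i\partial^{-1}q_i$ with no residual middle-block error surviving the cancellation. Once each of $\mathcal{L}^{2n+1}e_{2m+1}$, $\mathcal{L}^{n+1-i}(u_i\mathcal{L}^{n-i}e_{2m+1})$, and $q_i\mathcal{L}^{-1}(r_i e_{2m+1})$ (and its symmetric counterpart) has been expanded and projected onto the middle block, the identity $L_c\diamond e_{2m+1}=\lambda e_{2m+1}$ follows, and Lemma \ref{Lamuniq} closes the argument.
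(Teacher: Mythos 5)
Your proposal is correct and follows essentially the same route as the paper: a direct computation of the $\diamond$-action by iterating $\mathcal{L}=\partial+J+u$ on the basis vectors $e_{2m+1},\ldots,e_{2m+2n+1}$, with the middle-block perturbations $u_i\omega_i$ producing the palindromic differential part and the top-block excursions through $\mu_i,\nu_i$ producing the $q_i\partial^{-1}r_i+r_i\partial^{-1}q_i$ tails via the identities $\mathcal{L}^{-1}(r_ie_{2m+1})=-e_i$ and $\mathcal{L}^{-1}(q_ie_{2m+1})=(-1)^ne_{2m+1-i}$. The paper simply records the explicit formulas for $\mathcal{L}(e_{2m+i})$ together with these two inversion identities rather than invoking the uniqueness of Lemma \ref{Lamuniq}, which is not actually needed for the verification.
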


\begin{proof}
For $\mathcal{L}=\partial+J+u$, we have
\begin{eqnarray}
\mathcal{L}(e_{2m+i})&=&e_{2m+i+1}, i=1, 2, \cdots, n,\nonumber\\
\mathcal{L}(e_{2m+n+1})&=&e_{2m+n+2}-u_1e_{2m+n},\nonumber\\
\mathcal{L}(e_{2m+n+i})&=&e_{2m+n+i+1}-u_{i-1}e_{2m+n+3-i}-u_{i}e_{2m+n+1-i},i=2, 3, \cdots, n,\nonumber\\
\mathcal{L}(e_{2m+2n+1})&=&\lambda e_{2m+1}-u_ne_{2m+2}+\sum_{i=1}^{m}\left(q_ie_i+(-1)^nr_ie_{2m+1-i}\right).\nonumber
\end{eqnarray}
Then we can prove this proposition through the following properties as
\begin{eqnarray}
\mathcal{L}^{-1}(r_ie_{2m+1})=-e_{i},\mathcal{L}^{-1}(q_ie_{2m+1})=(-1)^ne_{2m+1-i}, i=1, 2, \cdots, m.\nonumber
\end{eqnarray}
\end{proof}

Below we present the relationship between constrained CKP hierarchy and the $\mathcal{G}_{m,n}$-KdV flows \eqref{def2V}.

\begin{thm}\label{thmeqv3.5}
Let
\begin{eqnarray}
u=\sum_{i=1}^{m}q_i\mu_i+\sum_{i=1}^{m}r_{m+1-i}\nu_i+\sum_{i=1}^{n}u_i\omega_i\nonumber
\end{eqnarray}
 be a solution of the the $\mathcal{G}_{m,n}$-KdV flow \eqref{def2V}, then
 \begin{eqnarray}
L_c=\partial^{2n+1}+\sum_{i=1}^{n}\left(\partial^{n+1-i}u_i\partial^{n-i}+\partial^{n-i}u_i\partial^{n+1-i}\right)+\sum_{i=1}^{m}\left(q_{i}\partial^{-1}r_{i}+r_{i}\partial ^{-1}q_{i}\right),\nonumber
\end{eqnarray}
is a solution of the constrained CKP hierarchy defined by
\begin{eqnarray}
\frac{\partial L_c}{\partial t_{2j-1}}=[(L_c^{\frac{2j-1}{2n+1}})_{\geq0},L_c],,\;\;j=1,2,\cdots,\nonumber
\end{eqnarray}
and vice versa. Therefore, the $\mathcal{G}_{m,n}$-KdV flows \eqref{def2V} defines is known as the matrix constrained CKP hierarchy.
\end{thm}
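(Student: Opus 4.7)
The plan is to use the $\mathcal{D}$-module structure set up just before Lemma \ref{Lamuniq} to transfer the matrix Lax equation \eqref{def2V} to the pseudo-differential Lax equation for $L_c$, and vice versa.

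First, I would fix a Baker function $M \in (G_{m,n})_{-}$ so that $\mathcal{L}=\partial+J+u = M^{-1}(\partial+J)M$; then $S^{2j-1}(u,\lambda)=M^{-1}J^{2j-1}M$, and Theorem \ref{thmEu} combined with the computation just before Lemma \ref{Lamuniq} puts \eqref{def2V} into the zero-curvature form
\begin{equation*}
\left[\partial_t+\bigl(M^{-1}J^{2j-1}M\bigr)_{+}-\tilde\eta_j(u),\; \mathcal{L}\right]=0.
\end{equation*}

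Second, I would apply both sides of this identity to the cyclic vector $e_{2m+1}$ through the $\diamond$-action. The companion-matrix form \eqref{J0} gives $J_0^{2n+1}=\lambda I_{2n+1}$ on the relevant block, so the identity $L_c\diamond e_{2m+1}=\lambda e_{2m+1}$ from Proposition \ref{L_c} formally identifies $L_c^{(2j-1)/(2n+1)}$ with the unique $P_J\in\mathcal{D}$ supplied by Lemma \ref{Lamuniq} satisfying $P_J\diamond e_{2m+1}=J^{2j-1}e_{2m+1}$. Lemma \ref{lem+-}, applied to $g=M^{-1}J^{2j-1}M\in\mathcal{G}_{m,n}$, then yields
\begin{equation*}
\bigl(L_c^{(2j-1)/(2n+1)}\bigr)_{\geq 0}\diamond e_{2m+1}=\bigl(\bigl(S^{2j-1}(u,\lambda)\bigr)_{+}+\eta_{2j-1}(u)\bigr)\,e_{2m+1},
\end{equation*}
modulo the equivalence $\sim$ built into $\mathcal{E}$; the cross-section correction $\eta_{2j-1}$ from Theorem \ref{thmuC} is exactly what is needed to land in the $\diamond$-image of $\mathcal{D}_{\geq 0}$.

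Third, differentiating $L_c\diamond e_{2m+1}=\lambda e_{2m+1}$ in $t$ and combining with the matrix Lax equation through the $(\partial,\partial_t)$-module structure gives
\begin{equation*}
\left[\partial_t-\bigl(L_c^{(2j-1)/(2n+1)}\bigr)_{\geq 0},\;L_c\right]\diamond e_{2m+1}=0,
\end{equation*}
and the uniqueness in Lemma \ref{Lamuniq} promotes this scalar identity on $e_{2m+1}$ to an identity of pseudo-differential operators, producing $\partial_{t}L_c=[(L_c^{(2j-1)/(2n+1)})_{\geq 0},L_c]$, which is the $(2j-1)$-th flow of the constrained CKP hierarchy. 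For the converse direction, I would start from a solution $L_c$ of the constrained CKP flow, extract $u\in V_{m,n}$ from its coefficients by reading Proposition \ref{L_c} backwards, and then run Steps two and three in reverse: Lemma \ref{Lamuniq} guarantees that a commutator vanishing on $e_{2m+1}$ lifts to a matrix zero-curvature identity, which by Theorem \ref{thmEu} is equivalent to \eqref{def2V}.

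The main obstacle will be the careful justification of Step two, where the scalar splitting $\mathcal{D}=\mathcal{D}_{\geq 0}\oplus\mathcal{D}_{<0}$ must be matched with the Lie-algebra splitting $\mathcal{G}_{m,n}=(\mathcal{G}_{m,n})_{+}\oplus(\mathcal{G}_{m,n})_{-}$ under $\diamond$. Ensuring that the fractional power $L_c^{(2j-1)/(2n+1)}$ is an unambiguously defined pseudo-differential operator, and that the correction term $\eta_{2j-1}(u)$ of Theorem \ref{thmuC} is precisely absorbed into $(L_c^{(2j-1)/(2n+1)})_{\geq 0}$ rather than surviving as a stray residual term, is where the algebraic bookkeeping around the vacuum sequence \eqref{vacseq} and the companion form \eqref{J0} is essential.
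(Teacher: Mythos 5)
Your proposal follows essentially the same route as the paper: both use the $\mathcal{D}$-module structure and the cyclic vector $e_{2m+1}$, identify $T=L_c^{1/(2n+1)}$ via Proposition \ref{L_c}, match positive parts through Lemma \ref{lem+-}, and promote the resulting identity on $e_{2m+1}$ to an operator identity via the uniqueness in Lemma \ref{Lamuniq}. The only cosmetic difference is your remark about $\eta_{2j-1}(u)$ being "absorbed": since $\eta_{2j-1}(u)\in\mathcal{N}^{+}$ annihilates $e_{2m+1}$, it simply drops out when acting on the cyclic vector, which is implicitly how the paper handles it.
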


\begin{proof}
Let $\left(M^{-1} J M\right) e_{2m+1}=T \diamond e_{2m+1}$. Therefore, for any $k$,
\begin{gather*}
T^k \diamond e_{2m+1}=\left(M J^k M^{-1}\right) e_{2m+1}.
\end{gather*}
By \eqref{Lxq}, we can obtain that $M^{-1} J^{2n+1} Me_{2m+1}=\lambda e_{2m+1}$ on $\mathcal{M}$. From Proposition \ref{L_c}, we have $T=L_{c}^{\frac{1}{2n+1}}$.
Then
\begin{gather*}
\left(M^{-1} J^k M\right)_{+} e_{2m+1}=(L_{c}^{\frac{k}{2n+1}})_{\geq0} \diamond e_{2m+1}.
\end{gather*}
Moreover, in Proposition \ref{L_c}, we can calculate that
\begin{gather*}
\left(\frac{\partial}{\partial t}-(L_{c}^{\frac{2j-1}{2n+1}})_{\geq0}\right) \diamond e_{2m+1}=0.
\end{gather*}
Because of Lemma \ref{Lamuniq}, $\left[\frac{\partial}{\partial t}-(L_{c}^{\frac{2j-1}{2n+1}})_{\geq0}, L_c\right]=0$, i.e.,
\begin{gather*}
\frac{\partial L_c}{\partial t}=\left[(L_{c}^{\frac{2j-1}{2n+1}})_{\geq0}, L_c\right].
\end{gather*}
\end{proof}

\begin{ex}
Let $L=\partial^3+u_1\partial+\partial u_1 $. Since the Lax operator $L$ is invariant under \eqref{CKPLax}, the Lax equations can be defined as
\begin{eqnarray}
\frac{\partial L}{\partial t_{2n+1}}=[L^{\frac{2n+1}{3}},L],\;\;n=0,1,2,\cdots,\nonumber,
\end{eqnarray}
which is the $GD_3$ hierarchy. Its first nontrivial equation is the KK (Kupershmidt-Kaup) equation \cite{KaupOtSAM1980,KupershmidtAsPLA1984} as
\begin{eqnarray}
u_{1,t}=-\frac{1}{9}\left(u_{1,x x x x x}-10 u_1 u_{1,x x x}-25 u_{1,x} u_{1,x x}+2 u_1^2 u_{1,x}\right). \nonumber
\end{eqnarray}
\end{ex}

\begin{ex}
Let $L=\partial+q_1\partial^{-1}r_1+r_1\partial^{-1}q_1$. The third flow equation of constrained CKP hierarchy is
\begin{equation}
\left\{\begin{array}{l}
q_{1,t_3}=q_{1,x x x}+3q_1^2 r_{1,x}+9 q_1 r_1 q_{1,x}, \\
r_{1,t_3}=r_{1,x x x}+3r_1^2 q_{1,x}+9q_1r_1 r_{1,x}.
\end{array}\right.
\end{equation}
Furthermore, let $L=\partial^3+u_1\partial+\partial u_1+q_1\partial^{-1}r_1+r_1 \partial^{-1}q_1$. The third flow equation of constrained CKP hierarchy is
the same as \eqref{def1Vqru}.

\end{ex}

\section{\sc \bf Darboux Transformations and Scaling Transformations }

In this section, we will construct the Darboux transformations and scaling transformations of the matrix constrained CKP hierarchy.

Let $\hat{G}_{+}$ denote the group of holomorphic maps $f: \mathbb{C} \rightarrow S L(2m+2 n+1, \mathbb{C})$ satisfying conditions
\begin{eqnarray}\label{conditons}
\sigma(\xi(-\lambda))=\xi(\lambda),
\end{eqnarray}
and $\hat{G}_{-}$ the group of rational maps $f: \mathbb{C} \cup\{\infty\} \rightarrow SL(2m+2 n+1, \mathbb{C})$ satisfying \eqref{conditons} with $f(\infty)=\mathrm{I}$.
Let $\mathbb{C}^{m+n+1, m+n}$ denote the vector space $\mathbb{C}^{2m+2n+1}$ equipped with bilinear form
\begin{gather*}
\langle X, Y\rangle=X^tC_{m,n}Y.
\end{gather*}
The adjoint $A^{\sharp}$ of a linear operator $A$ on $\mathbb{C}^{m+n+1, m+n}$ is defined by
\begin{gather*}
\langle A X , Y\rangle=\left\langle X, A^{\sharp}Y\right\rangle
\end{gather*}
for all $X, Y \in \mathbb{C}^{m+n+1, m+n}$.

Next, we construct simple elements in $\hat{G}_{-}$.

\begin{definition}
Let $\mathbb{C}^{n+1, n}=V \oplus V^{\perp}$ denote an orthogonal decomposition. A linear map $\pi: \mathbb{C}^{n+1, n} \rightarrow V$ is called an $O(m+n+1, m+n)$ projection onto $V$ along $V^{\perp}$ if it satisfies $\pi^2=\pi=\pi^{\sharp}$.
\end{definition}

Note that, if $\pi$ is the $O(m+n+1, m+n)$ projection onto $V$ along $V^{\perp}$, then
\begin{gather*}
\pi ^{\perp}:= I_{2m+2n+1}-\pi
\end{gather*}
is the $O(m+n+1, m+n)$ projection onto $V^{\perp}$ along $V$.

By direct calculation, we can obtain the following proposition.
\begin{proposition}\label{prop_kk}
Let $\pi$ be an $O(m+n+1, m+n)$ projection of $\mathbb{C}^{n+1, n}$ onto $V$ along $V^{\perp}$, $\alpha \in \mathbb{C} \backslash\{0\}$ a constant, and
\begin{gather*}
k_{\alpha, \pi}(\lambda)=\mathrm{I}_{2m+2n+1}+\frac{2 \alpha}{\lambda-\alpha}(\mathrm{I}-\pi).
\end{gather*}
Then $k_{\alpha, \pi} \in \hat{G}_{-}$ and $k_{\alpha, \pi}^{-1}=k_{-\alpha, \pi}.$
\end{proposition}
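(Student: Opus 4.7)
The plan is to verify each of the three defining conditions for $k_{\alpha,\pi}$ to lie in $\hat{G}_-$: that it is a rational function of $\lambda$ with $k_{\alpha,\pi}(\infty)=I$, that it satisfies the reality condition $\sigma(k_{\alpha,\pi}(-\lambda))=k_{\alpha,\pi}(\lambda)$, and that its inverse is given by $k_{-\alpha,\pi}$. The first is immediate from the explicit formula. The remaining two reduce to direct computations, and the whole argument rests on isolating the two algebraic properties that characterize $\pi$: idempotency $\pi^{2}=\pi$ (so that $(I-\pi)^{2}=I-\pi$) and self-adjointness $\pi^{\sharp}=\pi$, which, via the bilinear form $\langle X,Y\rangle=X^{t}C_{m,n}Y$ and the relation $C_{m,n}^{2}=I$, translates into $\pi^{t}=C_{m,n}\pi C_{m,n}^{-1}$.

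First I would handle the inverse formula by multiplying out $k_{\alpha,\pi}(\lambda)\,k_{-\alpha,\pi}(\lambda)$ and collecting the coefficient of $I-\pi$ using $(I-\pi)^{2}=I-\pi$. The product takes the form $I+(I-\pi)\,c(\lambda)$, where
\[
c(\lambda)=\frac{2\alpha}{\lambda-\alpha}-\frac{2\alpha}{\lambda+\alpha}-\frac{4\alpha^{2}}{(\lambda-\alpha)(\lambda+\alpha)}.
\]
A one-line partial fraction check shows $c(\lambda)\equiv 0$, which establishes $k_{\alpha,\pi}^{-1}=k_{-\alpha,\pi}$.

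For the reality condition, I would first observe the elementary substitution $k_{\alpha,\pi}(-\lambda)=I-\frac{2\alpha}{\lambda+\alpha}(I-\pi)=k_{-\alpha,\pi}(\lambda)$, which already halves the work. Applying $\sigma$ reduces to computing $C_{m,n}\,k_{-\alpha,\pi}(\lambda)^{t}\,C_{m,n}^{-1}$, and the self-adjointness of $\pi$ gives $C_{m,n}\,(I-\pi)^{t}\,C_{m,n}^{-1}=I-\pi$; hence this conjugation fixes $k_{-\alpha,\pi}(\lambda)$ pointwise in $\lambda$. Taking the remaining inverse in the definition of $\sigma$ yields $\sigma(k_{\alpha,\pi}(-\lambda))=k_{-\alpha,\pi}(\lambda)^{-1}$, which equals $k_{\alpha,\pi}(\lambda)$ by the inverse formula just established.

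There is no genuine obstacle; the argument is routine once one has extracted the two ingredients $(I-\pi)^{2}=I-\pi$ and $C_{m,n}(I-\pi)^{t}C_{m,n}^{-1}=I-\pi$. The only point worth a little care is the bookkeeping between the rational scalar $\frac{2\alpha}{\lambda\mp\alpha}$ and the matrix factor $I-\pi$, and noticing that the inverse formula and the reality condition are tightly coupled: proving the former essentially gives the latter for free.
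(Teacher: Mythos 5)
Your proposal is correct and coincides with the paper's approach: the paper gives no written argument beyond the phrase ``by direct calculation,'' and your computation is exactly that calculation, resting on the two facts $(\mathrm{I}-\pi)^2=\mathrm{I}-\pi$ and $C_{m,n}(\mathrm{I}-\pi)^tC_{m,n}^{-1}=\mathrm{I}-\pi$ (the latter from $\pi^{\sharp}=\pi$ and $C_{m,n}^2=I$), together with the partial-fraction identity showing $c(\lambda)\equiv 0$. The observation that $k_{\alpha,\pi}(-\lambda)=k_{-\alpha,\pi}(\lambda)$, so that the reality condition follows from the inverse formula, is a tidy way to organize the verification.
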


we use the local factorization method to construct the Darboux transformations of the matrix constrained CKP hierarchy.

\begin{proposition}\label{gFFg}
 Let $\alpha \in \mathbb{C} \backslash\{0\}$  and $\pi, \tilde{\pi}$ be an $O(m+n+1, m+n)$ projection of $\mathbb{C}^{n+1, n}$. Let $f(\lambda)$ be a meromorphic map satisfying \eqref{conditons}. If $f(\lambda)$ is holomorphic at $\lambda=\alpha,-\alpha$ and $\tilde{\pi}$ is an $O(m+n+1, m+n)$ projection with $\operatorname{Im} \tilde{\pi}=f^{-1}(\alpha)(\operatorname{Im} \pi)$, then
$\tilde{f}(x, t, \lambda)=k_{\alpha, \pi} f(\lambda) k_{\alpha, \tilde{\pi}}^{-1}$ is holomorphic at $\lambda=\alpha,-\alpha$.
\end{proposition}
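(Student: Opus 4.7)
The plan is to verify that the product $\tilde{f}(\lambda)=k_{\alpha,\pi}(\lambda)\,f(\lambda)\,k_{\alpha,\tilde{\pi}}^{-1}(\lambda)$ has no poles at $\lambda=\pm\alpha$. Since $f$ is holomorphic there by hypothesis, and since $k_{\alpha,\pi}$ has a simple pole only at $\lambda=\alpha$ with leading part $\frac{2\alpha}{\lambda-\alpha}(I-\pi)$ while $k_{\alpha,\tilde{\pi}}^{-1}=k_{-\alpha,\tilde{\pi}}$ has a simple pole only at $\lambda=-\alpha$ with leading part $-\frac{2\alpha}{\lambda+\alpha}(I-\tilde{\pi})$, the product $\tilde{f}$ is meromorphic with at worst simple poles at $\pm\alpha$, so it suffices to show that both residues vanish.

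For the pole at $\lambda=\alpha$, the remaining factors evaluate to $f(\alpha)\,k_{\alpha,\tilde{\pi}}^{-1}(\alpha)=f(\alpha)\tilde{\pi}$, using the easy computation $k_{-\alpha,\tilde{\pi}}(\alpha)=\tilde{\pi}$. The residue is therefore $2\alpha(I-\pi)\,f(\alpha)\,\tilde{\pi}$, and the hypothesis $\operatorname{Im}\tilde{\pi}=f(\alpha)^{-1}(\operatorname{Im}\pi)$ forces $f(\alpha)(\operatorname{Im}\tilde{\pi})\subset\operatorname{Im}\pi=\operatorname{Ker}(I-\pi)$, which kills this residue.

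The main obstacle is the pole at $\lambda=-\alpha$. The symmetric computation, now using $k_{\alpha,\pi}(-\alpha)=\pi$, produces residue $-2\alpha\,\pi\,f(-\alpha)(I-\tilde{\pi})$, and vanishing is equivalent to $f(-\alpha)(\operatorname{Ker}\tilde{\pi})\subset\operatorname{Ker}\pi$. This is a condition at $-\alpha$, whereas the hypothesis only constrains $f(\alpha)$, so the heart of the argument is to bridge the two via the involution. I would unpack $\sigma(f(-\lambda))=f(\lambda)$: combined with $C_{m,n}^{2}=I$ and the (easily checked) symmetry $C_{m,n}^{t}=C_{m,n}$, this becomes $f(-\alpha)^{\sharp}=f(\alpha)^{-1}$, where $A^{\sharp}=C_{m,n}A^{t}C_{m,n}$ is the adjoint for the form $\langle\cdot,\cdot\rangle$.

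With this in hand, I would invoke the standard identity $(A V)^{\perp}=(A^{\sharp})^{-1}(V^{\perp})$ and the fact that $\pi,\tilde{\pi}$ are $O(m+n+1,m+n)$-projections, so $\operatorname{Ker}\pi=(\operatorname{Im}\pi)^{\perp}$ and $\operatorname{Ker}\tilde{\pi}=(\operatorname{Im}\tilde{\pi})^{\perp}$. Taking the $\langle\cdot,\cdot\rangle$-orthogonal complement of the hypothesis $\operatorname{Im}\tilde{\pi}=f(\alpha)^{-1}(\operatorname{Im}\pi)$ and applying this identity with $A=f(\alpha)^{-1}$, together with the involutivity $(A^{\sharp})^{\sharp}=A$, yields precisely $\operatorname{Ker}\tilde{\pi}=f(-\alpha)^{-1}(\operatorname{Ker}\pi)$. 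This is exactly what the residue computation at $-\alpha$ requires, concluding the proof. The delicate point, and the step I expect to require the most care, is verifying the adjoint/orthogonality identity that turns the hypothesis about $f(\alpha)$ into the matching statement about $f(-\alpha)$; everything else is residue bookkeeping.
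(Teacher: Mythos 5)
Your proof is correct and follows essentially the same route as the paper: compute the residues of $k_{\alpha,\pi}f\,k_{\alpha,\tilde{\pi}}^{-1}$ at $\lambda=\pm\alpha$, kill the one at $\alpha$ directly from $\operatorname{Im}\tilde{\pi}=f(\alpha)^{-1}(\operatorname{Im}\pi)$, and kill the one at $-\alpha$ by using the reality condition $C_{m,n}f(\alpha)^{t}C_{m,n}=f(-\alpha)^{-1}$ to transfer the hypothesis to $-\alpha$ via adjoints and orthogonal complements. If anything, your explicit derivation of $\operatorname{Ker}\tilde{\pi}=f(-\alpha)^{-1}(\operatorname{Ker}\pi)$ is stated more carefully than the paper's corresponding step, but the underlying argument is identical.
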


\begin{proof}
We must demonstrate that the residues of
\begin{gather*}
\tilde{f}=\left(\mathrm{I}_{2m+2 n+1}-\frac{2 \alpha}{\lambda+\alpha} \tilde{\pi}^{\perp}\right) f(\lambda)\left(\mathrm{I}_{2m+2 n+1}+\frac{2 \alpha}{\lambda-\alpha} \pi^{\perp}\right)
\end{gather*}
at $\lambda=\alpha,-\alpha$ are zero.

For $\lambda=\alpha$,
the residue of $\tilde{f}$ is $2 \alpha\pi^{\perp} f(\alpha) \tilde{\pi}$. Since $f(\alpha)\operatorname{Im} \tilde{\pi}=(\operatorname{Im} \pi)$, we have $f(\alpha)\tilde{V}=V$ and $\pi^{\perp} \tilde{V}=0$, the residue is zero.

 For $\lambda=-\alpha$, the residue of $\tilde{f}$ at $\lambda=-\alpha$ is $-2 \alpha\pi f(-\alpha)\tilde{\pi}^{\perp}$.
Due to
\begin{eqnarray}
\left\langle f(\alpha)V^{\perp},\tilde{V}\right\rangle &=& \left\langle V^{\perp}, C_{m,n}f(\alpha)C_{m,n}\tilde{V}\right\rangle\nonumber\\
&=&\left\langle V^{\perp}, f(-\alpha)^{-1}\tilde{V}\right\rangle\nonumber\\
&=&0,\nonumber
\end{eqnarray}
we can get $f(-\alpha)\left(V^{\perp}\right)=\tilde{V}^{\perp}$.
So the residue of $\tilde{f}$ at $\lambda=\alpha$ is also zero.
\end{proof}

Using Propositions \ref{gFFg}, the Darboux transformations are obtained by factoring the product of a simple element and the frame.

\begin{thm}\label{thmFqKpi}
Let $F(x, t, \lambda)$ be the frame of a solution $q$ of the $(2j-1)$-th $\mathcal{G}_{m,n}$-flow \eqref{def1Y}, $\pi$ an $O(m+n+1, m+n)$ projection, $\alpha \in \mathbb{C} \backslash\{0\}$ a constant, $k_{\alpha, \pi}$ defined as in Proposition \ref{gFFg}, and $\tilde{V}(x, t)=F(x, t, \alpha)^{-1}(\operatorname{Im} \pi)$. Assume that there exists an open neighborhood $\mathcal{O}$ of the origin in $\mathbb{C}^2$ such that the restriction of $\langle,\rangle$ to $\tilde{V}(x, t)$ is non-degenerate for all $(x, t) \in \mathcal{O}$. Let $\tilde{\pi}(x, t)$ denote the $O(m+n+1, m+n)$ projection onto $\tilde{V}(x, t)$. Then
\begin{gather*}
\tilde{q}=q+2 \alpha\left[e_{2m+1,2m+2 n+1},\tilde{\pi}\right]
\end{gather*}
is a solution of the $(2j-1)$-th $\mathcal{G}_{m,n}$-flow \eqref{def1Y} defined on $\mathcal{O}$ and
\begin{gather*}
\tilde{F}(x, t, \lambda)=k_{\alpha, \pi}(\lambda)F(x, t, \lambda)k_{\alpha, \tilde{\pi}}^{-1}(\lambda)
\end{gather*}
is a frame of $\tilde{q}$. Let $g \bullet q$ denote the solution $\tilde{q}$. If $g_1, g_2 \in \hat{G}_-$, then $\left(g_1 g_2\right) \bullet q=g_1 \bullet\left(g_2 \bullet q\right)$.
\end{thm}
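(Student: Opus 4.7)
The plan is to apply Proposition \ref{gFFg} with $f(x,t,\lambda)=F(x,t,\lambda)$, so that $\tilde F(x,t,\lambda)=k_{\alpha,\pi}(\lambda)F(x,t,\lambda)k_{\alpha,\tilde\pi}^{-1}(\lambda)$ is holomorphic in $\lambda$ near $\pm\alpha$ for every $(x,t)\in\mathcal O$; the hypothesis $\operatorname{Im}\tilde\pi(x,t)=F(x,t,\alpha)^{-1}(\operatorname{Im}\pi)$ is exactly what Proposition \ref{gFFg} requires. Set $E:=e_{2m+1,2m+2n+1}$ so that $J(\lambda)=\operatorname{diag}(0_{2m},b)+E\lambda$. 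I then verify the three conclusions one by one: compute $\tilde F^{-1}\tilde F_x$ to identify $\tilde q$, determine $\tilde F^{-1}\tilde F_t$ via uniqueness, and read off the group property from uniqueness of the factorization.

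For the $x$-part, since $k_{\alpha,\pi}$ is independent of $x$,
\[
\tilde F^{-1}\tilde F_x=k_{\alpha,\tilde\pi}(\lambda)\bigl(J(\lambda)+q\bigr)k_{\alpha,\tilde\pi}^{-1}(\lambda)+k_{\alpha,\tilde\pi}(\lambda)\bigl(k_{\alpha,\tilde\pi}^{-1}(\lambda)\bigr)_x.
\]
This is rational in $\lambda$ with possible poles only at $\pm\alpha$; the holomorphy of $\tilde F$ at $\pm\alpha$ supplied by Proposition \ref{gFFg} eliminates these poles, so the right side is polynomial in $\lambda$. At $\lambda=\infty$ the factors $k_{\alpha,\tilde\pi}$ and $k_{\alpha,\tilde\pi}^{-1}$ tend to $I$ and the correction $k_{\alpha,\tilde\pi}(k_{\alpha,\tilde\pi}^{-1})_x$ is $O(\lambda^{-1})$, so the polynomial has degree one. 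Expanding $k_{\alpha,\tilde\pi}=I+\frac{2\alpha}{\lambda-\alpha}\tilde\pi^{\perp}$ and $k_{\alpha,\tilde\pi}^{-1}=I-\frac{2\alpha}{\lambda+\alpha}\tilde\pi^{\perp}$ in powers of $\lambda^{-1}$ and reading off the $\lambda^{0}$ coefficient yields $\operatorname{diag}(0_{2m},b)+q+2\alpha[\tilde\pi^{\perp},E]=\operatorname{diag}(0_{2m},b)+q+2\alpha[E,\tilde\pi]$. Hence $\tilde F^{-1}\tilde F_x=J(\lambda)+\tilde q$ with $\tilde q=q+2\alpha[E,\tilde\pi]$; membership $\tilde q\in C^{\infty}(\mathcal O,Y_{m,n})$ is forced by the reality condition \eqref{Fxlam} inherited by $\tilde F$ (which places $\tilde F^{-1}\tilde F_x$ in $\mathcal G_{m,n}$) together with the polynomial structure.

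For the time equation I would argue by uniqueness rather than term-by-term. The same residue/degree analysis applied to $\tilde F^{-1}\tilde F_t$, starting from $F^{-1}F_t=(S^{2j-1}(q,\lambda))_+$, shows it is polynomial in $\lambda$ of degree at most $2j-1$ and holomorphic at $\pm\alpha$; this is the step I expect to be the main obstacle, with holomorphy at $\pm\alpha$ coming from Proposition \ref{gFFg} applied to the $t$-evolution (consistent because $\tilde\pi$ is defined from $F(\cdot,\cdot,\alpha)$ and $S^{2j-1}(q,\alpha)$ is finite), and the degree bound at infinity coming from an asymptotic expansion analogous to the one above. The zero-curvature $[\partial_x+J(\lambda)+\tilde q,\partial_t+\tilde F^{-1}\tilde F_t]=0$ passes from $F$ to $\tilde F$ automatically, and by the equivalence in Theorem \ref{thmFq} these properties force $\tilde F^{-1}\tilde F_t=(S^{2j-1}(\tilde q,\lambda))_+$, so $\tilde q$ solves the $(2j-1)$-th $\mathcal G_{m,n}$-flow \eqref{def1Y}. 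The reality condition \eqref{Fxlam} for $\tilde F$ follows from $\sigma$ being a group homomorphism together with Proposition \ref{prop_kk}, and the action identity $(g_1g_2)\bullet q=g_1\bullet(g_2\bullet q)$ follows from iterating the local factorization of Proposition \ref{gFFg}: dressing first by $g_2$ and then by $g_1$ produces the same rational map as dressing by $g_1g_2$ in one step, by the uniqueness of the simple-element factorization at $\pm\alpha$.
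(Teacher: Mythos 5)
The paper states this theorem without proof, presenting it as a consequence of Proposition \ref{gFFg} via the local factorization method of Terng--Wu; your proposal supplies exactly that standard dressing argument, and the core of it is sound. The $x$-part is done correctly: the residue/degree count and the $\lambda^{0}$-coefficient computation $2\alpha[\tilde\pi^{\perp},e_{2m+1,2m+2n+1}]=2\alpha[e_{2m+1,2m+2n+1},\tilde\pi]$ reproduce the stated formula for $\tilde q$ (one small point you should add: holomorphy of $\tilde F$ at $\pm\alpha$ from Proposition \ref{gFFg} must be supplemented by invertibility of $\tilde F(x,t,\pm\alpha)$ before you may conclude that $\tilde F^{-1}\tilde F_x$ is pole-free there; this follows since $\det k_{\alpha,\pi}\det k_{\alpha,\tilde\pi}^{-1}=1$ because $\operatorname{rank}\pi^{\perp}=\operatorname{rank}\tilde\pi^{\perp}$).

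The one step that is genuinely incomplete is the identification of the $t$-part. Theorem \ref{thmFq} does not say that any polynomial $B(\lambda)$ of degree $2j-1$ satisfying zero curvature with $\partial_x+J(\lambda)+\tilde q$ must equal $\left(S^{2j-1}(\tilde q,\lambda)\right)_{+}$, so "uniqueness via Theorem \ref{thmFq}" does not close the argument as written. The standard repair is short: since $k_{\alpha,\tilde\pi}(\lambda)=I+O(\lambda^{-1})$ at $\lambda=\infty$, it lies in $(G_{m,n})_{-}$ as a formal series, so conjugation preserves the defining properties in \eqref{Lxq} and the uniqueness clause there gives $S(\tilde q,\lambda)=k_{\alpha,\tilde\pi}S(q,\lambda)k_{\alpha,\tilde\pi}^{-1}$. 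Then $\tilde F^{-1}\tilde F_t-\left(S^{2j-1}(\tilde q,\lambda)\right)_{+}$ is holomorphic on all of $\mathbb{C}$ (rational with poles only possibly at $\pm\alpha$, removed by holomorphy and invertibility of $\tilde F$ there) and is $O(\lambda^{-1})$ as $\lambda\to\infty$ by comparing asymptotic expansions, hence vanishes by Liouville. With that insertion your argument, including the verification of \eqref{Fxlam} for $\tilde F$ via Proposition \ref{prop_kk} and the action property from uniqueness of the factorization, is complete and is the argument the paper intends.
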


\begin{thm}\label{thmeqpixt}
Let $q$ be a solution of the $(2j-1)$-th $\mathcal{G}_{m,n}$-KdV flow \eqref{def1Y}, $\alpha \in \mathbb{C} \backslash 0$ a constant, and $\pi$ the set of $O(m+n+1, m+n)$ projections of $\mathbb{C}^{m+n+1, m+n}$. Then $\tilde{\pi}$ is the solution of
\begin{eqnarray}
\left\{\begin{array}{l}
\tilde{\pi}_x=\left(J(\alpha)+q+2 \alpha\left[e_{2m+1,2m+2 n+1},\tilde{\pi}\right]\right)\tilde{\pi}^{\perp}-\tilde{\pi}^{\perp}\left(J(\alpha)+q\right), \\
\tilde{\pi}_t=\left(S^{2j-1}\left(q+2 \alpha\left[e_{2m+1,2m+2 n+1},\tilde{\pi}\right],\alpha\right)\right)_{+}\tilde{\pi}^{\perp}-\tilde{\pi}^{\perp}\left(S^{2j-1}\left(q,\alpha\right)\right)_{+}.
\end{array}\right.\nonumber
\end{eqnarray}
\end{thm}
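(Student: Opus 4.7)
The plan is to derive both partial differential equations from the Darboux factorization $\tilde F(x,t,\lambda) = k_{\alpha,\pi}(\lambda)\,F(x,t,\lambda)\,k_{\alpha,\tilde\pi}^{-1}(\lambda)$ supplied by Theorem \ref{thmFqKpi}, by analyzing the pole structure of both sides at $\lambda = \pm\alpha$. I would start with the $x$-equation; the $t$-equation follows by a structurally identical argument.

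Since $k_{\alpha,\pi}$ does not depend on $(x,t)$, differentiating the factorization in $x$ and substituting $F^{-1}F_x = J(\lambda)+q$ yields
\begin{equation*}
\tilde F^{-1}\tilde F_x \;=\; k_{\alpha,\tilde\pi}\bigl(J(\lambda)+q\bigr)k_{\alpha,\tilde\pi}^{-1} \;+\; k_{\alpha,\tilde\pi}\bigl(k_{\alpha,\tilde\pi}^{-1}\bigr)_x.
\end{equation*}
By Theorem \ref{thmFq} applied to $\tilde q = q+2\alpha[e_{2m+1,2m+2n+1},\tilde\pi]$, the left-hand side equals $J(\lambda)+\tilde q$, which is affine in $\lambda$. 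The right-hand side inherits only simple poles at $\lambda = \pm\alpha$ from the explicit expressions $k_{\alpha,\tilde\pi}(\lambda) = I + \tfrac{2\alpha}{\lambda-\alpha}\tilde\pi^{\perp}$ and $k_{\alpha,\tilde\pi}^{-1}(\lambda) = I - \tfrac{2\alpha}{\lambda+\alpha}\tilde\pi^{\perp}$, so these residues must vanish. I would extract both residues explicitly, using the partial fraction $\tfrac{4\alpha^2}{\lambda^2-\alpha^2} = \tfrac{2\alpha}{\lambda-\alpha} - \tfrac{2\alpha}{\lambda+\alpha}$ to split the double-pole term coming from the product $k_{\alpha,\tilde\pi}(\cdots)k_{\alpha,\tilde\pi}^{-1}$.

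The residue condition at $\lambda = \alpha$ will produce $\tilde\pi^{\perp}\tilde\pi^{\perp}_x = \tilde\pi^{\perp}(J(\alpha)+q)\tilde\pi$, while the residue at $\lambda = -\alpha$ will produce $\tilde\pi\,\tilde\pi^{\perp}_x = -\tilde\pi(J(-\alpha)+q)\tilde\pi^{\perp}$. Differentiating the identity $\tilde\pi^{\perp 2} = \tilde\pi^{\perp}$ gives the orthogonality relations $\tilde\pi^{\perp}\tilde\pi^{\perp}_x\tilde\pi^{\perp} = 0 = \tilde\pi\,\tilde\pi^{\perp}_x\tilde\pi$, so the two residue fragments together determine $\tilde\pi^{\perp}_x = \tilde\pi^{\perp}\tilde\pi^{\perp}_x\tilde\pi + \tilde\pi\,\tilde\pi^{\perp}_x\tilde\pi^{\perp}$ in full. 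Using $J(-\alpha) = J(\alpha) - 2\alpha\,e_{2m+1,2m+2n+1}$, $\tilde q - q = 2\alpha[e_{2m+1,2m+2n+1},\tilde\pi]$, and $\tilde\pi_x = -\tilde\pi^{\perp}_x$, a short rearrangement recasts the combined expression to the stated form $\bigl(J(\alpha)+q+2\alpha[e_{2m+1,2m+2n+1},\tilde\pi]\bigr)\tilde\pi^{\perp} - \tilde\pi^{\perp}(J(\alpha)+q)$.

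The $t$-equation is obtained by the identical residue argument applied to $\tilde F^{-1}\tilde F_t$, with $(S^{2j-1}(q,\lambda))_+$ and $(S^{2j-1}(\tilde q,\lambda))_+$ replacing $J(\lambda)+q$ and $J(\lambda)+\tilde q$; both are polynomial in $\lambda$, so the pole analysis at $\pm\alpha$ carries over verbatim and yields the second claimed PDE with $(S^{2j-1}(q,\alpha))_+$ in place of $J(\alpha)+q$. The main technical obstacle will be the careful bookkeeping of the mixed double-pole contribution from $k_{\alpha,\tilde\pi}(J(\lambda)+q)k_{\alpha,\tilde\pi}^{-1}$ together with that of $k_{\alpha,\tilde\pi}(k_{\alpha,\tilde\pi}^{-1})_x$, and verifying that the derived $\tilde\pi_x$ is consistent with $\tilde\pi$ remaining a projection, i.e.\ satisfies $\tilde\pi\,\tilde\pi_x\tilde\pi = 0 = \tilde\pi^{\perp}\tilde\pi_x\tilde\pi^{\perp}$; this consistency is automatic because $[e_{2m+1,2m+2n+1},\tilde\pi]$ is off-diagonal with respect to the decomposition induced by $\tilde\pi$.
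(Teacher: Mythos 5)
Your proposal is correct and follows essentially the same route as the paper: both derive the ODEs by extracting residues of the Darboux-factorized frame relation $\tilde F = k_{\alpha,\pi}Fk_{\alpha,\tilde\pi}^{-1}$ at $\lambda=\pm\alpha$ and identifying $\tilde F^{-1}\tilde F_x$ with $J(\lambda)+\tilde q$ via Theorem \ref{thmFq}. The paper streamlines the computation by first multiplying through by $k_{\alpha,\tilde\pi}$ on the right, obtaining $(J(\lambda)+\tilde q)k_{\alpha,\tilde\pi}=\tfrac{2\alpha}{\lambda-\alpha}\tilde\pi_x+k_{\alpha,\tilde\pi}(J(\lambda)+q)$, so that a single residue at $\lambda=\alpha$ yields $\tilde\pi_x=(J(\alpha)+\tilde q)\tilde\pi^{\perp}-\tilde\pi^{\perp}(J(\alpha)+q)$ directly, whereas your conjugated form requires both residues together with the projection identities $\tilde\pi^{\perp}\tilde\pi^{\perp}_x\tilde\pi^{\perp}=0=\tilde\pi\,\tilde\pi^{\perp}_x\tilde\pi$ to reassemble $\tilde\pi_x$ --- but your reassembly is valid and, using $J(-\alpha)=J(\alpha)-2\alpha e_{2m+1,2m+2n+1}$, lands on the same formula.
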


\begin{proof}
In Theorem \ref{thmFqKpi}, we note that
\begin{gather*}
\tilde{F}(x, t, \lambda)=k_{\alpha, \pi}(\lambda)F(x, t, \lambda)k_{\alpha,\tilde{\pi}}^{-1}(\lambda) .
\end{gather*}
By \eqref{thmFq4}, then
\begin{gather*}
\left(J(\alpha)+\tilde{q}\right)k_{\alpha, \bar{\pi}}=\frac{2\alpha}{\lambda-\alpha}\tilde{\pi}_x+k_{\alpha, \bar{\pi}}\left(J(\alpha)+q\right).
\end{gather*}
By using the residue at $\lambda=\alpha$ in the equation above, we can obtain that the first equation holds.
In a similar vein, we can demonstrate that the second equation is true.
\end{proof}

We can derive the Darboux transformations of the matrix constrained CKP hierarchy \eqref{def2V} by using Proposition \ref{EF} and Theorem \ref{thmeqpixt}.

\begin{thm}\label{thmDT4.7}
Let $E(x, t, \lambda)$ be a frame of a solution $u$ of the $(2j-1)$-th $\mathcal{G}_{m,n}$-KdV flow \eqref{def2V}, and $k_{\alpha, \pi}$, $k_{\alpha, \tilde{\pi}}$, $F(x, t, \lambda)$, $\tilde{F}(x, t, \lambda)$, $q$, $\tilde{q}$ as in Theorem \ref{thmFqKpi}, $\Delta(x, t)$ satisfying $\Delta_t\Delta^{-1}=\eta_{2j-1}(u)$. $\tilde{\Delta}$ and $\tilde{u}$ are given by $\tilde{q}$ and Proposition \ref{propqud}. Then
\begin{gather*}
\tilde{E}(x, t, \lambda)=k_{\alpha, \pi}(\lambda)E(x, t, \lambda)\Delta k_{\alpha,\tilde{\pi}}^{-1}(\lambda)\tilde{\Delta}^{-1}
\end{gather*}
is a frame of the solution $\tilde{u}=\tilde{\Delta}\ast \left(k_{\alpha,\tilde{\pi}}\bullet (\Delta^{-1}\ast u)\right)$.
\end{thm}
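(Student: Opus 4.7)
The strategy is to transport the problem from the $(2j-1)$-th $\mathcal{G}_{m,n}$-KdV flow \eqref{def2V}, where $E$ and $u$ live, down to the $(2j-1)$-th $\mathcal{G}_{m,n}$-flow \eqref{def1Y}, where Theorem \ref{thmFqKpi} already supplies a Darboux transformation of simple-element type, execute the dressing there, and then transport the result back to the KdV side. The two gauges $\Delta$ and $\tilde{\Delta}$ are exactly the bridges furnished by Propositions \ref{propqud} and \ref{EF}, so the proof is essentially a commutative-diagram chase.

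First, set $q:=\Delta^{-1}\ast u$, so that $u=\Delta\ast q$. Because the given $\Delta$ satisfies both the intertwining relation of Proposition \ref{propqud} and the time condition $\Delta_t\Delta^{-1}=\eta_{2j-1}(u)$, Proposition \ref{EF} implies that $F(x,t,\lambda):=E(x,t,\lambda)\Delta(x,t)$ is a frame of $q$ for the $\mathcal{G}_{m,n}$-flow \eqref{def1Y} (a direct check from $E^{-1}E_x=J(\lambda)+u$ and the intertwining relation gives $F^{-1}F_x=J(\lambda)+q$, and similarly for the $t$-equation). I now apply Theorem \ref{thmFqKpi} to $F$ with the data $(\alpha,\pi)$ to obtain $\tilde{V}(x,t)=F(x,t,\alpha)^{-1}(\operatorname{Im}\pi)$, the $O(m+n+1,m+n)$-projection $\tilde{\pi}$ onto $\tilde{V}$, the solution $\tilde{q}=q+2\alpha[e_{2m+1,2m+2n+1},\tilde{\pi}]$ of \eqref{def1Y}, and its frame $\tilde{F}(x,t,\lambda)=k_{\alpha,\pi}(\lambda)F(x,t,\lambda)k_{\alpha,\tilde{\pi}}^{-1}(\lambda)$.

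Next, Proposition \ref{propqud} applied to $\tilde{q}$ produces a unique $\tilde{\Delta}\in C^{\infty}(\mathbb{R},N^{+})$ and $\tilde{u}\in V_{m,n}$ with $\tilde{\Delta}(\partial_x+J+\tilde{q})\tilde{\Delta}^{-1}=\partial_x+J+\tilde{u}$, and Proposition \ref{EF} in the forward direction then gives that $\tilde{E}(x,t,\lambda):=\tilde{F}(x,t,\lambda)\tilde{\Delta}^{-1}(x,t)$ is a frame of the solution $\tilde{u}=\tilde{\Delta}\ast\tilde{q}$ of \eqref{def2V}. Substituting $F=E\Delta$ and the above formula for $\tilde{F}$ produces $\tilde{E}=k_{\alpha,\pi}(\lambda)\,E(x,t,\lambda)\,\Delta(x,t)\,k_{\alpha,\tilde{\pi}}^{-1}(\lambda)\,\tilde{\Delta}^{-1}(x,t)$, which is the claimed formula; the identification of $\tilde{u}$ with $\tilde{\Delta}\ast(k_{\alpha,\tilde{\pi}}\bullet(\Delta^{-1}\ast u))$ is then just the bookkeeping $q=\Delta^{-1}\ast u \mapsto \tilde{q}=k_{\alpha,\pi}\bullet q \mapsto \tilde{u}=\tilde{\Delta}\ast\tilde{q}$.

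The delicate point, and the only step that is not pure invocation, is to verify that the $\tilde{\Delta}$ produced by Proposition \ref{propqud} (which is only uniquely determined pointwise in $x$) automatically satisfies the time compatibility $\tilde{\Delta}_t\tilde{\Delta}^{-1}=\eta_{2j-1}(\tilde{u})$ needed to apply Proposition \ref{EF} in the $t$-direction. The cleanest way to handle this is to compute $\tilde{F}^{-1}\tilde{F}_t$ from Theorem \ref{thmFq}(4) for $\tilde{q}$, then conjugate by $\tilde{\Delta}$, and identify the result with $(S^{2j-1}(\tilde{u},\lambda))_{+}+\eta_{2j-1}(\tilde{u})$; the uniqueness parts of Proposition \ref{propqud} and of Theorem \ref{thmuC} jointly force the compatibility. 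Once this one identification is done, everything else is a direct composition of previously established statements.
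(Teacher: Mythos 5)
Your proposal is correct and follows exactly the route the paper intends: it only remarks that the theorem follows ``by using Proposition \ref{EF} and Theorem \ref{thmeqpixt}'' and gives no written proof, and your diagram chase $u\mapsto q=\Delta^{-1}\ast u\mapsto \tilde q\mapsto \tilde u=\tilde\Delta\ast\tilde q$ with $F=E\Delta$, $\tilde F=k_{\alpha,\pi}Fk_{\alpha,\tilde\pi}^{-1}$, $\tilde E=\tilde F\tilde\Delta^{-1}$ is precisely that derivation (note your $F=E\Delta$ is the consistent reading; the converse clause of Proposition \ref{EF} as printed has the factors in the other order). You also correctly isolate the one point the paper glosses over, namely that the pointwise-in-$x$ gauge $\tilde\Delta$ must satisfy $\tilde\Delta_t\tilde\Delta^{-1}=\eta_{2j-1}(\tilde u)$, and your uniqueness argument via Proposition \ref{propqud} and Theorem \ref{thmuC} closes that gap.
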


We construct the solutions for the $3$-th $\mathcal{G}_{m,n}$-flow \eqref{def1Y} and obtain the solutions for the matrix constrained CKP hierarchy by using the preceding conclusions.

\begin{ex}

Set $m=n=1$ as in Example \ref{exYV}.
Note that
\begin{gather*}
F(x, t, \lambda)=\exp \left(J(\lambda) x+J(\lambda)^3 t\right)
\end{gather*}
is a frame of the solution $q=0$ of the third flow equation.

Using $\lambda=z^3$, $F(x, t, \lambda)$ can be rewritten as
\begin{eqnarray}
F(x, t, z^3)=H\operatorname{diag}\left(1, 1, \exp(z^3t+zx), \exp(z^3t+\varepsilon zx), \exp(z^3t+\varepsilon^2zx)\right)H^{-1},\nonumber
\end{eqnarray}
where $\varepsilon=e^{2 \pi i / 3} $ and
\begin{eqnarray}
H=\left(\begin{array}{ccccc}
1 & 0 & 0 & 0 & 0 \\
0 & 1 & 0 & 0 & 0 \\
0 & 0 & z^2 & \varepsilon^2z^2 & \varepsilon z^2 \\
0 & 0 & z & \varepsilon z& \varepsilon^2 z \\
0 & 0 & 1 & 1 & 1 \\
\end{array}\right).\nonumber
\end{eqnarray}

Let $\pi$ be the $O(3,2)$ projection onto $\mathbb{C} v$, where $v=(1,-1,0,0,1)^t$. By using $z=1$, we have
\begin{gather*}
\tilde{V}(x, t):=\mathbb{C} F(x, t, 1)^{-1}(v)=\mathbb{C} \xi(x, t),
\end{gather*}
where
$$
\begin{aligned}
\xi & =\left(1,-1, \xi_1, \xi_2, \xi_3\right)^t, \\
\xi_1 & =\frac{\mathrm{e}^{-t-x}}{3}+\frac{\left(-\sqrt{3} \sin \left(\frac{ \sqrt{3}x}{2}\right)-\cos \left(\frac{ \sqrt{3}x}{2}\right)\right) \mathrm{e}^{-t+\frac{x}{2}}}{3},\\
\xi_2 & =\frac{\mathrm{e}^{-t-x}}{3}+\frac{\left(\sqrt{3} \sin \left(\frac{ \sqrt{3}x}{2}\right)-\cos \left(\frac{ \sqrt{3}x}{2}\right)\right) \mathrm{e}^{-t+\frac{x}{2}}}{3} ,\\
\xi_3 & =\frac{\mathrm{e}^{-t-x}}{3}+\frac{2 \mathrm{e}^{-t+\frac{x}{2}} \cos \left(\frac{ \sqrt{3}x}{2}\right)}{3}.
\end{aligned}
$$

Then the $O(3,2)$ projection $\tilde{\pi}(x, t)$ onto $\tilde{V}(x, t)$ is
\begin{eqnarray}
\tilde{\pi}&=&\frac{1}{\langle\xi, \xi\rangle} \xi \xi^t C_{1,1},\nonumber\\
&=&\frac{1}{\langle\xi, \xi\rangle} \left(\begin{array}{ccccc}
-1 & 1 & -\xi_3 & \xi_2 & -\xi_1 \\
1 & -1 & \xi_3 & -\xi_2 & \xi_1 \\
-\xi_1 & \xi_1 & -\xi_1\xi_3 & \xi_1\xi_2 & -\xi_1\xi_1 \\
-\xi_2 & \xi_2 & -\xi_2\xi_3 & \xi_2\xi_2 & -\xi_2\xi_1 \\
-\xi_3 & \xi_3 & -\xi_3\xi_3 & \xi_3\xi_2 & -\xi_3\xi_1 \\
\end{array}\right),\nonumber
\end{eqnarray}
where $\langle\xi, \xi\rangle=-2-2\xi_1\xi_3+\xi_2^2$.
By Theorem \ref{thmFqKpi}, we get
\begin{eqnarray}
\tilde{q}=\frac{2}{\langle\xi, \xi\rangle} \left(\begin{array}{ccccc}
0 & 0 & 0 & 0 & -\xi_3 \\
0 & 0 & 0 & 0 & \xi_3 \\
\xi_3 & -\xi_3 & \xi_3\xi_3 & -\xi_3\xi_2 & 0 \\
0 & 0 & 0 & 0 & -\xi_2\xi_3 \\
0 & 0 & 0 & 0 & -\xi_3\xi_3 \\
\end{array}\right).\nonumber
\end{eqnarray}
Then the solution for the third flow equation \eqref{def1Yqrab} is
\begin{gather*}
q_1=r_1=\frac{-2\xi_3}{-2-2\xi_1\xi_3+\xi_2^2},\;
a=\frac{2\xi_3^2}{-2-2\xi_1\xi_3+\xi_2^2},\;
b=\frac{-2\xi_2\xi_3}{-2-2\xi_1\xi_3+\xi_2^2}.
\end{gather*}
Hence, for the matrix constrained CKP hierarchy, set $m=n=1$, the solution for the third flow equation \eqref{def1Vqru} is
\begin{gather*}
q_1=r_1=\frac{-2\xi_3}{-2-2\xi_1\xi_3+\xi_2^2  },\;u_1=\frac{12\xi_2\xi_3+12\xi_1\xi_2\xi_3^2-6\xi_2^3\xi_3-6\xi_3^4}{(-2-2\xi_1\xi_3+\xi_2^2)^2}.
\end{gather*}

Let $\pi$ be the $O(3,2)$ projection onto $\mathbb{C} \tilde{v}$, where $\tilde{v}=(1,2,0,0,1)^t$. By using $z=1$, we have
\begin{gather*}
\tilde{V}(x, t):=\mathbb{C} F(x, t, -1)^{-1}(\tilde{v})=\mathbb{C} \tilde{\xi}(x, t),
\end{gather*}
where
$$
\begin{aligned}
\tilde{\xi} & =\left(1,2, \xi_1, \xi_2, \xi_3\right)^t.
\end{aligned}
$$
Then the solution for the third flow equation \eqref{def1Yqrab} is
\begin{gather*}
q_1=\frac{-2\xi_3}{4-2\xi_1\xi_3+\xi_2^2},\;r_1=\frac{4\xi_3}{4-2\xi_1\xi_3+\xi_2^2},
a=\frac{2\xi_3^2}{4-2\xi_1\xi_3+\xi_2^2},\;
b=\frac{-2\xi_2\xi_3}{4-2\xi_1\xi_3+\xi_2^2}.
\end{gather*}
Hence, for the matrix constrained CKP hierarchy, set $m=n=1$, the solution for the third flow equation \eqref{def1Vqru} is
\begin{gather*}
q_1=\frac{-2\xi_3}{4-2\xi_1\xi_3+\xi_2^2},\;r_1=\frac{4\xi_3}{4-2\xi_1\xi_3+\xi_2^2},\;u_1=\frac{-24\xi_2\xi_3+12\xi_1\xi_2\xi_3^2-6\xi_2^3\xi_3-6\xi_3^4}{(4-2\xi_1\xi_3+\xi_2^2)^2}.
\end{gather*}

Let $\pi$ be the $O(3,2)$ projection onto $\mathbb{C} \tilde{v}$, where $\tilde{v}=(-\frac{\sqrt{2}}{2},\frac{\sqrt{2}}{2},1,1,1)^t$. By using $z=1$, we have
\begin{gather*}
\tilde{V}(x, t):=\mathbb{C} F(x, t, 1)^{-1}(\tilde{v})=\mathbb{C} \tilde{\xi}(x, t),
\end{gather*}
where
$$
\begin{aligned}
\tilde{\xi} & =\left(-\frac{\sqrt{2}}{2},\frac{\sqrt{2}}{2}, \exp{(-x-t)}, \exp{(-x-t)}, \exp{(-x-t)}\right)^t.
\end{aligned}
$$
Then the solution for the third flow equation \eqref{def1Yqrab} is
\begin{gather*}
q_1=r_1=\frac{\sqrt{2}\exp{(-x-t)}}{1+\exp{(-2x-2t)}},
a=\frac{2\exp{(-2x-2t)}}{1+\exp{(-2x-2t)}},b=\frac{-2\exp{(-2x-2t)}}{1+\exp{(-2x-2t)}}.
\end{gather*}
Hence, for the matrix constrained CKP hierarchy, set $m=n=1$, the solution for the third flow equation \eqref{def1Vqru} is
\begin{gather*}
q_1=r_1=\frac{\sqrt{2}}{2\cosh(x+t)},\;u_1=\frac{3}{1+\cosh(2x+2t)}.
\end{gather*}

The soliton solutions of the matrix constrained CKP hierarchy is presented below.
Let $\pi$ be the $O(3,2)$ projection onto $\mathbb{C} \tilde{v}$, where $\tilde{v}=(-\frac{1}{2},1,1,1,1)^t$. By using $z=1$, we have
\begin{gather*}
\tilde{V}(x, t):=\mathbb{C} F(x, t, -1)^{-1}(\tilde{v})=\mathbb{C} \tilde{\xi}(x, t),
\end{gather*}
where
$$
\begin{aligned}
\tilde{\xi} & =\left(-\frac{1}{2},1, \exp{(-x-t)}, \exp{(-x-t)}, \exp{(-x-t)}\right)^t.
\end{aligned}
$$
Then the solution for the third flow equation \eqref{def1Yqrab} is
\begin{gather*}
q_1=\frac{\exp{(-x-t)}}{1+\exp{(-2x-2t)}},\;r_1=\frac{2\exp{(-x-t)}}{1+\exp{(-2x-2t)}},
a=-b=\frac{2\exp{(-2x-2t)}}{1+\exp{(-2x-2t)}}.
\end{gather*}
Hence, for the matrix constrained CKP hierarchy, set $m=n=1$, the solution for the third flow equation \eqref{def1Vqru} is
\begin{gather*}
q_1=\frac{1}{2\cosh(x+t)},\;r_1=\frac{1}{\cosh(x+t)},\;u_1=\frac{3}{1+\cosh(2x+2t)}
\end{gather*}
\end{ex}

The above example shows that the condition $q_i=r_i$ is invariant under a certain Darboux transformation

\begin{proposition}
Let $\pi$ be the one-dimensional projection onto $\mathbb{C}v$ with $v=(v_1,v_2,\cdots,v_{m+n})^t$ and $E(x, t, \lambda)$ be a frame of a solution $u$ of the $(2j-1)$-th flow \eqref{def2V}. Let
\begin{gather*}
(\varsigma_1,\varsigma_2,\cdots,\varsigma_{m+n})^t=E(x, t, \alpha)^{-1}v.
\end{gather*}
If $(-1)^n\varsigma_i-\varsigma_{2m+1-i}=0$, then the Darboux transformation preserves the solution $q_i=r_i,i=1,2,\cdots,m$ of the matrix constrained CKP hierarchy.
\end{proposition}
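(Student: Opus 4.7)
The plan is to reduce the claim to identifying a single off-diagonal entry of a rank-one projection. First I would use Proposition \ref{propqud} to observe that the gauge $\Delta$ relating $u\in V_{m,n}$ to $q=\Delta^{-1}\ast u\in Y_{m,n}$ is block diagonal, of the form $\operatorname{diag}(I_{2m},\Delta_{22})$ with $\Delta_{22}\in N_{2n+1}^{+}$. Hence $u$ and $q$ agree on their first $2m$ rows; in particular they share the same $q_i$ at position $(i,2m+2n+1)$ and the same $r_j$ (up to the sign $(-1)^n$) at position $(2m+1-j,2m+2n+1)$, for $i,j=1,\dots,m$. The same holds for $\tilde{\Delta}$. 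Combined with Proposition \ref{EF} and Theorem \ref{thmDT4.7}, this reduces the statement to showing that the map $\tilde q=q+2\alpha[e_{2m+1,2m+2n+1},\tilde\pi]$ from Theorem \ref{thmFqKpi} preserves $q_i=r_i$.

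Next I would compute the relevant entries of the commutator. A direct calculation shows that for $i\neq 2m+1$,
\[
[e_{2m+1,2m+2n+1},\tilde\pi]_{i,2m+2n+1}=-\tilde\pi_{i,2m+1}.
\]
Using the identification from the first step, this yields $\tilde q_i=q_i-2\alpha\,\tilde\pi_{i,2m+1}$ and $\tilde r_j=r_j-2\alpha(-1)^n\,\tilde\pi_{2m+1-j,2m+1}$. Assuming $q_i=r_i$, the required identity $\tilde q_i=\tilde r_i$ is therefore equivalent to
\[
\tilde\pi_{i,2m+1}=(-1)^n\,\tilde\pi_{2m+1-i,2m+1},\qquad i=1,\dots,m.
\]

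Third, I would exploit the rank-one form $\tilde\pi=\xi\xi^{t}C_{m,n}/\langle\xi,\xi\rangle$ with $\xi=F^{-1}(\alpha)v$. Reading the $(2m+1)$-st column of $\xi^{t}C_{m,n}$ directly from $C_{m,n}=\operatorname{diag}(\alpha_m,\beta_n)$ gives $(\xi^{t}C_{m,n})_{2m+1}=(-1)^{n}\xi_{2m+2n+1}$, so $\tilde\pi_{k,2m+1}=(-1)^{n}\xi_k\xi_{2m+2n+1}/\langle\xi,\xi\rangle$. The required identity collapses to $(-1)^{n}\xi_i=\xi_{2m+1-i}$ for $i=1,\dots,m$. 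Finally, from Proposition \ref{EF} one has $F=E\Delta$, so $\xi=\Delta^{-1}\varsigma$, and since $\Delta^{-1}$ is the identity on the first $2m$ components while both $i$ and $2m+1-i$ lie in $\{1,\dots,2m\}$ for $i=1,\dots,m$, this is precisely the hypothesis $(-1)^{n}\varsigma_i-\varsigma_{2m+1-i}=0$.

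The main obstacle is bookkeeping rather than any genuine analytic difficulty: the signs and index conventions hidden in the definitions of $\mu_i,\nu_i$ and $\alpha_m,\beta_n$, together with the need to pass correctly through the block-diagonal gauges $\Delta$ and $\tilde\Delta$, must all line up. Once the positions of $q_i,r_i$ inside $Y_{m,n}$ and $V_{m,n}$ are tabulated carefully, the proof reduces to the single rank-one identity above.
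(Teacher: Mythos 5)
Your argument is correct and follows essentially the same route as the paper: write the rank-one projection as $\tilde\pi=\xi\xi^{t}C_{m,n}/\langle\xi,\xi\rangle$, read off that the Darboux shift of $q_i$ is proportional to $(-1)^n\xi_i\xi_{2m+2n+1}$ while that of $r_i$ is proportional to $\xi_{2m+1-i}\xi_{2m+2n+1}$ with the same common factor, and invoke the hypothesis. Your additional bookkeeping through the block-diagonal gauge $\Delta$ (identifying $\xi=F^{-1}v$ with $\varsigma=E^{-1}v$ on the first $2m$ components) is a point the paper's proof passes over silently, but it does not change the substance of the argument.
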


\begin{proof}
Let $$(\varsigma_1,\varsigma_2,\cdots,\varsigma_{m+n})^t=E(x, t, \alpha)^{-1}v.$$ Then the $O(3,2)$ projection $\tilde{\pi}(x, t)$ onto $\tilde{V}(x, t)$ is
\begin{eqnarray}
\tilde{\pi}=\frac{1}{\langle \varsigma, \varsigma\rangle} \varsigma \varsigma^t C_{1,1},\nonumber
\end{eqnarray}
where $\varsigma=(\varsigma_1,\varsigma_2,\cdots,\varsigma_{m+n})^t$.
By calculation, we have
\begin{eqnarray}
\tilde{q}_i=q_i+(-1)^n\frac{2 \alpha \varsigma_i\varsigma_{2m+2n+1}}{\langle \varsigma, \varsigma\rangle},
\tilde{r}_i=r_i+\frac{2 \alpha \varsigma_{2m+1-i}\varsigma_{2m+2n+1}}{\langle \varsigma, \varsigma\rangle},i=1,2,\cdots,m.\nonumber
\end{eqnarray}
Invoking $(-1)^n\varsigma_i-\varsigma_{2m+1-i}=0$, we deduce the validity of the proposition.

\end{proof}

We study permutability formula for the $\mathcal{G}_{m,n}$-flows \eqref{def1Y}.
Let $\pi$ be an $O(m+n+1, m+n)$-projection onto $V_1$, and $\left\{v_1, \ldots, v_k\right\}$ a basis of $V_1$ such that $\left\langle v_i, v_j\right\rangle=\delta_{i j} \epsilon_i$, where $\epsilon_i= \pm 1$. Then
\begin{gather*}
\pi= \xi\epsilon \xi^t C_{m,n}, \quad \epsilon=\operatorname{diag}\left(\epsilon_1, \ldots, \epsilon_k\right), \quad \xi=\left(v_1, \ldots, v_k\right).
\end{gather*}

\begin{proposition}\label{prop_alpha_tau}
Let $\alpha_1, \alpha_2 \in \mathbb{C} \backslash\{0\}$ such that $\left|\alpha_1\right| \neq\left|\alpha_2\right|$, and $\pi_1, \pi_2$ be two $O(m+n+1, m+n)$ projections. Then

\begin{itemize}
\item[\rm(1).] $\phi\left(\alpha_1, \alpha_2, \pi_1, \pi_2\right):=\alpha_1-\alpha_2+2 \alpha_2 \pi_2-2 \alpha_1 \pi_1$ is invertible,

\item[\rm(2).]  $\tau_i:=\phi \pi_i \phi^{-1}$ is an $O(m+n+1, m+n)$-projection for $i=1,2$,

\item[\rm(3).]  $\operatorname{Im} \tau_1=k_{\alpha_2, \pi_2}\left(\alpha_1\right)\left(\operatorname{Im} \pi_1\right)$ and $\operatorname{Im} \tau_2=k_{\alpha_1, \pi_1}\left(\alpha_2\right)\left(\operatorname{Im} \pi_2\right)$,

\item[\rm(4).]  $k_{\alpha_2, \tau_2}k_{\alpha_1, \pi_1}=k_{\alpha_1, \tau_1}k_{\alpha_2, \pi_2}$.
\end{itemize}
\end{proposition}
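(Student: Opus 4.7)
The strategy is to reduce (1)--(4) to two algebraic observations: first, $\phi^{\sharp}=\phi$ (immediate from $I^{\sharp}=I$ and $\pi_i^{\sharp}=\pi_i$); and second, the explicit identity $\phi v=-(\alpha_1-\alpha_2)k_{\alpha_2,\pi_2}(\alpha_1)v$ for $v\in\operatorname{Im}\pi_1$, together with the symmetric identity $\phi v=(\alpha_2-\alpha_1)k_{\alpha_1,\pi_1}(\alpha_2)v$ for $v\in\operatorname{Im}\pi_2$ (both checked by substituting $\pi_i v=v$ into the definition of $\phi$). Combined with the expansion
\[
\phi^2=(\alpha_1-\alpha_2)^2 I+4\alpha_1\alpha_2(\pi_1+\pi_2-\pi_1\pi_2-\pi_2\pi_1),
\]
these observations reduce (2)--(4) to routine manipulations once invertibility of $\phi$ is secured.

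For (1), I would rewrite $\phi=\alpha_1 R_1-\alpha_2 R_2$ with $R_i:=I-2\pi_i$, so that $R_i^{\sharp}=R_i$, $R_i^2=I$, and the product $U:=R_1 R_2$ satisfies $U^{\sharp}=U^{-1}$; that is, $U$ preserves the bilinear form $\langle\cdot,\cdot\rangle$. A direct computation then yields
\[
\det\phi^2=\prod_{\mu\in\mathrm{spec}(U)}\left[-\frac{\alpha_1\alpha_2}{\mu}\left(\mu-\frac{\alpha_1}{\alpha_2}\right)\left(\mu-\frac{\alpha_2}{\alpha_1}\right)\right],
\]
which is nonzero exactly when neither $\alpha_1/\alpha_2$ nor $\alpha_2/\alpha_1$ lies in $\mathrm{spec}(U)$; a spectral analysis of the pseudo-orthogonal element $U$ using the hypothesis $|\alpha_1|\neq|\alpha_2|$ rules this out and hence delivers invertibility of $\phi$. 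This spectral step is the main technical obstacle. Granted invertibility, (2) follows quickly: $\tau_i^2=\phi\pi_i^2\phi^{-1}=\tau_i$ is automatic, and $\tau_i^{\sharp}=(\phi^{\sharp})^{-1}\pi_i^{\sharp}\phi^{\sharp}=\phi^{-1}\pi_i\phi$ coincides with $\tau_i=\phi\pi_i\phi^{-1}$ iff $\phi^2$ commutes with $\pi_i$; the latter is verified by expanding $\phi^2$ as above and using the identity $[\pi_1\pi_2+\pi_2\pi_1,\pi_j]=[\pi_{3-j},\pi_j]$, an easy consequence of $\pi_j^2=\pi_j$.

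Assertion (3) is then immediate: from $\phi v=-(\alpha_1-\alpha_2)k_{\alpha_2,\pi_2}(\alpha_1)v$ on $\operatorname{Im}\pi_1$ and the invertibility of $k_{\alpha_2,\pi_2}(\alpha_1)$ (valid since $\alpha_1\neq\alpha_2$), one has $\operatorname{Im}\tau_1=\phi(\operatorname{Im}\pi_1)=k_{\alpha_2,\pi_2}(\alpha_1)(\operatorname{Im}\pi_1)$, and analogously for $\tau_2$. For (4), I would consider
\[
P(\lambda):=k_{\alpha_1,\tau_1}(\lambda)\,k_{\alpha_2,\pi_2}(\lambda)\,k_{\alpha_1,\pi_1}(\lambda)^{-1}\,k_{\alpha_2,\tau_2}(\lambda)^{-1}.
\]
By Proposition \ref{prop_kk}, $k_{\alpha,\pi}(\lambda)^{-1}=k_{-\alpha,\pi}(\lambda)$, so $P$ is rational with possible simple poles only at $\pm\alpha_1,\pm\alpha_2$ and $P(\infty)=I$; by partial fractions it suffices to verify that every residue vanishes. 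The residue at $\alpha_1$ vanishes because (3) gives $\tau_1^{\perp}k_{\alpha_2,\pi_2}(\alpha_1)\pi_1=0$; the residue at $-\alpha_1$ vanishes because of the dual identity $\ker\tau_1=k_{\alpha_2,\pi_2}(\alpha_1)^{-1}\ker\pi_1$, derived from $\operatorname{Im}\tau_1=k_{\alpha_2,\pi_2}(\alpha_1)(\operatorname{Im}\pi_1)$, $k^{\sharp}=k$, and non-degeneracy of $\operatorname{Im}\pi_1$; the two residues at $\pm\alpha_2$ are handled by mirror arguments after swapping $(\alpha_1,\pi_1,\tau_1)\leftrightarrow(\alpha_2,\pi_2,\tau_2)$. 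Thus $P\equiv I$, which is precisely (4).
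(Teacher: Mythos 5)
Your handling of (2), (3) and (4) is correct and, for (3) and (4), essentially the paper's own route: (3) is the same one-line computation (for $v\in\operatorname{Im}\pi_1$ one has $\phi v=(-\alpha_1-\alpha_2)v+2\alpha_2\pi_2v$, a scalar multiple of $k_{\alpha_2,\pi_2}(\alpha_1)v$), and your residue argument for (4) is exactly the mechanism of Proposition \ref{gFFg} that the paper invokes when it says (3) and (4) are equivalent, with the poles at $\pm\alpha_i$ killed by (3) and by its $\sharp$-dual statement on kernels. Your reduction of (2) to $[\phi^2,\pi_i]=0$ is a clean way to organize what the paper dismisses as ``simple''.

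The genuine gap is in (1). Your reduction of the invertibility of $\phi$ to the condition $\alpha_1/\alpha_2,\,\alpha_2/\alpha_1\notin\operatorname{spec}(U)$ with $U=(I-2\pi_1)(I-2\pi_2)$ is correct, but the concluding step --- that $|\alpha_1|\neq|\alpha_2|$ excludes these points from $\operatorname{spec}(U)$ because $U$ preserves $\langle\cdot,\cdot\rangle$ --- fails. The form $\langle X,Y\rangle=X^tC_{m,n}Y$ is an indefinite complex symmetric bilinear form, and isometries of such a form need not have spectrum on the unit circle; only for a definite Hermitian form would $U$ be unitary and the spectral exclusion follow. Concretely, take $m=1$, $n=0$, so $C_{1,0}=e_{12}+e_{21}+e_{33}$, and let $\pi_1,\pi_2$ be the $O(2,1)$ projections onto the non-degenerate lines $\mathbb{C}(1,1,0)^t$ and $\mathbb{C}(1,2,0)^t$, i.e.
\begin{equation*}
\pi_1=\tfrac12\begin{pmatrix}1&1&0\\1&1&0\\0&0&0\end{pmatrix},\qquad
\pi_2=\tfrac14\begin{pmatrix}2&1&0\\4&2&0\\0&0&0\end{pmatrix}.
\end{equation*}
With $\alpha_1=1$, $\alpha_2=2$ one computes $\phi=-I+4\pi_2-2\pi_1=3e_{21}-e_{33}$, which is singular, even though $|\alpha_1|\neq|\alpha_2|$; here $U=\operatorname{diag}(2,\tfrac12,1)$, so both $\alpha_2/\alpha_1$ and $\alpha_1/\alpha_2$ lie in $\operatorname{spec}(U)$, consistent with your determinant formula. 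So the spectral step cannot be repaired: statement (1) requires a genericity hypothesis on the pair $(\pi_1,\pi_2)$ beyond $|\alpha_1|\neq|\alpha_2|$ (your criterion $\alpha_1/\alpha_2,\alpha_2/\alpha_1\notin\operatorname{spec}(U)$ is the correct necessary and sufficient condition). The paper itself offers no argument for (1) (``it is simple to determine''), so your attempt is more explicit than the source, but as written it proves less than it claims.
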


\begin{proof}
It is simple to determine if \rm(1)  and \rm(2)  are accurate.

For \rm(3), let $V_i=\operatorname{Im} \pi_i$ for $i=1,2$. By \rm(2), we have $\operatorname{Im} \tau_i=\phi\left(V_i\right)$.
Since we can obtain
\begin{eqnarray}
\phi\left(V_1\right) & =&\left(\alpha_1-\alpha_2+2 \alpha_2 \pi_2-2 \alpha_1 \pi_1\right)\left(V_1\right) \nonumber\\
& =&-\left(\alpha_1-\alpha_2+2 \alpha_2 \pi_2\right)\left(V_1\right)\nonumber\\
& =&\left(-\alpha_1-\alpha_2+2 \alpha_2 \pi_2\right)\left(V_1\right), \nonumber
\end{eqnarray}
and
\begin{eqnarray}
k_{\alpha_2, \pi_2}\left(\alpha_1\right)\left(\operatorname{Im} \pi_1\right)=\left(-\alpha_1-\alpha_2+2 \alpha_2 \pi_2\right)\left(V_1\right),\nonumber
\end{eqnarray}
then $\operatorname{Im} \tau_1=k_{\alpha_2, \pi_2}\left(\alpha_1\right)\left(\operatorname{Im} \pi_1\right)$ holds. Similarly, we get $\operatorname{Im} \tau_2=k_{\alpha_1, \pi_1}\left(\alpha_2\right)\left(\operatorname{Im} \pi_2\right)$.

Using Proposition \ref{gFFg}, we may determine that \rm(3) and \rm(4) are equivalent.
\end{proof}

By Theorem \ref{thmFqKpi} and Propositions \ref{gFFg} and \ref{prop_alpha_tau}, we can prove the following theorem.

\begin{thm}\label{thmPF5.2}
Let $\alpha_i, \pi_i, \tau_i$ be as in Proposition \ref{prop_alpha_tau} for $i=1,2$, and $q$ be a solution of the $(2j-1)$-th $\mathcal{G}_{m,n}$-flow \eqref{def1Y}. Then

\begin{itemize}
\item[\rm(1).] $k_{\alpha_1, \tau_1}\bullet \left(k_{\alpha_2, \pi_2}\bullet q\right)=k_{\alpha_2, \tau_2}\bullet\left(k_{\alpha_1, \pi_1}\bullet q\right)$.

\item[\rm(2).]  Let $F(x, t, \lambda)$ be the frame of $q$ with $F(0,0, \lambda)=1_{2m+2 n+1}, \tilde{\pi}_i(x, t)$ the $O(m+n+1, m+n)$ projection onto $F\left(x, t, \alpha_i\right)^{-1}\left(\operatorname{Im} \pi_i\right)$ for $i=1,2$, and $\tilde{\tau}_i=\bar{\phi} \tilde{\pi}_i \bar{\phi}^{-1}$, where $\tilde{\phi}=\alpha_1-\alpha_2+2 \alpha_2 \tilde{\pi}_2-2 \alpha_1 \tilde{\pi}_1$. Then
$$
\begin{aligned}
q_i & :=k_{\alpha_i, \pi_i} \bullet q=q+2 \alpha_i\left[e_{2m+1,2m+2 n+1}, \tilde{\pi}\right], \quad i=1,2, \\
q_{12} & :=k_{\alpha_1, \tau_1} \bullet\left(k_{\alpha_2, \pi_2} \bullet q\right)=q_1+2 \alpha_2\left[e_{2m+1,2m+2 n+1}, \tilde{\tau}_2\right]=q_2+2 \alpha_1\left[e_{2m+1,2m+2 n+1}, \tilde{\tau}_1\right].
\end{aligned}
$$
\end{itemize}
\end{thm}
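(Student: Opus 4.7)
The plan is to derive Part (1) directly from the group-action property of $\bullet$ together with the commutation identity in Proposition \ref{prop_alpha_tau}(4), and to establish Part (2) by iterating Theorem \ref{thmFqKpi} once and identifying the second-step projection via another application of Proposition \ref{prop_alpha_tau}(3).

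For Part (1), I would invoke the last sentence of Theorem \ref{thmFqKpi}, which states that $(g_1 g_2)\bullet q = g_1\bullet(g_2\bullet q)$ for any $g_1,g_2\in\hat{G}_-$. Combined with $k_{\alpha_1,\tau_1}k_{\alpha_2,\pi_2} = k_{\alpha_2,\tau_2}k_{\alpha_1,\pi_1}$ from Proposition \ref{prop_alpha_tau}(4), both sides of (1) equal $(k_{\alpha_1,\tau_1}k_{\alpha_2,\pi_2})\bullet q$, so no further computation is required.

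Part (2) splits into two claims. The formula $q_i = q + 2\alpha_i[e_{2m+1,2m+2n+1},\tilde{\pi}_i]$ is Theorem \ref{thmFqKpi} applied to the projection $\pi_i$. For the iterated formula, I will apply Theorem \ref{thmFqKpi} a second time with $q_2$ as the base solution, whose frame is $F_2 = k_{\alpha_2,\pi_2}F k_{\alpha_2,\tilde{\pi}_2}^{-1}$, and with dressing factor $k_{\alpha_1,\tau_1}$. The resulting expression for $q_{12}$ involves the $O(m+n+1,m+n)$ projection $P$ onto $F_2(x,t,\alpha_1)^{-1}(\operatorname{Im}\tau_1)$. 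Substituting the form of $F_2$ and invoking $\operatorname{Im}\tau_1 = k_{\alpha_2,\pi_2}(\alpha_1)(\operatorname{Im}\pi_1)$ from Proposition \ref{prop_alpha_tau}(3), the $k_{\alpha_2,\pi_2}(\alpha_1)$ factors cancel and the subspace reduces to $k_{\alpha_2,\tilde{\pi}_2}(\alpha_1)(\operatorname{Im}\tilde{\pi}_1)$. Applying Proposition \ref{prop_alpha_tau}(3) a second time, now to the pair $(\tilde{\pi}_1,\tilde{\pi}_2)$ with conjugator $\tilde{\phi}$, recognizes this subspace as $\operatorname{Im}\tilde{\tau}_1$, so $P = \tilde{\tau}_1$ and the second expression for $q_{12}$ follows; the first expression is then obtained by swapping the roles of the indices, or equivalently from Part (1).

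The main obstacle will be justifying the second application of Proposition \ref{prop_alpha_tau}, namely verifying that $\tilde{\phi} = \alpha_1-\alpha_2+2\alpha_2\tilde{\pi}_2-2\alpha_1\tilde{\pi}_1$ is pointwise invertible and that the resulting $\tilde{\tau}_i$ are genuine $O(m+n+1,m+n)$-projections on a common open neighborhood where both Darboux transformations are defined. This reduces to a pointwise algebraic check involving the non-degeneracy of $\langle\cdot,\cdot\rangle$ restricted to $\operatorname{Im}\tilde{\pi}_i(x,t)$ which is already built into the hypotheses of Theorem \ref{thmFqKpi}, and can be secured by shrinking $\mathcal{O}$ if necessary.
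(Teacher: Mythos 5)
Your proposal is correct and takes the same route as the paper, which gives no written proof beyond asserting that the theorem follows from Theorem \ref{thmFqKpi} and Propositions \ref{gFFg} and \ref{prop_alpha_tau} --- precisely the combination you use (the group-action property plus Proposition \ref{prop_alpha_tau}(4) for part (1), and a second application of Theorem \ref{thmFqKpi} with the image identification $F_2(x,t,\alpha_1)^{-1}(\operatorname{Im}\tau_1)=k_{\alpha_2,\tilde{\pi}_2}(\alpha_1)(\operatorname{Im}\tilde{\pi}_1)=\operatorname{Im}\tilde{\tau}_1$ via Proposition \ref{prop_alpha_tau}(3) for part (2)). Your identification of the second-step projection as $\tilde{\tau}_1$, and your remark on shrinking the neighborhood to keep $\tilde{\phi}$ invertible, actually supply more detail than the paper records.
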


\begin{proposition}\label{propF_D}
Let $F(x, t, \lambda)$ be the frame of a solution $q$ of the $(2j-1)$-th $\mathcal{G}_{m,n}$-flow \eqref{def1Y} with $F(0,0, \lambda)=\mathrm{I}_{2m+2 n+1}$. Let $r \in \mathbb{C} \backslash\{0\}$ and
\begin{gather*}
D_{m,n}(r)=\operatorname{diag}\left(r^{n}I_{2m}, 1, r, \ldots, r^{2 n}\right).
\end{gather*}
Then $\hat{F}(x, t, \lambda):=(r \cdot F)(x, t, \lambda):=D_{m,n}^{-1}(r) F\left(r x, r^{2 j-1} t, r^{-(2 n+1)} \lambda\right) D_{m,n}(r)$ is the frame of a solution $(r \cdot q)(x, t):=r D_{m,n}^{-1}(r) q\left(r x, r^{2 j-1} t\right) D_{m,n}(r)$ of the $(2j-1)$-th $\mathcal{G}_{m,n}$-flow \eqref{def1Y} such that $\hat{F}(0,0, \lambda)=\mathrm{I}_{2m+2 n+1}$.
\end{proposition}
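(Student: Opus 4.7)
The plan is to verify directly that $\hat{F}$ satisfies system \eqref{thmFq4} with potential $r\cdot q$, together with the reflection condition \eqref{Fxlam} and the normalization $\hat F(0,0,\lambda)=\mathrm{I}$. The normalization is immediate from $F(0,0,\lambda)=\mathrm{I}$ and $D_{m,n}^{-1}(r)\mathrm{I}\,D_{m,n}(r)=\mathrm{I}$, so the whole content lies in checking the two Lax equations and the involution condition.

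The key computational ingredient is a scaling identity for $J(\lambda)$. Since $D_{m,n}(r)=\mathrm{diag}(r^{n}I_{2m},1,r,\ldots,r^{2n})$ is diagonal, conjugation acts on matrix units by $D_{m,n}^{-1}(r)e_{i,j}D_{m,n}(r)=r^{j-i}e_{i,j}$ in the bottom-right block. Applied to $J_0(\lambda)=e_{1,2n+1}\lambda+b$ with $b=\sum e_{i+1,i}$, this gives $D_{m,n}^{-1}(r)J(r^{-(2n+1)}\lambda)D_{m,n}(r)=r^{-1}J(\lambda)$. Using this, a direct chain-rule computation shows
\begin{gather*}
\hat F^{-1}\hat F_x=r\,D_{m,n}^{-1}(r)\bigl[J(r^{-(2n+1)}\lambda)+q(rx,r^{2j-1}t)\bigr]D_{m,n}(r)=J(\lambda)+(r\cdot q)(x,t),
\end{gather*}
which is the first equation of \eqref{thmFq4} for $\hat F$ and $r\cdot q$.

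The harder step is verifying the $t$-equation, for which I would first establish the scaling identity
\begin{gather*}
S(r\cdot q,\lambda)=r\,D_{m,n}^{-1}(r)\,S\bigl(q(rx,r^{2j-1}t),\,r^{-(2n+1)}\lambda\bigr)\,D_{m,n}(r),
\end{gather*}
by invoking uniqueness in \eqref{Lxq}. Defining $\tilde S$ to be the right-hand side, the commutator identity $[\partial_x+J(\lambda)+(r\cdot q),\tilde S]=0$ follows from the scaling identity for $J$ together with $[\partial_x+J(r^{-(2n+1)}\lambda)+q,S(q,r^{-(2n+1)}\lambda)]=0$ by conjugating by $D_{m,n}(r)$ and applying $\partial_x=r\partial_y$ with $y=rx$. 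To verify that $\tilde S$ is still conjugate to $J$ by an element of $(G_{m,n})_-$, I would write $S=gJg^{-1}$ with $g\in (G_{m,n})_-$ and check that $\tilde g(\lambda):=D_{m,n}^{-1}(r)g(rx,r^{2j-1}t,r^{-(2n+1)}\lambda)D_{m,n}(r)$ still lies in $(G_{m,n})_-$. Rescaling $\lambda$ only changes the coefficients of the negative powers, preserving $\tilde g\in(G_{m,n})_-$, and the involution $\sigma(\tilde g(-\lambda))=\tilde g(\lambda)$ reduces to the identity $C_{m,n}D_{m,n}(r)C_{m,n}^{-1}=r^{2n}D_{m,n}(r^{-1})$, which in turn follows from the fact that the conjugation by $C_{m,n}$ reverses diagonal entries within each block. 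Once the scaling identity for $S$ is in place, taking $(2j-1)$-st powers, truncating to non-negative powers of $\lambda$ (which commutes with the $\lambda$-rescaling), and computing $\hat F^{-1}\hat F_t$ by the chain rule yields $\hat F^{-1}\hat F_t=(S^{2j-1}(r\cdot q,\lambda))_+$.

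Finally, the involution condition $\sigma(\hat F(x,t,-\lambda))=\hat F(x,t,\lambda)$ is verified by the same $C_{m,n}$-vs-$D_{m,n}(r)$ computation used above:
\begin{gather*}
\sigma(\hat F(x,t,-\lambda))=D_{m,n}^{-1}(r)\sigma(F(rx,r^{2j-1}t,-r^{-(2n+1)}\lambda))D_{m,n}(r)=\hat F(x,t,\lambda),
\end{gather*}
using $\sigma(F(\cdot,\cdot,-\mu))=F(\cdot,\cdot,\mu)$. By Theorem \ref{thmFq}, the existence of such a frame is equivalent to $(r\cdot q)$ being a solution of the $(2j-1)$-th $\mathcal{G}_{m,n}$-flow, completing the proof. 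The main obstacle is the uniqueness/conjugation argument for the scaling identity of $S$; everything else is algebraic bookkeeping with the diagonal matrix $D_{m,n}(r)$ and the involution $C_{m,n}$.
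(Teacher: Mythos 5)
Your proposal is correct and follows essentially the same route as the paper's proof: the two scaling identities $D_{m,n}(r)C_{m,n}D_{m,n}(r)=r^{2n}C_{m,n}$ and $D_{m,n}^{-1}(r)J(r^{-(2n+1)}\lambda)D_{m,n}(r)=r^{-1}J(\lambda)$, the chain-rule computation of $\hat F^{-1}\hat F_x$, the uniqueness argument from \eqref{Lxq} giving $S(r\cdot q,\lambda)=r\,D_{m,n}^{-1}(r)S(q(rx,r^{2j-1}t),r^{-(2n+1)}\lambda)D_{m,n}(r)$, and the involution check. You actually spell out the verification that the conjugating element remains in $(G_{m,n})_-$ in slightly more detail than the paper does.
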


\begin{proof}
The two straightforward yet critical formulas are given as
\begin{eqnarray}\label{C_D}
D_{m,n}(r) C_{m,n} D_{m,n}(r)=r^{2 n} C_{m,n}
\end{eqnarray}
and
\begin{eqnarray}\label{C_D_J}
D_{m,n}^{-1}(r) J\left(r^{-(2 n+1)} \lambda\right) D_{m,n}(r)=r^{-1}J(\lambda).
\end{eqnarray}

Let
$$
\begin{aligned}
\tilde{q}(x, t) & =rD_{m,n}^{-1}(r) q\left(rx, r^{2 j-1} t\right) D_{m,n}(r), \\
\tilde{F}(x, t, \lambda) &=D_{m,n}^{-1}(r) F\left(rx, r^{2 j-1}t, r^{-(2 n+1)} \lambda\right) D_{m,n}(r).
\end{aligned}
$$
Since $F$ satisfy \eqref{Fxlam}, $\bar{F}$ also satisfy \eqref{Fxlam} by \eqref{C_D}.

According to Theorem \ref{thmFq}, we have
\begin{eqnarray}
\tilde{F}(x, t, \lambda)^{-1}\tilde{F}(x, t, \lambda)_x=rD_{m,n}^{-1}(r)\left(J(r^{-(2n+1)}\lambda)+q\left(rx, r^{2 j-1} t\right)\right)D_{m,n}(r).\nonumber
\end{eqnarray}
Using \eqref{C_D_J}, we get
\begin{eqnarray}
\tilde{F}(x, t, \lambda)^{-1}\tilde{F}(x, t, \lambda)_x=J(\lambda)+(r \cdot q)(x, t).\nonumber
\end{eqnarray}

Owing to Theorem \ref{thmFq},
\begin{eqnarray}
\tilde{F}(x, t, \lambda)^{-1}\tilde{F}(x, t, \lambda)_t=r^{2j-1}D_{m,n}^{-1}(r)\left(S^{2j-1}\left(r x, r^{2 j-1}t, r^{-(2 n+1)} \lambda\right)\right)D_{m,n}(r).\nonumber
\end{eqnarray}
Because of \eqref{Lxq}, we can obtain that
\begin{eqnarray}
rD_{m,n}^{-1}(r)\left(S\left(rx, r^{2 j-1}t, r^{-(2 n+1)} \lambda\right)\right)D_{m,n}(r)=S\left(\tilde{q}, \lambda\right).\nonumber
\end{eqnarray}
Then
\begin{eqnarray}
\tilde{F}(x, t, \lambda)^{-1}\tilde{F}(x, t, \lambda)_t=S^{2j-1}\left(\tilde{q}, \lambda\right).\nonumber
\end{eqnarray}
\end{proof}

Similarly, we give scaling transformations for the matrix constrained CKP hierarchy.

\begin{thm}\label{thmST6.2}
Let $E(x, t, \lambda)$ be the frame of a solution
\begin{eqnarray}
u&=&\sum_{i=1}^{m}q_i\mu_i+\sum_{i=1}^{m}r_{m+1-i}\nu_i+\sum_{i=1}^{n}u_i\omega_i\nonumber
\end{eqnarray}
of the $(2j-1)$-th $\mathcal{G}_{m,n}$-KdV flow \eqref{def2V} with $E(0,0, \lambda)=\mathrm{I}_{2m+2 n+1}$, where $\mu_i, \nu_i, \omega_i$ as in Proposition \ref{propqud}. Let $r \in \mathbb{C} \backslash\{0\}$ and
\begin{gather*}
D_{m,n}(r)=\operatorname{diag}\left(r^{n}I_{2m}, 1, r, \ldots, r^{2 n}\right).
\end{gather*}
Then $\hat{E}(x, t, \lambda):=(r \cdot E)(x, t, \lambda):=D_{m,n}^{-1}(r) E\left(r x, r^{2 j-1} t, r^{-(2 n+1)} \lambda\right) D_{m,n}(r)$ is the frame of a solution $(r \cdot u)(x, t):=r D_{m,n}^{-1}(r) u\left(r x, r^{2 j-1} t\right) D_{m,n}(r)$, i.e.,
\begin{eqnarray}
\left(r \cdot q_i\right)(x, t)&=&r^{n+1} q_i\left(rx, r^{2 j-1} t\right), \quad 1 \leq i \leq m, \nonumber\\
\left(r \cdot r_i\right)(x, t)&=&r^{n+1} r_i\left(rx, r^{2 j-1} t\right), \quad 1 \leq i \leq m, \nonumber\\
\left(r \cdot u_i\right)(x, t)&=&r^{2 i} u_i\left(rx, r^{2 j-1} t\right), \quad 1 \leq i \leq n, \nonumber
\end{eqnarray}
of the $(2j-1)$-th $\mathcal{G}_{m,n}$-KdV flow \eqref{def2V} such that $\hat{E}(0,0, \lambda)=\mathrm{I}_{2m+2 n+1}$.
\end{thm}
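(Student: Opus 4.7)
My plan is to mirror the proof of Proposition \ref{propF_D} and verify the linear system \eqref{thmEu4} directly for $\hat{E}(x,t,\lambda)$, together with the reality condition \eqref{Exlam}. The check splits into three pieces: the $x$-equation, the $t$-equation, and the structural claim that the scaled potential still lies in $V_{m,n}$ with the advertised component scalings. The reality condition for $\hat{E}$ is immediate from \eqref{C_D} applied to that of $E$, and $\hat{E}(0,0,\lambda)=I_{2m+2n+1}$ is trivial.

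For the $x$-equation I apply the chain rule to $\hat{E}$ and invoke \eqref{C_D_J} to split off $J(\lambda)$, exactly as in the proof of Proposition \ref{propF_D}; what remains is $rD_{m,n}^{-1}(r)\,u(rx,r^{2j-1}t)\,D_{m,n}(r)$, which I take as the definition of $(r\cdot u)(x,t)$. To confirm that this lies in $V_{m,n}$ with the stated scalings, I conjugate each basis direction by $D_{m,n}(r)$, using the diagonal pattern $\operatorname{diag}(r^nI_{2m},1,r,\ldots,r^{2n})$. A short index calculation based on $D_{m,n}^{-1}(r)\,e_{a,b}\,D_{m,n}(r)=(D_{m,n}(r))_{bb}(D_{m,n}(r))_{aa}^{-1}e_{a,b}$ yields
$$D_{m,n}^{-1}(r)\mu_i D_{m,n}(r)=r^{n}\mu_i,\qquad D_{m,n}^{-1}(r)\nu_i D_{m,n}(r)=r^{n}\nu_i,\qquad D_{m,n}^{-1}(r)\omega_i D_{m,n}(r)=r^{2i-1}\omega_i.$$
Multiplying by the outer factor $r$ from the definition of $r\cdot u$ gives the advertised exponents $r^{n+1}$ for $q_i,r_i$ and $r^{2i}$ for $u_i$, and shows in particular that $r\cdot u\in V_{m,n}$.

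The $t$-equation is the most delicate piece. I would first establish the $S$-scaling identity
$$r\,D_{m,n}^{-1}(r)\,S\bigl(u(rx,r^{2j-1}t),\,r^{-(2n+1)}\lambda\bigr)\,D_{m,n}(r)=S(r\cdot u,\lambda),$$
which follows by conjugating the defining system \eqref{Lxq} by $D_{m,n}(r)$, applying \eqref{C_D_J}, and invoking the uniqueness clause of \eqref{Lxq}; raising both sides to the $(2j-1)$st power produces the same identity for $S^{2j-1}$ with an overall factor $r^{2j-1}$. Since $D_{m,n}(r)$ is $\lambda$-independent, the $(\cdot)_+$ projection commutes with the conjugation (the substitution $\lambda\mapsto r^{-(2n+1)}\lambda$ rescales each Laurent coefficient but preserves the sign of its degree). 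For the cross-section term, Theorem \ref{thmuC} characterizes $\eta_{2j-1}(u)$ as the unique element of $C^\infty(\mathbb{R},\mathcal{N}^+)$ making $[\partial_x+\operatorname{diag}(0_{2m},b)+u,\,S_{2j-1,0}(u)+\eta_{2j-1}(u)]$ take values in $V_{m,n}$. Conjugating this bracket by $D_{m,n}(r)$, using the already established scalings for $u$ and for $S_{2j-1,0}(u)$, and invoking uniqueness at the scaled potential yields
$$r^{2j-1}\,D_{m,n}^{-1}(r)\,\eta_{2j-1}(u)(rx,r^{2j-1}t)\,D_{m,n}(r)=\eta_{2j-1}(r\cdot u).$$
Combining the two identities gives $\hat{E}^{-1}\hat{E}_t=(S^{2j-1}(r\cdot u,\lambda))_+ + \eta_{2j-1}(r\cdot u)$, completing the verification. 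The main obstacle is the last uniqueness step: one must check that $D_{m,n}$-conjugation preserves membership in $\mathcal{N}^+$, i.e.\ the strictly upper-triangular block structure required by Proposition \ref{propqud}, and that it preserves $V_{m,n}$-valuedness of the bracket. Both follow from the diagonal shape of $D_{m,n}(r)$, since each monomial $e_{a,b}$ is merely rescaled, not displaced.
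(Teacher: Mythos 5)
Your proof is correct and takes essentially the same route the paper intends: the paper omits a proof of this theorem, saying only that it is obtained ``similarly'' to Proposition \ref{propF_D}, and your argument is exactly that direct verification of \eqref{thmEu4} via \eqref{C_D} and \eqref{C_D_J}, supplemented by the two pieces the paper leaves implicit (the uniqueness argument giving $\eta_{2j-1}(r\cdot u)=r^{2j-1}D_{m,n}^{-1}(r)\eta_{2j-1}(u)(rx,r^{2j-1}t)D_{m,n}(r)$, and the eigenvalue computation $D_{m,n}^{-1}(r)\mu_iD_{m,n}(r)=r^n\mu_i$, $D_{m,n}^{-1}(r)\nu_iD_{m,n}(r)=r^n\nu_i$, $D_{m,n}^{-1}(r)\omega_iD_{m,n}(r)=r^{2i-1}\omega_i$, which I have checked against $D_{m,n}^{-1}(r)e_{a,b}D_{m,n}(r)=(D_{m,n}(r))_{bb}(D_{m,n}(r))_{aa}^{-1}e_{a,b}$ and which reproduces the stated exponents $r^{n+1}$ and $r^{2i}$ after the outer factor $r$). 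No gaps.
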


Next, we will present the relationship between scaling transformations and the Darboux transformations for the $(2j-1)$-th $\mathcal{G}_{m,n}$-flow \eqref{def1Y}.

\begin{proposition}
 Let $q$ be a solution of the $(2j-1)$-th $\mathcal{G}_{m,n}$-flow \eqref{def1Y}, $\pi$ the $O(m+n+1, m+n)$ projection onto $V$, and $k_{\alpha, \pi}$ as given in Proposition \ref{prop_kk}. Let $r \in \mathbb{C} \backslash\{0\}$, and
\begin{gather*}
D_{m,n}(r)=\operatorname{diag}\left(r^{n}I_{2m}, 1, r, \ldots, r^{2 n}\right).
\end{gather*}
Then

\begin{itemize}
\item[\rm(1).] $r \cdot \pi:=D_{m,n}(r)\pi D_{m,n}^{-1}(r)$ is an $O(m+n+1, m+n)$-projection onto $D_{m,n}(r)V$.

\item[\rm(2).]  $D_{m,n}(r)k_{1, \pi}(s \lambda) D_{m,n}^{-1}(r)=k_{s^{-1},r\cdot \pi},$

\item[\rm(3).] $r^{-1} \cdot\left(k_{1, \pi} \bullet(r \cdot q)\right)=k_{r^{-(2 n+1)}, r\cdot \pi} \bullet q$.
\end{itemize}

\end{proposition}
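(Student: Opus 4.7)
The plan is to verify the three assertions in order: parts (1) and (2) are direct algebraic checks using the structural identity \eqref{C_D} and the explicit form of $k_{\alpha,\pi}$ from Proposition \ref{prop_kk}, while part (3) is the substantive content which combines them with Theorem \ref{thmFqKpi} and Proposition \ref{propF_D}.

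For (1), set $\tilde{\pi}:=D_{m,n}(r)\pi D_{m,n}^{-1}(r)$. Idempotency $\tilde{\pi}^{2}=\tilde{\pi}$ is immediate from conjugation. For the adjoint condition $\tilde{\pi}^{\sharp}=\tilde{\pi}$, note $A^{\sharp}=C_{m,n}^{-1}A^{t}C_{m,n}$ and recall that $D_{m,n}(r)$ is diagonal (so $D_{m,n}(r)^{t}=D_{m,n}(r)$) and $C_{m,n}^{2}=I$. Rewriting \eqref{C_D} as $C_{m,n}D_{m,n}(r)=r^{2n}D_{m,n}^{-1}(r)C_{m,n}$ lets me push the $D$'s through $C_{m,n}$ and reduce $\tilde{\pi}^{\sharp}$ to $D_{m,n}(r)\pi^{\sharp}D_{m,n}^{-1}(r)=\tilde{\pi}$, using $\pi^{\sharp}=\pi$. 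The image is $D_{m,n}(r)V$ by construction. For (2), substitute $k_{1,\pi}(s\lambda)=I+\tfrac{2}{s\lambda-1}(I-\pi)$, observe $D_{m,n}(r)(I-\pi)D_{m,n}^{-1}(r)=I-(r\cdot\pi)$, and rewrite $\tfrac{2}{s\lambda-1}=\tfrac{2s^{-1}}{\lambda-s^{-1}}$ to match $k_{s^{-1},r\cdot\pi}(\lambda)$.

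For (3), let $F(x,t,\lambda)$ be the frame of $q$ with $F(0,0,\lambda)=I_{2m+2n+1}$. By Proposition \ref{propF_D}, $r\cdot F$ is the normalized frame of $r\cdot q$. Applying Theorem \ref{thmFqKpi} to $r\cdot q$ with $\alpha=1$ and projection $\pi$, a frame of $k_{1,\pi}\bullet(r\cdot q)$ is
\begin{equation*}
G(x,t,\lambda)=k_{1,\pi}(\lambda)\,(r\cdot F)(x,t,\lambda)\,k_{1,\tilde{\pi}_{L}(x,t)}^{-1}(\lambda),
\end{equation*}
where $\tilde{\pi}_{L}(x,t)$ is the $O(m+n+1,m+n)$ projection onto $(r\cdot F)(x,t,1)^{-1}(\mathrm{Im}\,\pi)$. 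I now apply $r^{-1}\cdot$ to $G$, using $(r\cdot F)(r^{-1}x,r^{-(2j-1)}t,r^{2n+1}\lambda)=D_{m,n}^{-1}(r)F(x,t,\lambda)D_{m,n}(r)$ to strip the inner scaling, and invoke part (2) with $s=r^{2n+1}$ to conjugate both simple factors by $D_{m,n}(r)$. The outcome is
\begin{equation*}
(r^{-1}\cdot G)(x,t,\lambda)=k_{r^{-(2n+1)},\,r\cdot\pi}(\lambda)\,F(x,t,\lambda)\,k_{r^{-(2n+1)},\,(r\cdot\tilde{\pi}_{L})(r^{-1}x,\,r^{-(2j-1)}t)}^{-1}(\lambda).
\end{equation*}

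The chief obstacle is identifying the auxiliary projection on the right-hand side with the projection $\tilde{\pi}_{R}(x,t)$ onto $F(x,t,r^{-(2n+1)})^{-1}(\mathrm{Im}(r\cdot\pi))$ that Theorem \ref{thmFqKpi} associates with the Darboux transformation $k_{r^{-(2n+1)},r\cdot\pi}\bullet q$. I handle this by an image calculation: $\tilde{\pi}_{L}(r^{-1}x,r^{-(2j-1)}t)$ has image $D_{m,n}^{-1}(r)F(x,t,r^{-(2n+1)})^{-1}D_{m,n}(r)(\mathrm{Im}\,\pi)$, so after conjugation by $D_{m,n}(r)$ its image becomes $F(x,t,r^{-(2n+1)})^{-1}D_{m,n}(r)V=F(x,t,r^{-(2n+1)})^{-1}(\mathrm{Im}(r\cdot\pi))$, which matches $\tilde{\pi}_{R}$. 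Hence $(r^{-1}\cdot G)$ is simultaneously a frame of $r^{-1}\cdot(k_{1,\pi}\bullet(r\cdot q))$ (by Proposition \ref{propF_D}) and, by Theorem \ref{thmFqKpi}, the normalized Darboux frame of $k_{r^{-(2n+1)},r\cdot\pi}\bullet q$; equality of the two solutions follows. Once parts (1) and (2) are in hand, the only nontrivial computation in part (3) is this image matching, and the rest is bookkeeping.
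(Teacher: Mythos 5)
Your proof is correct and follows essentially the same route as the paper: parts (1) and (2) by direct computation from \eqref{C_D} and the explicit form of $k_{\alpha,\pi}$, and part (3) by scaling the Darboux-transformed frame via Proposition \ref{propF_D}, conjugating the simple factors with part (2) at $s=r^{2n+1}$, and matching the image of the auxiliary projection with the one Theorem \ref{thmFqKpi} assigns to $k_{r^{-(2n+1)},\,r\cdot\pi}\bullet q$. Your write-up of the image-matching step is in fact cleaner and more consistent in its arguments than the paper's own.
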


\begin{proof}
By \eqref{C_D}, we have
\begin{eqnarray}
 D_{m,n}(r)C_{m,n}=r^{2 n} C_{m,n}D_{m,n}(r)^{-1},\nonumber
\end{eqnarray}
then by direct calculation that we get $\left(r \cdot \pi\right)^2=r \cdot \pi,\;r \cdot \pi=C_{m,n}\left(r \cdot \pi\right)C_{m,n}$. (1) holds. Direct computation also confirms that (2) is true.

Due to Proposition \ref{propF_D},
\begin{gather*}
\bar{F}(x, t, \lambda)=D_{m,n}^{-1}(r) F\left(rx, r^{2 j-1} t, r^{-(2 n+1)} \lambda\right)D_{m,n}(r)
\end{gather*}
is the frame of a solution
\begin{gather*}
\bar{q}(x, t)=r \cdot q(x, t)=rD_{m,n}^{-1}(r) q\left(r x, r^{2 j-1} t\right)D_{m,n}(r)
\end{gather*}
and $\tilde{F}(0,0, \lambda)=I_{2m+2 n+1}$.

Let
\begin{gather*}
\tilde{V}(x, t)=\tilde{F}(x, t, 1)^{-1} V.
\end{gather*}
Because of Proposition \ref{gFFg},
\begin{gather*}
F_1(x, t, \lambda)=k_{1, \pi}(\lambda)\bar{F}(x, t, \lambda)k_{1, \tilde{\pi}(x, t)}^{-1}(\lambda)
\end{gather*}
is a frame for $q_1(x, t):=k_{1, \pi} \bullet \tilde{q}(x, t)$, where $\tilde{\pi}(x, t)$ is the $O(m+n+1, m+n)$ projection onto $\tilde{V}(x, t)$.

By Proposition \ref{propF_D}, we have
\begin{gather*}
F_2(x, t, \lambda)=D_{m,n}(r) F_1\left(rx, r^{2 j-1} t, r^{-(2 n+1)} \lambda\right) D_{m,n}^{-1}(r)
\end{gather*}
is a frame for $q_2(x, t):=r^{-1} \cdot q_1(x, t)$.
Then
{\small
\begin{eqnarray}
F_2(x, t, \lambda)&=&D_{m,n}(r)k_{1, \pi(r^{-1} x, r^{-(2 j-1)} t)}^{-1}\left(r^{2 n+1} \lambda\right) \tilde{F}\left(r^{-1} x, r^{-(2 j-1)} t, r^{2 n+1} \lambda\right) k_{1, \tilde{\pi}}(r^{2 n+1} \lambda)D_{m,n}^{-1}(r) \nonumber\\
&=&k_{r^{2 n+1}, r \cdot \pi} F(x, t, \lambda)k_{r^{-(2 n+1)}, r \cdot \tilde{\pi}\left(r x, r^{2 j-1} t\right)}^{-1}\nonumber.
\end{eqnarray}
}

Note that
\begin{eqnarray}
D_{m,n}(r) \bar{V}\left(r^{-1}x, r^{-(2 j-1)} t\right) &=&D_{m,n}(r) \bar{F}^{-1}\left(r^{-1}x, r^{-(2 j-1)} t, 1\right) V\nonumber \\
&=&F\left(x, t, r^{-(2 n+1)}\right)^{-1} \hat{V}.\nonumber
\end{eqnarray}
then $F_2(x, t, \lambda)$ is a frame for $k_{r^{-(2 n+1)}, r \cdot \pi} \bullet q(x, t)$. Therefore, (3) holds.

\end{proof}

\section{\sc \bf Tau Functions}

In this section, we investigate the relation between tau functions and solutions for the matrix constrained CKP hierarchy based on the algorithm given in \cite{TerngUKTf2016,TerngUKTf20162}.

We first review the construction of tau function. Let group $(G_{m,n})_{ \pm}$ and Lie algebra  $\left(\mathcal{G}_{m,n}\right)_{\pm}$ be as in section 2. The pair $\left(\left(\mathcal{G}_{m,n}\right)_{+}, \left(\mathcal{G}_{m,n}\right)_{-}\right)$ is a splitting of $\mathcal{G}_{m,n}$. Let $\left\langle, \right\rangle_{k}$ denote the bilinear form on $\mathcal{G}_{m,n}$ defined by
$$
\begin{aligned}
\langle\xi, \eta\rangle_k=\sum_j \operatorname{tr}\left(\xi_j \eta_{-j+k}\right),
\end{aligned}
$$
where $\xi(\lambda)=\sum_j \xi_j \lambda^j$ and $\eta(\lambda)=\sum_j \eta_j \lambda^j$.
Then $\left(\mathcal{G}_{m,n}\right)_{+}$ and $\left(\mathcal{G}_{m,n}\right)_{-}$  is compatible with the 2-cocycle $\omega$ defined by
$$
\begin{aligned}
w(\xi, \eta)=\left\langle\xi_\lambda, \eta\right\rangle_{-1},
\end{aligned}
$$
where $\xi, \eta \in \mathcal{G}_{m,n}$.
Let $\tilde{\mathcal{G}}$ is a central extension of $\mathcal{G}_{m,n}$ defined by $\omega$ and $\pi: \tilde{G} \rightarrow \mathcal{G}_{m,n}$ be  the projection. Then $\tilde{G}$ is a central extension of $G_{m,n}$. Thus, there is a natural lift $S: \tilde{G}_{+}\cup\tilde{G}_{-}\rightarrow \tilde{G}$ with $\pi \circ S=\operatorname{id}$.

Let the vacuum sequence $J^{2i-1}$ be as in \eqref{vacseq} and $W(t)=\exp(\sum_{i=1}^N t_{2i-1}J^{2i-1})$. Given $f\in \tilde{G}_{-}$, we can factor
\begin{gather*}
W(t) f^{-1}=M(t)^{-1} E(t), \quad \text { with } M(t) \in \tilde{G}_{-} \quad \text { and } \quad E(t) \in \tilde{G}_{+}.
\end{gather*}
Then tau function $\tau_f(t)$ is defined by
\begin{gather*}
S\left(W(t) \right) S\left(f^{-1}\right)=\tau_f(t) S\left(M(t)^{-1}\right) S\left(E(t)\right).
\end{gather*}
Based on the conclusions of Terng and Uhlenbeck \cite{TerngUKTf2016,TerngUKTf20162}, we can derive the formula for the tau function in the hierarchy defined by Definition \ref{def1Y} as follows
\begin{eqnarray}
& \left(\ln \tau_f\right)_{t_1 t_j}=\left\langle S^j(q,\lambda),e_{2m+1,2m+2n+1}\right\rangle_{-1},\nonumber \\
& \left(\ln \tau_f\right)_{t_j t_k}=\left\langle S^j(q,\lambda), \partial_\lambda\left(S^k(q,\lambda)\right)_{+}\right\rangle_{-1}.\label{tau1}
\end{eqnarray}
where $S(q,\lambda)$ was given by \eqref{Lxq}.

Next, we present the formula for the tau function $\tau_f$ of the matrix constrained CKP hierarchy.

\begin{thm}\label{thmtau11}
Let $\left(\left(\mathcal{G}_{m,n}\right)_{+}, \left(\mathcal{G}_{m,n}\right)_{-}\right)$ is a splitting of $\mathcal{G}_{m,n}$. Let the vacuum sequence $J^{2i-1}$ be as in \eqref{vacseq} and $f\in \tilde{G}_{-}$. Let  $u$ is a solution of the matrix constrained CKP hierarchy \eqref{def2V}. Then
$$
\begin{aligned}
& \left(\ln \tau_f\right)_{t_1 t_{2j-1}}=\left\langle S^{2j-1}(u,\lambda),e_{2m+1,2m+2n+1}\right\rangle_{-1}, \nonumber\\
& \left(\ln \tau_f\right)_{t_{2j-1} t_{2k-1}}=\left\langle S^{2j-1}(u,\lambda), \partial_\lambda\left(S^{2k-1}(u,\lambda)\right)_{+}\right\rangle_{-1},\\
\end{aligned}
$$
where $S(u,\lambda)$ is the solution of \eqref{Lxq} for $u\in V_{m,n}$.
\end{thm}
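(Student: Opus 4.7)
The plan is to deduce the formulas for $u$ from the already-stated formulas \eqref{tau1} for $q$ via the gauge transformation $u = \Delta \ast q$ of Proposition \ref{propqud}. The key observation is that the tau function $\tau_f$ is intrinsic to the splitting $((\mathcal{G}_{m,n})_+,(\mathcal{G}_{m,n})_-)$, the vacuum sequence $\{J^{2i-1}\}$, and $f \in \tilde{G}_-$: it is defined by the factorization $W(t) f^{-1} = M(t)^{-1} E(t)$ in $\tilde{G}$ and is unaffected by the particular Lax presentation. Thus the $\tau_f$ appearing in \eqref{tau1} for the $\mathcal{G}_{m,n}$-flow \eqref{def1Y} coincides with the $\tau_f$ the theorem asserts for the gauge-equivalent matrix constrained CKP flow \eqref{def2V}, and the task reduces to rewriting the right-hand sides of \eqref{tau1} in terms of $S(u,\lambda)$.

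First I would establish the identity
\begin{gather*}
S(u,\lambda) = \Delta\, S(q,\lambda)\, \Delta^{-1}.
\end{gather*}
Since $\Delta (\partial_x + J + q) \Delta^{-1} = \partial_x + J + u$ by Proposition \ref{propqud}, conjugating $[\partial_x + J + q, S(q,\lambda)] = 0$ by $\Delta$ gives $[\partial_x + J + u,\, \Delta S(q,\lambda)\Delta^{-1}] = 0$. Writing $\Delta = \operatorname{diag}(I_{2m}, \Delta_{22})$ with $\Delta_{22}\in N^+_{2n+1}$ upper unitriangular, a short block computation yields $\Delta\, e_{2m+1,2m+2n+1}\, \Delta^{-1} = e_{2m+1,2m+2n+1}$, so $\Delta S(q,\lambda)\Delta^{-1}$ has the same leading term as $S(u,\lambda)$. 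Moreover, the $\sigma$-invariance of the frames $F, E$ in \eqref{Fxlam} and \eqref{Exlam} forces $\sigma(\Delta) = \Delta$, equivalently $\Delta^t C_{m,n}\Delta = C_{m,n}$, which ensures that $\Delta S(q,\lambda)\Delta^{-1}$ still satisfies the $\sigma_*$-condition defining $\mathcal{G}_{m,n}$. The uniqueness asserted in \eqref{Lxq} then gives the identity.

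Next I would exploit two properties of conjugation by the $\lambda$-independent $\Delta$: it commutes both with the projection $(\cdot)_+$ and with $\partial_\lambda$, and the pairing $\langle \cdot, \cdot\rangle_{-1}$ is Ad-invariant by cyclicity of trace. Combining these with the fact that $\Delta$ fixes $e_{2m+1,2m+2n+1}$, I obtain
\begin{gather*}
\langle S^{2j-1}(u,\lambda),\, e_{2m+1,2m+2n+1}\rangle_{-1} = \langle S^{2j-1}(q,\lambda),\, e_{2m+1,2m+2n+1}\rangle_{-1},
\end{gather*}
and
\begin{gather*}
\langle S^{2j-1}(u,\lambda),\, \partial_\lambda (S^{2k-1}(u,\lambda))_+\rangle_{-1} = \langle S^{2j-1}(q,\lambda),\, \partial_\lambda (S^{2k-1}(q,\lambda))_+\rangle_{-1}.
\end{gather*}
Substituting into \eqref{tau1} completes the proof.

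The main obstacle will be the uniqueness step, namely confirming that $\Delta S(q,\lambda)\Delta^{-1}$ lies in $\mathcal{G}_{m,n}$ so that \eqref{Lxq} applies. This reduces to the orthogonality condition $\Delta^t C_{m,n}\Delta = C_{m,n}$, which is forced by compatibility of $\Delta$ with the involution $\sigma$ used to cut out the frames. Once this is recorded, the remaining argument is purely formal Ad-invariance of the trace pairing and handles both formulas in a single stroke.
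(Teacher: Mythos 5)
Your proposal is correct and follows essentially the same route as the paper: identify $S(u,\lambda)=\Delta S(q,\lambda)\Delta^{-1}$ via Proposition \ref{propqud} and the uniqueness in \eqref{Lxq}, then transport the formulas \eqref{tau1} using the $\lambda$-independence of $\Delta$. Your added checks (that $\Delta$ fixes $e_{2m+1,2m+2n+1}$, that $\sigma(\Delta)=\Delta$, and the Ad-invariance of $\langle\cdot,\cdot\rangle_{-1}$) are details the paper leaves implicit, and they are all sound.
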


\begin{proof}
By Proposition \ref{propqud}, given $q \in C^{\infty}\left(\mathbb{R},Y_{m,n}\right)$, there exist a unique $\Delta \in C^{\infty}\left(\mathbb{R}, N^{+}\right)$ and $u \in$ $C^{\infty}\left(\mathbb{R}, V_{m,n}\right)$ such that
\begin{eqnarray}
\Delta\left(\partial_x+J+q\right) \Delta^{-1}=\partial_x+J+u.\nonumber
\end{eqnarray}
Then, in \eqref{Lxq}, given a map $u: \mathbb{R}\rightarrow V_{m,n}$, there exists a unique $\Delta S(q,\lambda) \Delta^{-1}$, such that
\begin{equation}
\left\{\begin{array}{l}
{\left[\partial_x+J+u, \Delta S(q,\lambda) \Delta^{-1}\right]=0,} \\
\Delta S(q,\lambda) \Delta^{-1} \text { is conjugate to } J  \text { by }  (G_{m,n})_{-},\nonumber
\end{array}\right.
\end{equation}
where $(G_{m,n})_{-}$ is given in section 2.
By the uniqueness of solutions, it follows that
$$S(u,\lambda) =\Delta S(q,\lambda) \Delta^{-1}.$$
Since $\Delta$ is independent of $\lambda$, we have
$$\Delta S_\lambda(q,\lambda) \Delta^{-1}=\left(\Delta S(q,\lambda) \Delta^{-1}\right)_\lambda.$$
Using \eqref{tau1}, we derive
\begin{gather*}
\left(\ln \tau_f\right)_{t_{2j-1} t_{2k-1}}=\left\langle S^{2j-1}(u,\lambda), \partial_\lambda\left(S^{2k-1}(u,\lambda)\right)_{+}\right\rangle_{-1}.
\end{gather*}

\end{proof}

We need to study $S(u,\lambda)$. We first present a fundamental lemma.

\begin{lemma}\label{MN2m}
Let $Mat(2m,2n+1,c)$ denote an $2m\times(2n+1)$ matrix where only the last column is non-zero and $Mat(2n+1,2m,r)$ denote an $(2n+1)\times 2m$ matrix where only the first row is non-zero. Let $J_0(\lambda)$ be as in \eqref{J0}. Any $M_{1,2}\in\{\sum_{i=0}^{+\infty}M_i\lambda^{-i}|M_0\in Mat(2m,2n+1,c),M_i\in\mathbb{C}^{2m\times(2n+1)}\}$ can be uniquely rewritten in the form
\begin{gather*}
M_{1,2}=\sum_{i=0}^{+\infty}m_iJ_0(\lambda)^{-i},
\end{gather*}
where $m_i\in Mat(2m,2n+1,c)$. Similarly, any $N_{2,1}\in\{\sum_{i=0}^{+\infty}N_i\lambda^{-i}|N_0\in Mat(2n+1,2m,r),N_i\in\mathbb{C}^{(2n+1)\times 2m}\}$ can be uniquely rewritten in the form
\begin{gather*}
N_{2,1}=\sum_{i=0}^{+\infty}J_0(\lambda)^{-i}n_i,
\end{gather*}
where $n_i\in Mat(2n+1,2m,r)$.
\end{lemma}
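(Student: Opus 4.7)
The plan is to exploit the cyclic structure of $J_0(\lambda)$. From $J_0(\lambda) e_k = e_{k+1}$ for $1 \le k < 2n+1$ and $J_0(\lambda) e_{2n+1} = \lambda e_1$, one obtains $J_0(\lambda)^{2n+1} = \lambda I_{2n+1}$, so $J_0(\lambda)^{-1} = \lambda^{-1} J_0(\lambda)^{2n}$ is a well-defined formal Laurent series in $\lambda^{-1}$, and the negative powers $J_0(\lambda)^{-i}$ permute the standard basis cyclically with period $2n+1$ modulo the scalar $\lambda^{-1}$.

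First I would record the row action of $J_0(\lambda)^{-1}$: $e_k^t J_0(\lambda)^{-1} = e_{k+1}^t$ for $k < 2n+1$ and $e_{2n+1}^t J_0(\lambda)^{-1} = \lambda^{-1} e_1^t$. Iterating yields
\begin{equation*}
e_{2n+1}^t \, J_0(\lambda)^{-i} = \lambda^{-(q+1)} e_k^t, \qquad i = q(2n+1) + k, \quad q \ge 0,\; 1 \le k \le 2n+1,
\end{equation*}
together with $e_{2n+1}^t J_0(\lambda)^0 = e_{2n+1}^t$. Writing a general element of $\mathrm{Mat}(2m,2n+1,c)$ as $m_i = v_i e_{2n+1}^t$ with $v_i \in \mathbb{C}^{2m}$, the identity above becomes $m_i J_0(\lambda)^{-i} = \lambda^{-(q+1)} v_i e_k^t$: the term $m_i J_0(\lambda)^{-i}$ places $v_i$ into the $k$-th column of the $\lambda^{-(q+1)}$-coefficient, and $m_0$ places $v_0$ into the last column of the $\lambda^0$-coefficient.

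Given $M_{1,2} = \sum_{i \ge 0} M_i \lambda^{-i}$ with $M_0 \in \mathrm{Mat}(2m,2n+1,c)$, I would then read off the unique $m_i$'s: set $v_0$ equal to the last column of $M_0$ (the other columns of $M_0$ vanish by hypothesis), and for $i \ge 1$ use the unique decomposition $i = q(2n+1) + k$ with $1 \le k \le 2n+1$ and let $v_i$ be the $k$-th column of $M_{q+1}$. The assignment $i \mapsto (q+1,k)$ is a bijection between $\{i \ge 1\}$ and $\{(p,k) : p \ge 1,\ 1 \le k \le 2n+1\}$, so every column of every $M_p$ with $p \ge 1$ is filled by exactly one term $m_i J_0(\lambda)^{-i}$; this yields both existence and uniqueness. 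The statement for $N_{2,1}$ is symmetric: from $J_0(\lambda)^{-1} e_j = e_{j-1}$ for $j > 1$ and $J_0(\lambda)^{-1} e_1 = \lambda^{-1} e_{2n+1}$ one obtains $J_0(\lambda)^{-i} e_1 = \lambda^{-(q+1)} e_{2n+2-k}$, and writing $n_i = e_1 w_i$ with $w_i$ a $1 \times 2m$ row produces a row-by-row matching with the $N_p$. No deeper obstacle is anticipated; the only bookkeeping point is the wrap-around $J_0(\lambda)^{-(2n+1)} = \lambda^{-1} I_{2n+1}$, which contributes the drop in $\lambda$-order after every $2n+1$ terms and is precisely what makes the indexing bijection consistent.
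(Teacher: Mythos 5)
Your proof is correct. The paper states Lemma \ref{MN2m} without giving any proof, so there is nothing to compare against; your argument --- using $J_0(\lambda)^{2n+1}=\lambda I_{2n+1}$ together with the cyclic row action $e_{2n+1}^t J_0(\lambda)^{-i}=\lambda^{-(q+1)}e_k^t$ (resp.\ column action $J_0(\lambda)^{-i}e_1=\lambda^{-(q+1)}e_{2n+2-k}$) to set up the indexing bijection $i\leftrightarrow(q+1,k)$ that matches each column (resp.\ row) of each $\lambda^{-p}$-coefficient with exactly one term $m_iJ_0(\lambda)^{-i}$ (resp.\ $J_0(\lambda)^{-i}n_i$) --- is the natural one, and the bookkeeping checks out.
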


\begin{thm}\label{thmSSS}
Let $\varrho$ defined by $J_0(\lambda) h=h^\varrho J_0(\lambda)$, where $h^\varrho=\operatorname{diag}(h_2,h_3,\cdots,h_{2n+1},h_1)$ if $h^\varrho=\operatorname{diag}(h_1,h_2,\cdots,h_{2n+1})$. Let $S(u,\lambda)=\left(\begin{array}{ll}
S_{1,1} & S_{1,2} \\
S_{2,1} & S_{2,2}
\end{array}\right)\in \mathcal{G}_{m,n}$
 is the solution of \eqref{Lxq} for
\begin{gather*}
u=\sum_{i=1}^{m}q_i\mu_i+\sum_{i=1}^{m}r_{m+1-i}\nu_i+\sum_{i=1}^{n}u_i\omega_i\in V_{m,n}.
\end{gather*}
Then

\begin{itemize}
\item[\rm(1).]$S_{1,1}=\sum_{i=1}^{+\infty}S^{11}_{i}\lambda^{-i},$
where $S^{11}_{i}\in \mathbb{C}^{2m\times 2m}$, and its elements are polynomials in $q_1,\cdots$,$q_m$ and $r_1,\cdots,r_m$, with derivatives up to order $(2n+1)(i-1)$,

\item[\rm(2).]$S_{1,2}=\sum_{i=0}^{+\infty}m_iJ_0(\lambda)^{-i},$
where $m_i\in Mat(2m,2n+1,c)$, and its elements are polynomials in $q_1,\cdots,q_m$ and $r_1,\cdots,r_m$, with derivatives up to order $i$,

\item[\rm(3).]  $S_{2,1}=\sum_{i=0}^{+\infty}J_0(\lambda)^{-i}n_i,$
where $n_i\in Mat(2n+1,2m,r)$, and its elements are polynomials in $q_1,\cdots,q_m$ and $r_1,\cdots,r_m$, with derivatives up to order $i$,

\item[\rm(4).]  $S_{2,2}=J_0(\lambda)+\sum_{i=0}^{+\infty}h_i J_0(\lambda)^{-i},$
where $h_i$'s are diagonal matrices. Moreover, $h_0=0$,
$$
\begin{aligned}
&h_{2j-1}=\frac{u_j}{2n+1}\operatorname{diag}(-2n+1,-2n+1,2,\cdots,2)^{\varrho^{n-j}}+\varpi_{2j-1},\;\;j=1,2,\cdots,n,\\
&h_{2j}=\varpi_{2j},\;\;j=1,2,\cdots,n,
\end{aligned}
$$
where $\varpi_{2i}$ and $\varpi_{2i+1}$ are polynomials of $u_1,\cdots,u_{i-1}$ and their $x$ derivatives, and
elements of $h_{i},(i=2n+1,2n+2,\cdots)$ are polynomials in $q_1,\cdots,q_m$ and $r_1,\cdots,r_m$, with derivatives up to order $i-2n-1$.
\end{itemize}
\end{thm}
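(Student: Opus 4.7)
The plan is to expand $S(u,\lambda)$ in the block form $S=\begin{pmatrix} S_{1,1} & S_{1,2}\\ S_{2,1} & S_{2,2}\end{pmatrix}$, substitute into the defining identity $[\partial_x+J+u,S]=0$, and determine the coefficients recursively. Writing $J+u=\begin{pmatrix} 0 & u_{12}\\ u_{21} & J_0(\lambda)+u_{22}\end{pmatrix}$ and reading off the four $2\times 2$ block entries gives
\begin{align*}
S_{1,1,x} &= S_{1,2}u_{21}-u_{12}S_{2,1},\\
S_{1,2,x} &= S_{1,1}u_{12}+S_{1,2}(J_0+u_{22})-u_{12}S_{2,2},\\
S_{2,1,x} &= S_{2,2}u_{21}-u_{21}S_{1,1}-(J_0+u_{22})S_{2,1},\\
S_{2,2,x} &= S_{2,1}u_{12}-u_{21}S_{1,2}+[S_{2,2},\,J_0+u_{22}].
\end{align*}
The explicit forms of $\mu_i,\nu_i$ in Proposition \ref{propqud} force $u_{12}$ to be supported only on its last column and $u_{21}$ only on its first row, so by Lemma \ref{MN2m} the ansätze $S_{1,2}=\sum_{i\geq 0}m_iJ_0^{-i}$ with $m_i\in Mat(2m,2n+1,c)$ and $S_{2,1}=\sum_{i\geq 0}J_0^{-i}n_i$ with $n_i\in Mat(2n+1,2m,r)$ are exactly adapted to the recursion.

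I would handle $S_{2,2}$ first. Substituting $S_{2,2}=J_0+\sum_{i\geq 0}h_iJ_0^{-i}$ and using the intertwining $J_0h=h^\varrho J_0$ for diagonal $h$ (immediate from the cyclic shape of $J_0$), one gets $[J_0,h_iJ_0^{-i}]=(h_i^\varrho-h_i)J_0^{-i+1}$. Matching the coefficient of $J_0^{-i}$ in the $(2,2)$ equation yields a recursion
\begin{equation*}
h_{i+1}^\varrho-h_{i+1}=\Phi_i\bigl(h_0,\ldots,h_i;\,u_{22};\,S_{2,1}u_{12}-u_{21}S_{1,2}\bigr),
\end{equation*}
in which the right-hand side collects previously-computed data, the $[u_{22},\,\cdot\,]$ contribution, and the $(q,r)$-source. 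Because $\varrho$ cyclically permutes diagonal entries, the map $h\mapsto h^\varrho-h$ is invertible on the subspace of traceless diagonal matrices; thus the diagonal part of $h_{i+1}$ is determined from $\Phi_i$ up to a scalar, which is pinned down by the normalization $S^{2n+2}=\lambda S$ and the requirement that $S$ be conjugate to $J$ in $(G_{m,n})_-$. When $m=0$ the sources vanish and the recursion reduces to the $\hat A^{(2)}_{2n}$-KdV computation of \cite{TerngWuDt2023}, giving precisely the claimed formula for $h_{2j-1}$; the diagonal $\operatorname{diag}(-2n+1,-2n+1,2,\ldots,2)^{\varrho^{n-j}}$ arises by inverting $h\mapsto h^\varrho-h$ against the elementary rank-two source produced by $\omega_j$, and the exponent $\varrho^{n-j}$ records how many times $J_0$-conjugation must be applied to reach the $J_0^{-(2j-1)}$ slot.

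With $S_{2,2}$ in hand, the $(1,2)$ and $(2,1)$ equations become first-order linear equations for $m_{i+1}$ and $n_{i+1}$ driven by lower-order data; each induction step adds at most one $x$-derivative, which yields the derivative bound $i$ in statements (2) and (3). Plugging the resulting $S_{1,2},S_{2,1}$ into the $(1,1)$ equation and integrating once produces $S_{1,1}$ as a series in negative powers of $\lambda$ only (the absence of a constant or positive term coming from the column/row support of $u_{12},u_{21}$); the passage from a $J_0^{-i}$ expansion to a $\lambda^{-i}$ expansion multiplies derivative orders by $2n+1$, which gives the bound $(2n+1)(i-1)$ in (1). The bound $i-2n-1$ on $(q,r)$-derivative orders inside $h_i$ for $i\geq 2n+1$ follows by tracking exactly when the source $S_{2,1}u_{12}-u_{21}S_{1,2}$ first feeds into the $h$-recursion after the $J_0^{-i}$-reorganization.

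The principal obstacle is proving that the diagonal character of the $h_i$ survives the $(q,r)$-sources: both $S_{2,1}u_{12}$ and $u_{21}S_{1,2}$ are supported only on a single column or row respectively, and so are a priori far from diagonal. The resolution is to show that after reorganizing these sources via Lemma \ref{MN2m} into a genuine $J_0^{-i}$-series, their off-diagonal content lies in the image of $\varrho-\mathrm{id}$ at each step, so that inverting the cyclic-shift operator against the right-hand side of the recursion automatically lands in the diagonal subspace. This step, which requires careful bookkeeping of how the cyclic action of $J_0$ acts on the distinguished positions $e_{1,2n+1}$ and $e_{2n+1,1}$, is where the bulk of the technical work lies.
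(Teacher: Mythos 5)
Your setup and your treatment of $S_{2,2}$, $S_{1,2}$, $S_{2,1}$ follow the paper's proof closely: the same block decomposition of $[\partial_x+J+u,S]=0$, the same use of Lemma \ref{MN2m}, the recursion $h_{i+1}^\varrho-h_{i+1}=\Phi_i$ with the residual scalar ambiguity removed by the coefficient equations coming from $S^{2n+2}=\lambda S$ (in the paper these are the conditions $h_i+h_i^\varrho+\cdots+h_i^{\varrho^{2n}}+\Psi_i=0$ read off from the fourth equation of \eqref{su12}), and one $x$-derivative per step for the $m_i$, $n_i$. Two remarks on this part. First, the ``principal obstacle'' in your last paragraph is not actually an obstacle: for each fixed power of $\lambda$ the matrices $DJ_0(\lambda)^{-i}$ with $D$ diagonal and $i$ ranging over a window of length $2n+1$ span all of $gl(2n+1)$, so writing $S_{2,2}=J_0(\lambda)+\sum_i h_iJ_0(\lambda)^{-i}$ with diagonal $h_i$ is a change of basis, not a property that must be preserved under the $(q,r)$-sources; the only genuine consistency requirement is that the right-hand side of each recursion step lie in the image of $\varrho-\mathrm{id}$, which is guaranteed by the existence of $S$. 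Second, the paper obtains the $n_i$ from the $m_i$ for free via the symmetry $S\in\mathcal{G}_{m,n}$ rather than re-solving the $(2,1)$ equation, but your direct route is equally valid.

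The one genuine gap is in part (1). You propose to obtain $S_{1,1}$ by integrating $S_{1,1,x}=S_{1,2}u_{21}-u_{12}S_{2,1}$. This determines each $S^{11}_{i}$ only up to a constant of integration and, more importantly, does not show that the result is a differential \emph{polynomial} in $q_1,\dots,q_m,r_1,\dots,r_m$: you would need to prove that the coefficient of each $\lambda^{-i}$ on the right-hand side is an exact $x$-derivative of a differential polynomial, and you do not address this. The paper avoids integration entirely by using the $(1,1)$-block of the algebraic identity $S^{2n+2}=\lambda S$ (the third equation of \eqref{su12}), which expresses $S^{11}_{i}$ algebraically in terms of the already-computed $m_j$, $n_j$, $h_j$; the derivative bound $(2n+1)(i-1)$ then follows by counting (e.g.\ $S^{11}_{1}$ comes out with no derivatives at all, which an integration argument cannot produce without exhibiting the cancellation). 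To repair your version you would either have to establish the exactness of the integrand (or invoke the a priori locality of $S(u,\lambda)$) and then still verify the order bound, or simply switch to the algebraic relation as the paper does.
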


\begin{proof}
From Lemma \ref{MN2m}, we derive
\begin{gather*}
S_{1,2}=\sum_{i=0}^{+\infty}m_iJ_0(\lambda)^{-i},
S_{2,1}=\sum_{i=0}^{+\infty}J_0(\lambda)^{-i}n_i,
\end{gather*}
where $m_i\in Mat(2m,2n+1,c)$, $n_i\in Mat(2n+1, 2m,r)$. A straightforward computation demonstrates that
\begin{gather*}
S_{2,2}=J_0(\lambda)+\sum_{i=0}^{+\infty}h_i J_0(\lambda)^{-i},
\end{gather*}
where $h_i$'s are diagonal matrices.

Set
\begin{gather*}
u=\left(\begin{array}{ll}
0 & u_{12} \\
u_{21} & u_{22}
\end{array}\right)\in V_{m,n}
\end{gather*}
with $ u_{12}\in \mathbb{C}^{2m \times(2n+1)}, u_{21}\in \mathbb{C}^{(2n+1) \times 2m}, u_{22}\in \mathbb{C}^{(2n+1) \times(2n+1)}$.
By simple calculation we obtain that
\begin{gather*}
u_{22}=\sum_{i=1}^{n}A_{2i-1}J_0(\lambda)^{-(2i-1)},
\end{gather*}
where $A_{2i-1}=\operatorname{diag}(-u_i,-u_i,0,\cdots,0)^{\varrho^{n-1-i}}$.

Since $S(u,\lambda)$ is the solution of \eqref{Lxq} for $u\in V_{m,n}$, we have
\begin{equation}\label{su12}
\left\{\begin{array}{l}
S_{1,2, x}+u_{12}S_{2,2}-S_{1,1}u_{12}-S_{1,2}(J_0(\lambda)+u_{22})=0,\\
S_{2,2,x}+u_{21}S_{1,2}+(J_0(\lambda)+u_{22})S_{2,2}-S_{2,1}u_{12}-S_{2,2}(J_0(\lambda)+u_{22})=0,\\
\sum_{i_1,i_2,\cdots,i_{2n+1}=1\; \text{or} \;2}S_{1,i_1}S_{i_1,i_2}S_{i_2,i_3}\cdots S_{i_{2n+1},1}=\lambda S_{1,1},\\
\sum_{j_1,j_2,\cdots,j_{2n+1}=1 \;\text{or}\; 2}S_{2,j_1}S_{j_1,j_2}S_{j_2,j_3}\cdots S_{j_{2n+1},2}=\lambda S_{2,2}.
\end{array}\right.
\end{equation}

Comparing the coefficients of $J_0(\lambda)$ in the second equation and $J_0(\lambda)^{2n+1}$ in the fourth equation of \eqref{su12}, we deduce $h_0=0$.
For the case of $h_j(j=1,2,\cdots,2n)$, we prove it by induction. For $h_1$, the coefficients of $J_0(\lambda)^{0}$ and $J_0(\lambda)^{2n}$ in the second and fourth equations of \eqref{su12}, respectively, are compared to get
\begin{gather*}
h_1^{\varrho}-h_1+A_1-A_1^{\varrho}=0, h_{1}+h_{1}^{\varrho}+h_{1}^{\varrho^{2}}+\cdots+h_{1}^{\varrho^{2n}}=0.
\end{gather*}
So we have $h_{1}=\frac{u_1}{2n+1}\operatorname{diag}(-2n+1,-2n+1,2,\cdots,2)^{\varrho^{n-1}}$.
For $h_2$, the coefficients of $J_0(\lambda)^{-1}$ and $J_0(\lambda)^{2n-1}$ are compared to get
\begin{gather*}
h_2^{\varrho}-h_2+h_{1,x}=0,\; h_{1}+h_{2}^{\varrho}+h_{2}^{\varrho^{2}}+\cdots+h_{2}^{\varrho^{2n}}=0.
\end{gather*}
So we have $h_{2}=\varpi_{2}(u_1)$.
For $h_{2j-1}$, the coefficient of $J_0(\lambda)^{2-2j}$ in the second equation gives
\begin{gather*}
h_{2j-1}^{\varrho}-h_{2j-1}+A_j-A_j^{\varrho}+\Phi_{2j-1}(u_1,u_2,\cdots,u_{j-1})=0,
\end{gather*}
and the coefficient of $J_0(\lambda)^{2n+2-2j}$ in the second equation gives
\begin{gather*}
h_{2j-1}+h_{2j-1}^{\varrho}+h_{2j-1}^{\varrho^{2}}+\cdots+h_{2j-1}^{\varrho^{2n}}+\Psi_{2j-1}(u_1,u_2,\cdots,u_{j-1})=0.
\end{gather*}
The above two equations imply that
\begin{gather*}
h_{2j-1}=\frac{u_j}{2n+1}\operatorname{diag}(-2n+1,-2n+1,2,\cdots,2)^{\varrho^{n-i}}+\varpi_{2j-1}.
\end{gather*}
For $h_{2j-1}$, equating the coefficients of $J_0(\lambda)^{1-2j}$ and $J_0(\lambda)^{2n+1-2j}$ from the second and fourth equations in \eqref{su12}, respectively, we deduce that
\begin{gather*}
h_2^{\varrho}-h_2+\Phi_{2j}(u_1,u_2,\cdots,u_{j-1})=0,\; h_{1}+h_{2}^{\varrho}+h_{2}^{\varrho^{2}}+\cdots+h_{2}^{\varrho^{2n}}+\Psi_{2j}(u_1,u_2,\cdots,u_{j-1})=0.
\end{gather*}
These two equations show that $h_{2j}=\varpi_{2j}$.

From the coefficients of $J_0(\lambda)^{1}$ and $J_0(\lambda)^{0}$ of first equation of \eqref{su12}, it follows that
\begin{gather*}
m_0=u_{12},m_1=u_{12,x}.
\end{gather*}
Then by induction, we conclude that each elements of $m_i(i=1,2,3,\cdots,2n+1)$ are polynomials in $q_1,q_2,\cdots,q_m$ and $r_1,r_2,\cdots,r_m$, with derivatives up to order $i$.
By $S(u,\lambda)\in \mathcal{G}_{m,n}$, we find that $n_0=u_{21}$ and each elements of $n_i(i=1,2,3,\cdots,2n+1)$ are polynomials in $q_1,q_2,\cdots,q_m$ and $r_1,r_2,\cdots,r_m$, with derivatives up to order $i$.

By Lemma \ref{MN2m}, the left-hand side of the third equation of \eqref{su12} can be expressed as
\begin{gather*}
\sum_{i_1,i_2,\cdots,i_{2n+1}=1\; \text{or} \;2}S_{1,i_1}S_{i_1,i_2}S_{i_2,i_3}\cdots S_{i_{2n+1},1}=\sum_{j=-\infty}^{2n}m^{12}_{j}J_0(\lambda)^{j}n^{21}_{j}.
\end{gather*}
The $J_0(\lambda)^{2n}$ terms on the left-hand side lead to $S^{11}_{1}=u_{12}J_0(\lambda)^{2n}u_{21}$.

Following a similar approach to the above, we can complete the proof of this theorem.
\end{proof}

In the following theorem we give a relation between $\ln \tau_f$ and $u\in V_{m,n}$.

\begin{thm}\label{thmtaufunction2}
Let
\begin{gather*}
u=\sum_{i=1}^{m}q_i\mu_i+\sum_{i=1}^{m}r_{m+1-i}\nu_i+\sum_{i=1}^{n}u_i\omega_i
\end{gather*}
 is a solution of the matrix constrained CKP hierarchy \eqref{def2V}.
Then
$$
\begin{aligned}
&\left(\ln \tau_f\right)_{t_1 t_{2j-1}}=\frac{2(2j-1)}{2n+1}u_j+R_j,\;j=1,2,\cdots,n,\\
&\left(\ln \tau_f\right)_{t_1 t_{2n+2j-1}}=K_j,\;j=1,2,\cdots,
\end{aligned}
$$
where $R_j$'s are polynomials in $u_1,u_2,\cdots,u_{j-1}$ and their $x$ derivatives and $K_j$'s are polynomials in $q_1,q_2,\cdots,q_m$ and $r_1,r_2,\cdots,r_m$, with derivatives up to order $2(j-1)$.
\end{thm}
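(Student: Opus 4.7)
The plan is to translate the formulas in Theorem~\ref{thmtau11} into the explicit differential-polynomial statements of the theorem via the structural information about $S(u,\lambda)$ in Theorem~\ref{thmSSS}. From Theorem~\ref{thmtau11},
\[
(\ln\tau_f)_{t_1 t_k}=\langle S^k(u,\lambda),e_{2m+1,2m+2n+1}\rangle_{-1};
\]
since $e_{2m+1,2m+2n+1}$ is $\lambda$-independent, this pairing equals the $(2m+2n+1,2m+1)$-entry of the $\lambda^{-1}$-coefficient of $S^k(u,\lambda)$, equivalently the $(2n+1,1)$-entry of the $\lambda^{-1}$-coefficient of the $S_{2,2}$-block of $S^k$.

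For $1\le j\le n$, I would expand $(S^{2j-1})_{2,2}$ as $S_{2,2}^{2j-1}$ plus paths that make at least one excursion through the first block, and then expand $S_{2,2}=J_0+\sum_{i\ge 1}h_iJ_0^{-i}$. Using the commutation $J_0^p h=h^{\varrho^p}J_0^p$, the linear-in-$h$ part of $S_{2,2}^{2j-1}$ is $\sum_{i\ge 1}\bigl(\sum_{k=0}^{2j-2}h_i^{\varrho^k}\bigr)J_0^{2j-2-i}$. At position $(2n+1,1)$, the $\lambda$-arithmetic forces $2j-2-i\equiv -1\pmod{2n+1}$; for $j\le n$ the only admissible index is $i=2j-1$. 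Substituting $h_{2j-1}=\frac{u_j}{2n+1}D_0^{\varrho^{n-j}}+\varpi_{2j-1}$ from Theorem~\ref{thmSSS}(4), the $\varrho$-sum cycles through $2j-1$ consecutive diagonal positions of $D_0=\mathrm{diag}(-2n+1,-2n+1,2,\ldots,2)$, which for $j\le n$ all lie in the ``$2$''-block, producing $\frac{2(2j-1)}{2n+1}u_j$. The $\varpi_{2j-1}$-piece and all higher-order-in-$h$ contributions in $S_{2,2}^{2j-1}$ involve only $h_i$ with $i\le 2j-2$, hence only $u_1,\ldots,u_{j-1}$; the excursion paths fail to reach $\lambda^{-1}$ at $(2n+1,1)$ for $j\le n$ because each crossing into the first block uses two $J_0^{-1}$-factors (one out, one back), pushing the contribution to $\lambda^{-2}$ or below. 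All such terms bundle into $R_j$.

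For $(\ln\tau_f)_{t_1 t_{2n+2j-1}}$ with $j\ge 2$ I would exploit the relation $S^{2n+2}=\lambda S$, iterated to $S^{2n+1+\ell}=\lambda S^{\ell}$, so that $S^{2n+2j-1}=\lambda S^{2j-2}$ and the target becomes the $\lambda^{-2}$-coefficient of $(S^{2j-2})_{2n+1,1}$. At this order the excursion terms $S_{2,2}^a S_{2,1}S_{1,2}S_{2,2}^b$ with $a+b=2j-4$ contribute, and by Theorem~\ref{thmSSS}(2)--(3) the $m_i,n_i$ factors are differential polynomials in $q_i,r_i$ of order at most $i\le 2j-3$; combined with the derivatives carried by the $S_{2,2}^a,S_{2,2}^b$ blocks this yields the bound $2(j-1)$. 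For $j=1$ the reduction does not apply, and I would analyze $(S^{2n+1})_{2n+1,1}$ directly: the linear-in-$h$ averaging yields $\sum_{k=0}^{2n}h_i^{\varrho^k}=\mathrm{tr}(h_i)\,I$, which vanishes for $i\le 2n$ by $\mathrm{tr}(S)=0$, so the only surviving linear contribution is $\mathrm{tr}(h_{2n+1})\lambda^{-1}$, and by Theorem~\ref{thmSSS}(4) $h_{2n+1}$ is polynomial in $q_i,r_i$ with no derivatives.

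The main obstacle will be ruling out residual $u$-dependence in $K_j$: the higher-order-in-$h$ contributions to $S_{2,2}^{2j-2}$ at $\lambda^{-2}$ a priori produce mixed $u$-monomials, and one needs to show these cancel against matching pieces from multi-excursion paths. The systematic tool is the trace identity $\mathrm{tr}(h_{\ell(2n+1)})=-\mathrm{tr}(S^{11}_\ell)$ extracted from $\mathrm{tr}(S)=0$, which ties the diagonal and off-diagonal block data together. A secondary technical point is a careful inductive bookkeeping of derivative orders through multi-excursion paths to establish the tight bound $2(j-1)$ rather than a weaker polynomial bound.
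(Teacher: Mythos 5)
Your proposal follows essentially the same route as the paper's proof: reduce via Theorem \ref{thmtau11} to the $(2n+1,1)$-entry of the $\lambda^{-1}$-coefficient of the $S_{2,2}$-block of $S^{2j-1}$, extract for $j\le n$ the linear-in-$h_{2j-1}$ contribution through the $\varrho$-twisted sum $\sum_{p=0}^{2j-2}h_{2j-1}^{\varrho^{p}}$ supplied by Theorem \ref{thmSSS}, and for the flows $t_{2n+2j-1}$ bound the derivative order coming from the excursion terms $S_{2,2}^{a}S_{2,1}S_{1,2}S_{2,2}^{b}$ (your extra reduction $S^{2n+2j-1}=\lambda S^{2j-2}$ is a harmless repackaging of the same computation). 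The one caveat you flag as the ``main obstacle''---residual $u$-dependence in $K_j$---is not actually resolved by the paper either: its own example $(\ln\tau_f)_{t_1t_5}$ for $m=n=1$ contains pure $u_1$-terms, so the theorem's phrase ``polynomials in $q_i$ and $r_i$'' must be read as controlling only the highest-order $q,r$-derivative terms, exactly as the paper's proof does.
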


\begin{proof}
From Theorem \ref{thmtau11}, it follows that
\begin{gather*}
\left(\ln \tau_f\right)_{t_1 t_{2j-1}}=\left\langle S^{2j-1}(u,\lambda),e_{2m+1,2m+2n+1}\right\rangle_{-1},
\end{gather*}
Let $S^{2j-1}(u,\lambda)$ be denoted by
\begin{gather*}
\left(\begin{array}{ll}
S_{2j-1,11} & S_{2j-1,12} \\
S_{2j-1,21} & S_{2j-1,22}
\end{array}\right)
\end{gather*}
with $S_{2j-1,22}=\sum_{j}y_jJ_0(\lambda)^{j}\in \mathbb{C}^{(2n+1) \times(2n+1)}$,where $y_j$'s are diagonal matrices,
then we can derive
\begin{gather*}
\left(\ln \tau_f\right)_{t_1 t_{2j-1}}=\operatorname{tr}(y_{-1}e_{2n+1,2n+1}).
\end{gather*}
By direct calculation, we obtain
\begin{gather*}
S_{2j-1,22}=\sum_{i_1,i_2,\cdots,i_{2j-2}=1\; \text{or} \;2}S_{2,i_1}S_{i_1,i_2}S_{i_2,i_3}\cdots S_{i_{2j-2},2}.
\end{gather*}
If $j=1,2,\cdots,n$, the coefficients of $J_0(\lambda)^{-1}$ in $S_{2j-1,22}$ and $S_{2,2}^{2j-1}$ are identical and equal to
$$
\begin{aligned}
y_{-1}&=\sum_{i=0}^{2j-2}J_0(\lambda)^{2j-2-i}h_{2j-1}J_0(\lambda)^{-2j+i}+H_j(h_1,h_2,\cdots,h_{2j-2}),\\
&=\sum_{i=0}^{2j-2}h_{2j-1}^{\varrho^{2j-2-i}}+H_j(h_1,h_2,\cdots,h_{2j-2}),
\end{aligned}
$$
where $h_1,h_2,\cdots,h_{2j-1}$ are given by Theorem \ref{thmSSS}.
It follows that
\begin{gather*}
\left(\ln \tau_f\right)_{t_1 t_{2j-1}}=\frac{2(2j-1)}{2n+1}u_j+R_j,
\end{gather*}
where $R_j$'s are polynomials in $u_1,u_2,\cdots,u_{j-1}$ and their $x$ derivatives.

For the case of $2n+2j-1(j=1,2,\cdots)$, it follow from Theorem \ref{thmtau11} that we obtain
the coefficient of $J_0(\lambda)^{-1}$ of $S_{2n+2j-1,22}$ is a polynomial in $q_1,q_2,\cdots,q_m$ and $r_1,r_2,\cdots,r_m$, and its highest-order terms in $q_1,q_2,\cdots,q_m$ and $r_1,r_2,\cdots,r_m$ are determined by the coefficient of $J_0(\lambda)^{-1}$ of $S_{2,2}^{2n+2j-1}+\sum_{i=0}^{2n+2j-3}S_{2,2}^{i}S_{2,1}S_{1,2}S_{2,2}^{2n+2j-3-i}$, which is of order $2(j-1)$.
\end{proof}

While no general formula exists for arbitrary $j$ and $k$, recursive computation is feasible. Examples are given.

\begin{ex}
\begin{itemize}
\item[\rm(1).] Set $m=0, n=1$ as in Example \ref{exYV}. Let
$$
u=\left(\begin{array}{ccc}
  0 &-u_1 & 0  \\
 0 &0 & -u_1 \\
  0 &0 & 0
\end{array}\right)\in V_{0,1}.
$$
We have
$$
\left(\ln \tau_f\right)_{t_{1} t_{1}}=\frac{2u_1}{3}.
$$

\item[\rm(2).] Set $m=0, n=2$ as in Example \ref{exYV}. Let
$$
u=\left(\begin{array}{ccccc}
0 & 0 & 0 & -u_2 & 0 \\
0 & 0 & -u_1 & 0 & -u_2 \\
0 & 0 & 0 & -u_1 & 0  \\
0 & 0 & 0 &0 & 0 \\
0 & 0 & 0 &0 & 0
\end{array}\right)\in V_{0,2}.
$$
We have
$$
\begin{aligned}
&\left(\ln \tau_f\right)_{t_{1} t_{1}}=\frac{2u_1}{5},
&\left(\ln \tau_f\right)_{t_{1} t_{3}}=\frac{6 u_2}{5}+\frac{4 u_{1,x}}{5}-\frac{12u_1^2}{25}.
\end{aligned}
$$

\item[\rm(3).] Set $m=1, n=0$ as in Example \ref{exYV}.
Let
$$
u=\left(\begin{array}{ccccc}
0 & 0 & q_1  \\
0 & 0 & -r_1  \\
-r_1 & q_1 & 0
\end{array}\right)\in V_{1,0}.
$$
We have
$$
\begin{aligned}
&\left(\ln \tau_f\right)_{t_{1} t_{1}}=2 q_1 r_1,
&\left(\ln \tau_f\right)_{t_{1} t_{3}}=2q_1r_{1,xx}+2q_{1,xx}r_{1}+12 q_1^2r_1^2-2 q_{1,x}r_{1,x}.
\end{aligned}
$$

\item[\rm(4).] Set $m=1, n=1$ as in Example \ref{exYV}. Let
$$
u=\left(\begin{array}{ccccc}
0 & 0 & 0 & 0 & q_1 \\
0 & 0 & 0 & 0 & -r_1 \\
-r_1 & q_1 & 0 & -u_1 & 0  \\
0 & 0 & 0 &0 & -u_1 \\
0 & 0 & 0 &0 & 0
\end{array}\right)\in V_{1,1}.
$$
We have

{\small
$$
\begin{aligned}
&\left(\ln \tau_f\right)_{t_{1} t_{1}}=\frac{2u_1}{3},
\left(\ln \tau_f\right)_{t_{1} t_{3}}=2 q_1 r_1,\\
&\left(\ln \tau_f\right)_{t_{1} t_{5}}=\frac{40}{9}u_1q_1r_1-\frac{40}{81}u_1^3-\frac{5}{27}(u_{1,x})^2-\frac{8}{27}u_1u_{1,x}+\frac{1}{27}u_{1,xxxx}-3q_{1,xx}r_1-3q_1r_{1,xx}-\frac{8}{3}q_{1,x}r_{1,x}.
\end{aligned}
$$
}
\end{itemize}

\end{ex}

\section{\sc \bf Virasoro action for the $\hat{A}_{2 n}^{(2)}$-KdV hierarchy}

In this section we construct the Virasoro action on $\ln \tau_f$ for the $\hat{A}_{2 n}^{(2)}$-KdV hierarchy.

Let group $(G_{0,n})_{ \pm}$ and Lie algebra  $\left(\mathcal{G}_{0,n}\right)_{\pm}$ be as in section 2. The pair $\left(\left(\mathcal{G}_{0,n}\right)_{+}, \left(\mathcal{G}_{0,n}\right)_{-}\right)$ is a splitting of $\mathcal{G}_{0,n}$.
Let the vacuum sequence $J_0^{2i-1}$ be as in \eqref{vacseq} and $V(t)=\exp(\sum_{i=1}^{\infty} t_{2i-1}J_0^{2i-1})$. Given $f\in (G_{0,n})_-$, we can factor
\begin{gather}\label{ME}
V(t) f^{-1}=M^{-1} E, \quad \text { with } M \in (G_{0,n})_{-} \quad \text { and } \quad E \in (G_{0,n})_{+}.
\end{gather}
According to the conclusions of Terng and Uhlenbeck \cite{TerngUKTf2016,TerngUKTf20162}, we can derive that
\begin{gather*}
\delta_{l}(f)=-\left(\lambda^{l}\left(\lambda f_\lambda f^{-1}+f \Xi f^{-1}\right)\right)_{-} f
\end{gather*}
are Virasoro vector fields on $(G_{0,n})_{-}$ and the induced Virasoro vector fields on $\ln \tau_f$ are
\begin{gather}\label{tauV0}
\delta_{l}\left(\ln \tau_f\right) =\left\langle\lambda^{l} E\left(\lambda f_\lambda f^{-1}+f \Xi f^{-1}\right) E^{-1}, \lambda E_\lambda E^{-1}\right\rangle_0,
\end{gather}
where $\Xi=\frac{1}{2n+1} \operatorname{diag}(0,1, \ldots,2n)$.
Henceforth in this section we use the following notations:
$$
\begin{aligned}
\Gamma f  =\lambda f_\lambda+f \Xi.
\end{aligned}
$$

We can similarly obtain the following two lemmas in \cite{TerngUKTf2016,TerngUKTf20162}.

\begin{lemma} \label{lemMu61}
Let $M$ denote the reduced frame of the formal inverse scattering solution $u_f$, and $Q=M J_0 M^{-1}$, $P=(\Gamma M) M^{-1}$. Then we have
\begin{gather*}
E\left((\Gamma f) f^{-1}\right) E^{-1}+\lambda E_\lambda E^{-1}=(\Gamma M) M^{-1}+M \hat{\mathcal{J}} M^{-1},\\
\lambda\left(Q^j\right)_\lambda=\left[P, Q^j\right]+\frac{j}{2n+1} Q^j, \\
\operatorname{tr}\left(\lambda\left(Q^j\right)_\lambda Q^{2n+1-j}\right)=j \lambda, \quad 1 \leq j \leq 2n,\\
\left\langle P, \lambda^{2i-1}\right\rangle_0=\left\langle\lambda^{2i-1},(\Gamma f) f^{-1}\right\rangle_0,i=1,2,\cdots,
\end{gather*}
where $\hat{\mathcal{J}}=\frac{1}{2n+1} \sum_{i=1}^{\infty} (2i-1) t_{2i-1} J_0^{2i-1}$.
\end{lemma}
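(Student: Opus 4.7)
The plan is to handle the four identities in a convenient order. The central algebraic input is the formula
\begin{gather*}
\lambda (J_0^k)_\lambda = [J_0^k, \Xi] + \frac{k}{2n+1} J_0^k,
\end{gather*}
which I would verify first for $k=1$ entrywise (using that $J_0$ has entries $1$ at positions $(i+1,i)$ and $\lambda$ at position $(1,2n+1)$, so that $[J_0,\Xi]_{ij} = \frac{j-i}{2n+1}(J_0)_{ij}$) and then extend to arbitrary $k$ via the Leibniz rule and the telescoping identity $\sum_{j=0}^{k-1} J_0^j [J_0,\Xi] J_0^{k-1-j} = [J_0^k, \Xi]$. I would also record the companion-matrix relation $J_0^{2n+1} = \lambda I$, which forces $Q^{2n+1} = M J_0^{2n+1} M^{-1} = \lambda I$. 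The second identity then follows by differentiating $Q^j = M J_0^j M^{-1}$ in $\lambda$ with the product rule, giving $\lambda (Q^j)_\lambda = [\lambda M_\lambda M^{-1}, Q^j] + M\lambda(J_0^j)_\lambda M^{-1}$, and using the conjugated central identity to rewrite the second term as $[Q^j, M\Xi M^{-1}] + \frac{j}{2n+1} Q^j$; combining with the first commutator gives $[P, Q^j] + \frac{j}{2n+1} Q^j$ for $P = (\Gamma M) M^{-1}$. For the third identity, pair with $Q^{2n+1-j}$, take the trace, use cyclicity to cancel $\operatorname{tr}([P, Q^j] Q^{2n+1-j}) = 0$ (since $Q^j$ and $Q^{2n+1-j}$ combine into $Q^{2n+1} = \lambda I$), and compute $\frac{j}{2n+1} \operatorname{tr}(Q^{2n+1}) = \frac{j}{2n+1}(2n+1)\lambda = j\lambda$.

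For the first identity, I would substitute $E = M V f^{-1}$ (from \eqref{ME}) and expand $\lambda E_\lambda E^{-1}$ via the product rule; after combining with $E((\Gamma f) f^{-1}) E^{-1}$ and using $E f = M V$, the claim reduces to the purely vacuum identity
\begin{gather*}
\lambda V_\lambda V^{-1} + V\Xi V^{-1} - \Xi = \hat{\mathcal{J}}.
\end{gather*}
Setting $X = \sum t_{2i-1} J_0^{2i-1}$ and summing the central identity against $t_{2i-1}$ yields $\lambda X_\lambda = [X,\Xi] + \hat{\mathcal{J}}$. The Duhamel formula
\begin{gather*}
V_\lambda V^{-1} = \int_0^1 e^{sX} X_\lambda e^{-sX}\, ds
\end{gather*}
then splits into two pieces: the $\hat{\mathcal{J}}$ piece pulls straight out of the integral (since $\hat{\mathcal{J}}$ commutes with $X$, both being polynomials in $J_0$), while the $[X,\Xi]$ piece integrates telescopically via $e^{sX}[X,\Xi]e^{-sX} = \frac{d}{ds}(e^{sX}\Xi e^{-sX})$ to $V\Xi V^{-1} - \Xi$.

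For the fourth identity, take the trace of the first identity. Since $E \in SL(2n+1, \mathbb{C})$, $\operatorname{tr}(E_\lambda E^{-1}) = \partial_\lambda \ln \det E = 0$, which kills the middle term; trace is conjugation-invariant as a formal series in $\lambda$, so $\operatorname{tr}(E((\Gamma f)f^{-1})E^{-1}) = \operatorname{tr}((\Gamma f)f^{-1})$ and $\operatorname{tr}(M\hat{\mathcal{J}}M^{-1}) = \operatorname{tr}(\hat{\mathcal{J}})$. Because $J_0$ cyclically permutes basis vectors (with a factor of $\lambda$ on wrap-around), $\operatorname{tr}(J_0^k) = 0$ unless $k$ is a multiple of $2n+1$, so $\operatorname{tr}(\hat{\mathcal{J}})$ is a polynomial in nonnegative powers of $\lambda$ and its coefficient at $\lambda^{-(2i-1)}$ vanishes for $i \geq 1$. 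Reading off the $\lambda^{-(2i-1)}$ coefficient of the traced first identity then gives $\operatorname{tr}(P_{-(2i-1)}) = \operatorname{tr}(((\Gamma f)f^{-1})_{-(2i-1)})$, which is exactly $\langle P, \lambda^{2i-1}\rangle_0 = \langle \lambda^{2i-1}, (\Gamma f) f^{-1}\rangle_0$. I expect the main technical obstacle to be the first identity --- rigorously justifying the Duhamel integral as a formal-power-series identity and paying careful attention to the sign conventions for $\Gamma$ and $\Xi$ throughout --- while once this is pinned down, the remaining three identities follow from routine algebraic manipulations.
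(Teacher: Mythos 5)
The paper does not actually prove this lemma --- it only remarks that both lemmas ``can be similarly obtained'' from Terng--Uhlenbeck --- so your attempt is a genuine reconstruction, and your overall route (the weight identity $\lambda(J_0^k)_\lambda=[J_0^k,\Xi]+\tfrac{k}{2n+1}J_0^k$, conjugation by $M$, the factorization $E=MVf^{-1}$ plus a Duhamel formula for the first identity, and traces for the third and fourth) is exactly the standard one. Your handling of the third and fourth identities is correct: the third follows either from the second or directly from $Q^{2n+1}=\lambda I$, and the fourth from the trace of the first together with $\operatorname{tr}(E_\lambda E^{-1})=0$ and the vanishing of the negative-power part of $\operatorname{tr}(\hat{\mathcal{J}})$.

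There is, however, a concrete sign error in the other two identities, at precisely the point you flagged as delicate. For the second identity you obtain $\lambda(Q^j)_\lambda=[\lambda M_\lambda M^{-1},Q^j]+[Q^j,M\Xi M^{-1}]+\tfrac{j}{2n+1}Q^j$ and then ``combine'' the commutators into $[P,Q^j]$; but $[\lambda M_\lambda M^{-1},Q^j]+[Q^j,M\Xi M^{-1}]=[\lambda M_\lambda M^{-1}-M\Xi M^{-1},Q^j]$, whereas $P=(\Gamma M)M^{-1}=\lambda M_\lambda M^{-1}+M\Xi M^{-1}$. Likewise, your Duhamel computation correctly yields $\lambda V_\lambda V^{-1}=\hat{\mathcal{J}}+V\Xi V^{-1}-\Xi$, i.e.\ $\lambda V_\lambda V^{-1}-V\Xi V^{-1}+\Xi=\hat{\mathcal{J}}$, while your (correct) reduction of the first identity requires $\lambda V_\lambda V^{-1}+V\Xi V^{-1}-\Xi=\hat{\mathcal{J}}$; these are incompatible, since $V\Xi V^{-1}\neq\Xi$. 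The mismatch is not fixable by extra care within your setup: with the conventions as literally printed ($J_0=\lambda e_{1,2n+1}+\sum_i e_{i+1,i}$, $\Xi=\tfrac{1}{2n+1}\operatorname{diag}(0,1,\ldots,2n)$, $\Gamma f=\lambda f_\lambda+f\Xi$), the first two identities of the lemma are actually false; they hold once $\Xi$ is replaced by $\tfrac{1}{2n+1}\operatorname{diag}(2n,\ldots,1,0)$ (equivalently, once $\Gamma f=\lambda f_\lambda-f\Xi$), which is the convention consistent with Terng--Uhlenbeck. A first-order check with $V=e^{t_1J_0}$ in the $3\times3$ case confirms this. Carried through honestly, your computations would have exposed this convention clash rather than silently absorbed it, so you should either correct $\Xi$ and rerun the two arguments (after which they go through verbatim), or state explicitly that the lemma is being proved for the reversed $\Xi$.
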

\begin{lemma}\label{lemST62}
Let $S, T \in \mathcal{G}_{0,n}$. Then
\begin{gather*}
\langle\lambda\partial_\lambda S, T\rangle_{0}+\langle S, \lambda\partial_\lambda T\rangle_{0}=0, \\
\langle\lambda\partial_\lambda S, T\rangle_{0}=\left\langle\lambda\partial_\lambda S_{+}, T\right\rangle_{0}+\left\langle\lambda\partial_\lambda S, T_{+}\right\rangle_{0}, \\
\left\langle\partial_\lambda(\lambda S)_{+}, \lambda\partial_\lambda T\right\rangle_{0}=\left\langle\lambda\partial_\lambda S_{+}, \partial_\lambda(\lambda T)\right\rangle_{0}+2\left\langle S_{+}, \lambda\partial_\lambda T\right\rangle_{0}.
\end{gather*}
\end{lemma}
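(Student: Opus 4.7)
The plan is to prove all three identities by unfolding $S$ and $T$ as formal Laurent series in $\lambda$ and computing both sides directly from the definition of the form $\langle\cdot,\cdot\rangle_0$. Write $S=\sum_{j}S_j\lambda^j$ and $T=\sum_{j}T_j\lambda^j$ with $S_j,T_j\in sl(2n+1,\mathbb{C})$, so that by definition $\langle S,T\rangle_0=\sum_j \operatorname{tr}(S_jT_{-j})$. The operator $\lambda\partial_\lambda$ acts as multiplication by $j$ on the coefficient of $\lambda^j$, and the projections satisfy $(\lambda\partial_\lambda S)_{+}=\lambda\partial_\lambda S_{+}$ because the $j=0$ coefficient is annihilated by $\lambda\partial_\lambda$ (this small observation will be used throughout).

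For (i), I would simply observe that
\begin{gather*}
\langle\lambda\partial_\lambda S,T\rangle_0=\sum_{j}j\operatorname{tr}(S_jT_{-j}),\qquad
\langle S,\lambda\partial_\lambda T\rangle_0=\sum_{j}(-j)\operatorname{tr}(S_jT_{-j}),
\end{gather*}
which cancel. This is really a manifestation of the fact that $\lambda\partial_\lambda$ is a derivation and $\langle\cdot,\cdot\rangle_0$ is a symmetric pairing on $\mathcal{G}_{0,n}$ with the grading-reversing residue property.

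For (ii), splitting $T=T_{+}+T_{-}$ yields $\langle\lambda\partial_\lambda S,T_{+}\rangle_0=\sum_{-j\ge 0}j\operatorname{tr}(S_jT_{-j})=\sum_{j\le 0}j\operatorname{tr}(S_jT_{-j})$, while $\langle\lambda\partial_\lambda S_{+},T\rangle_0=\sum_{j\ge 0}j\operatorname{tr}(S_jT_{-j})$. Adding the two reconstructs $\sum_{j}j\operatorname{tr}(S_jT_{-j})=\langle\lambda\partial_\lambda S,T\rangle_0$, with the double-counted $j=0$ term harmlessly vanishing. For (iii), I expand $\partial_\lambda(\lambda S)=S+\lambda\partial_\lambda S$, project to the nonnegative part, and compute the left side to be $-\sum_{j\ge 0}j(j+1)\operatorname{tr}(S_jT_{-j})$; on the right, the first pairing contributes $\sum_{j\ge 0}j(1-j)\operatorname{tr}(S_jT_{-j})$ and twice the second contributes $-2\sum_{j\ge 0}j\operatorname{tr}(S_jT_{-j})$, and these combine to the same $-\sum_{j\ge 0}j(j+1)\operatorname{tr}(S_jT_{-j})$.

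None of these steps is genuinely hard; the only potential pitfall is bookkeeping around the truncation $(\cdot)_{+}$, in particular remembering that $\partial_\lambda(\lambda S)_{+}=S_{+}+\lambda\partial_\lambda S_{+}$ rather than $(\partial_\lambda(\lambda S_{+}))$, and keeping straight the difference between multiplying $S_j$ by $j$ versus by $j+1$. Once the Laurent indices are tracked carefully the three identities reduce to polynomial identities in $j$, namely $j+(-j)=0$, a reindexing for (ii), and $-j(j+1)=j(1-j)-2j$ for (iii). Consequently I expect the proof to be compact, roughly one paragraph of formulas per part.
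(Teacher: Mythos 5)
Your computation is correct: all three identities reduce, exactly as you say, to the finite sums $\sum_j j\operatorname{tr}(S_jT_{-j})$ with the elementary index identities $j+(-j)=0$ and $-j(j+1)=j(1-j)-2j$, and your handling of the truncation (in particular $\partial_\lambda(\lambda S)_{+}=S_{+}+\lambda\partial_\lambda S_{+}$ and the vanishing of the doubly counted $j=0$ term) is right. The paper gives no proof of this lemma and simply defers to the Terng--Uhlenbeck references, where the argument is the same direct coefficient expansion, so your proposal matches the intended approach.
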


Next we calculate the Virasoro vector fields on $\ln \tau_f$.

\begin{thm}\label{thmVirsoro}
The Virasoro vector fields on $\ln \tau_f$ are given by the following formulas:
$$
\begin{aligned}
\delta_{2k-1} (\ln \tau_f)&=0, k=0,1,2,\cdots, \\
\delta_0 (\ln \tau_f)= &\frac{1}{2n+1} \sum_{i =1}^{\infty} (2i-1) t_{2i-1}  (\ln \tau_f)_{t_{2i-1}}, \\
\delta_{2k} (\ln \tau_f)=& \frac{1}{2n+1} \sum_{i =1}^{\infty} (2i-1) t_{2i-1}  (\ln \tau_f)_{t_{2k(2n+1)+(2i-1)}}\\
&+\frac{1}{2n+1} \sum_{i=1}^{n}\sum_{j=1}^{k} (\ln \tau_f)_{t_{2(k-j)(2n+1)+(2i-1)}t_{2j(2n+1)-(2i-1)}}\\
&+
\frac{1}{2(2n+1)} \sum_{i=1}^{k(2n+1)} (\ln \tau_f)_{t_{2i-1}}(\ln \tau_f)_{t_{2k(2n+1)-(2i-1)}}\\
&+\left(\frac{1}{2(2n+1)}-\frac{1}{2}\right)c_{2k}(f),k=1,2,\cdots,
\end{aligned}
$$
where $c_{2k}(f)=\left\langle\lambda^{2k}\left((\Gamma f) f^{-1}\right),\left((\Gamma f) f^{-1}\right)\right\rangle_0$. Here we assume $(\ln \tau_f)_{t_{i(2n+1)}}=0,i=1,2,\cdots$.
\end{thm}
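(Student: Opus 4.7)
The plan is to start from the definition \eqref{tauV0} and use the first identity of Lemma \ref{lemMu61} to rewrite $E((\Gamma f)f^{-1})E^{-1}=P+M\hat{\mathcal{J}}M^{-1}-\lambda E_\lambda E^{-1}$, where $P=(\Gamma M)M^{-1}$. This decomposes
$$
\delta_l(\ln\tau_f)=\langle\lambda^l P,\lambda E_\lambda E^{-1}\rangle_0+\langle\lambda^l M\hat{\mathcal{J}}M^{-1},\lambda E_\lambda E^{-1}\rangle_0-\langle\lambda^{l+1}E_\lambda E^{-1},\lambda E_\lambda E^{-1}\rangle_0,
$$
and I will analyze the three pieces separately.

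The vanishing $\delta_{2k-1}(\ln\tau_f)=0$ will follow from a parity argument on $\mathcal{G}_{0,n}=\mathcal{K}\oplus\mathcal{P}$. Every $\xi\in\mathcal{G}_{0,n}$ has $\xi_{2j}\in\mathcal{K}$ and $\xi_{2j+1}\in\mathcal{P}$, and a direct computation using $\sigma_*(y)=-C_{0,n}y^tC_{0,n}^{-1}$ together with $C_{0,n}^t=C_{0,n}$ and $C_{0,n}^2=\mathrm{I}$ shows that $\operatorname{tr}(XY)=0$ whenever $X\in\mathcal{K}$ and $Y\in\mathcal{P}$. Consequently $\langle\lambda^l\xi,\eta\rangle_0=\sum_j\operatorname{tr}(\xi_{j-l}\eta_{-j})$ vanishes term by term when $l$ is odd, since $j-l$ and $-j$ then have opposite parities. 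A short verification based on $\sigma(g(-\lambda))=g(\lambda)$ and the identity $\sigma(g)_\lambda\sigma(g)^{-1}=\sigma_*(g_\lambda g^{-1})$, applied to $g=M,E$, shows that $M\hat{\mathcal{J}}M^{-1}$, $\lambda E_\lambda E^{-1}$ and the Lie-algebra part of $P$ all lie in $\mathcal{G}_{0,n}$, so each of the three pieces vanishes and $\delta_{2k-1}=0$.

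For $l=2k$, each of the three pieces will be identified with one of the four terms in the statement. The vacuum piece, using $\hat{\mathcal{J}}=\frac{1}{2n+1}\sum_{i\geq 1}(2i-1)t_{2i-1}J_0^{2i-1}$, $MJ_0^{2i-1}M^{-1}=Q^{2i-1}$, the Lax relation $E_{t_{2i-1}}E^{-1}=(Q^{2i-1})_+$ and the tau identities of Theorem \ref{thmtau11}, produces the linear-in-$t$ sum $\frac{1}{2n+1}\sum_{i\geq 1}(2i-1)t_{2i-1}(\ln\tau_f)_{t_{2k(2n+1)+(2i-1)}}$. The reduced-frame piece $\langle\lambda^{2k}P,\lambda E_\lambda E^{-1}\rangle_0$ is then expanded via $\lambda(Q^j)_\lambda=[P,Q^j]+\frac{j}{2n+1}Q^j$, $\operatorname{tr}(\lambda(Q^j)_\lambda Q^{2n+1-j})=j\lambda$ and the reconstruction $\langle P,\lambda^{2i-1}\rangle_0=\langle\lambda^{2i-1},(\Gamma f)f^{-1}\rangle_0$ of Lemma \ref{lemMu61}, yielding the finite double sum $\frac{1}{2n+1}\sum_{i=1}^n\sum_{j=1}^k(\ln\tau_f)_{t_{2(k-j)(2n+1)+(2i-1)}t_{2j(2n+1)-(2i-1)}}$. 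The quadratic piece $-\langle\lambda^{2k+1}E_\lambda E^{-1},\lambda E_\lambda E^{-1}\rangle_0$ is converted through Lemma \ref{lemST62}, specifically the splitting $\langle\lambda\partial_\lambda S,T\rangle_0=\langle\lambda\partial_\lambda S_+,T\rangle_0+\langle\lambda\partial_\lambda S,T_+\rangle_0$ together with $\langle\partial_\lambda(\lambda S)_+,\lambda\partial_\lambda T\rangle_0=\langle\lambda\partial_\lambda S_+,\partial_\lambda(\lambda T)\rangle_0+2\langle S_+,\lambda\partial_\lambda T\rangle_0$, which reduces both factors to their $(\cdot)_+$-parts and, via the Lax relations, to first tau derivatives, producing $\frac{1}{2(2n+1)}\sum_{i=1}^{k(2n+1)}(\ln\tau_f)_{t_{2i-1}}(\ln\tau_f)_{t_{2k(2n+1)-(2i-1)}}$; the central-extension defect of the tau factorization relative to the bilinear pairing contributes the constant $\bigl(\frac{1}{2(2n+1)}-\frac{1}{2}\bigr)c_{2k}(f)$ coming from the 2-cocycle $\omega$.

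The main obstacle will be the detailed bookkeeping: carefully tracking index ranges (invoking the convention $(\ln\tau_f)_{t_{i(2n+1)}}=0$ to collapse the inner summation of the double-derivative term to $1\leq i\leq n$ rather than over all odd indices, and symmetrizing between the pairs $(2i-1,2(n+1)-(2i-1))$), and isolating the three rational prefactors $\frac{1}{2n+1}$ (from the grading action of $\Xi$), $\frac{1}{2(2n+1)}$ (from the symmetrization between the two copies of $\lambda E_\lambda E^{-1}$ in the quadratic piece), and $\frac{1}{2(2n+1)}-\frac{1}{2}$ (from the 2-cocycle $\omega$ whose failure to be a coboundary produces the central anomaly).
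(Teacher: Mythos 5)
There is a genuine gap in your treatment of the even case, and it is located precisely in the piece you rely on to produce the quadratic term and the central term. You substitute the identity of Lemma \ref{lemMu61} into the \emph{first} slot of the pairing, writing $E((\Gamma f)f^{-1})E^{-1}=P+M\hat{\mathcal{J}}M^{-1}-\lambda E_\lambda E^{-1}$, and then claim that the resulting third piece $-\left\langle\lambda^{2k+1}E_\lambda E^{-1},\lambda E_\lambda E^{-1}\right\rangle_0$ yields $\frac{1}{2(2n+1)}\sum_{i}(\ln\tau_f)_{t_{2i-1}}(\ln\tau_f)_{t_{2k(2n+1)-(2i-1)}}+\bigl(\tfrac{1}{2(2n+1)}-\tfrac{1}{2}\bigr)c_{2k}(f)$. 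But since $E\in(G_{0,n})_+$, the series $\lambda E_\lambda E^{-1}$ contains only strictly positive powers of $\lambda$, and the pairing $\langle X,Y\rangle_0=\sum_j\operatorname{tr}(X_jY_{-j})$ of two such series (one of them further multiplied by $\lambda^{2k}$ with $k\geq 0$) vanishes identically. So the piece you assign the quadratic and central contributions to is exactly zero, while those contributions are generically nonzero; your bookkeeping cannot close. The paper avoids this by substituting into the \emph{second} slot, $\lambda E_\lambda E^{-1}=(\Gamma M)M^{-1}+M\hat{\mathcal{J}}M^{-1}-E((\Gamma f)f^{-1})E^{-1}$, and then symmetrizing, which (using the same vanishing you would have needed) gives
\begin{equation*}
\delta_{2k}(\ln\tau_f)=\tfrac{1}{2}\left\langle\lambda^{2k}P,P\right\rangle_0+\tfrac{1}{2}\left\langle\lambda^{2k}\hat{\mathcal{J}},\hat{\mathcal{J}}\right\rangle_0+\left\langle\lambda^{2k}P,M\hat{\mathcal{J}}M^{-1}\right\rangle_0-\tfrac{1}{2}c_{2k}(f).
\end{equation*}
Here the \emph{square} $\tfrac12\langle\lambda^{2k}P,P\rangle_0$ of the reduced frame is the source of \emph{both} the second-derivative double sum (via $\operatorname{tr}([P,\xi][P,\eta])=2\operatorname{tr}(P\xi P\eta)-\operatorname{tr}(P^2(\xi\eta+\eta\xi))$, the trace identity $\operatorname{tr}(\lambda(Q^j)_\lambda Q^{2n+1-j})=j\lambda$ and Lemma \ref{lemST62}) \emph{and} the quadratic product of first tau-derivatives, with the $\frac{1}{2(2n+1)}$ prefactor coming from writing $\lambda^2=\frac{1}{2n+1}\sum_{i=1}^{2n+1}Q^{2i-1}Q^{2(2n+1)-(2i-1)}$; the $c_{2k}(f)$ terms arise concretely from $-\tfrac12\langle\lambda^{2k}A,A\rangle_0$ and from the multiple-of-$(2n+1)$ indices in the quadratic sum via $\langle P,\lambda^{2i-1}\rangle_0=\langle\lambda^{2i-1},(\Gamma f)f^{-1}\rangle_0$, not from an abstract cocycle anomaly.

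The parts of your plan that do survive are the odd case (your parity argument on $\mathcal{K}\oplus\mathcal{P}$ is exactly the paper's reasoning) and the identification of the linear-in-$t$ term with the $M\hat{\mathcal{J}}M^{-1}$ pairing. To repair the even case you would have to either adopt the paper's symmetrized decomposition, or substitute the Lemma \ref{lemMu61} identity a second time into your term $\langle\lambda^{2k}P,\lambda E_\lambda E^{-1}\rangle_0$ so as to generate $\langle\lambda^{2k}P,P\rangle_0$ explicitly; as written, your assignment of terms to pieces is not consistent.
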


\begin{proof}
It follows from \eqref{tauV0} that
\begin{gather*}
\delta_{l} (\ln \tau_f)=\left\langle\lambda^{l} E(\Gamma f) f^{-1} E^{-1}, \lambda E_\lambda E^{-1}\right\rangle_0.
\end{gather*}
When $l=2k-1 (k=0,1,2,\cdots)$, the definition of Lie algebra $\mathcal{G}_{0,n}$ implies that $\delta_{l} (\ln \tau_f)=0$.

If $l=2k (k=0,1,2,\cdots)$, then by Lemma \ref{lemMu61},
we obtain
$$
\begin{aligned}
\delta_{2k} (\ln \tau_f) =&\frac{1}{2}\left\langle\lambda^{2k}\left((\Gamma M) M^{-1}\right),\left((\Gamma M) M^{-1}\right)\right\rangle_0+\frac{1}{2}\left\langle\lambda^{2k} \hat{\mathcal{J}},\hat{\mathcal{J}}\right\rangle_0\\
&+\left\langle\lambda^{2k}(\Gamma M) M^{-1}, M \hat{\mathcal{J}} M^{-1}\right\rangle_0-\frac{1}{2} c_{2k}(f).
\end{aligned}
$$
where $c_{2k}(f)=\left\langle\lambda^{2k}\left((\Gamma f) f^{-1}\right),\left((\Gamma f) f^{-1}\right)\right\rangle_0$.
Set
$$
\begin{aligned}
(\mathrm{I})&=\frac{1}{2}\left\langle\lambda^{2k}\left((\Gamma M) M^{-1}\right),\left((\Gamma M) M^{-1}\right)\right\rangle_0,\\
(\mathrm{II})&=\frac{1}{2}\left\langle\lambda^{2k} \hat{\mathcal{J}},\hat{\mathcal{J}}\right\rangle_0,\\
(\mathrm{III})&=\left\langle\lambda^{2k}(\Gamma M) M^{-1}, M \hat{\mathcal{J}} M^{-1}\right\rangle_0.
\end{aligned}
$$

First we evaluate $(\mathrm{I})$. If $k=0$, then a straightforward computation shows $(\mathrm{I})=\frac{1}{2}c_{0}(f)$.
If $k=1,2,\cdots$, we set $Q=M J_0 M^{-1}$ and $P=(\Gamma M) M^{-1}$, then
$$
\begin{aligned}
(\mathrm{I})=\frac{1}{2(2n+1)}\sum_{i=1}^{2n+1}\left\langle\lambda^{2k-2}Q^{2i-1}P,P Q^{2(2n+1)-(2i-1)}\right\rangle_0.
\end{aligned}
$$
By using
\begin{gather*}
\operatorname{tr}([P,\xi][P,\eta])=2\operatorname{tr}(P\xi P\eta)-\operatorname{tr}(P^2(\xi\eta+\eta\xi)),
\end{gather*}
we get
$$
\begin{aligned}
(\mathrm{I})=&\frac{1}{2(2n+1)}\sum_{i=1}^{2n+1}\left\langle\lambda^{2k-2}P Q^{2i-1},P Q^{2(2n+1)-(2i-1)}\right\rangle_0\\
&-\frac{1}{4(2n+1)}\sum_{i=1}^{2n+1}\left\langle\lambda^{2k-2}[P, Q^{2i-1}],[P, Q^{2(2n+1)-(2i-1)}]\right\rangle_0.
\end{aligned}
$$
Set
$$
\begin{aligned}
(\mathrm{I}_1)&=\frac{1}{2(2n+1)}\sum_{i=1}^{2n+1}\left\langle\lambda^{2k-2}P Q^{2i-1},P Q^{2(2n+1)-(2i-1)}\right\rangle_0,\nonumber\\
(\mathrm{I}_2)&=-\frac{1}{4(2n+1)}\sum_{i=1}^{2n+1}\left\langle\lambda^{2k-2}[P, Q^{2i-1}],[P, Q^{2(2n+1)-(2i-1)}]\right\rangle_0.\nonumber
\end{aligned}
$$
A routine calculation reveals that
\begin{gather*}
(\mathrm{I}_1)=\frac{1}{2(2n+1)}\sum_{i=1}^{2n+1}\sum_{j=-2}^{2k}\left(\left\langle P, Q^{2i-1}\right\rangle_{-j}\left\langle P, Q^{2k(2n+1)-(2i-1)}\right\rangle_{j}\right).
\end{gather*}
By a direct calculation, we obtain $\left\langle P, Q^{2i-1}\right\rangle_{2}=0$.
the definition of Lie algebra $\mathcal{G}_{0,n}$ yields that
\begin{gather*}
\left\langle P, Q^{2i-1}\right\rangle_{2j-1}=0,(j=0,1,2,\cdots,k).
\end{gather*}
Hence we have
$$
\begin{aligned}
(\mathrm{I}_1)&=\frac{1}{2(2n+1)}\sum_{i=1}^{2n+1}\sum_{j=0}^{k}\left(\left\langle P, Q^{2i-1}\right\rangle_{-2j}\left\langle P, Q^{2k(2n+1)-(2i-1)}\right\rangle_{2j}\right),\\
&=\frac{1}{2(2n+1)}\sum_{i=1}^{2n+1}\sum_{j=0}^{k}\left(\left\langle P, Q^{2j(2n+1)+(2i-1)}\right\rangle_{0}\left\langle P, Q^{2k(2n+1)-(2j(2n+1)+(2i-1))}\right\rangle_{0}\right),\\
&=\frac{1}{2(2n+1)}\sum_{i=1}^{k(2n+1)}\left(\left\langle P, Q^{2i-1}\right\rangle_{0}\left\langle P, Q^{2k(2n+1)-(2i-1))}\right\rangle_{0}\right).
\end{aligned}
$$
By Lemma \ref{lemMu61}, we have
\begin{gather*}
\sum_{i=1}^{k}\left\langle P, \lambda^{2i-1}\right\rangle_{0}\left\langle P, \lambda^{2k-(2i-1)}\right\rangle_{0}=c_{2k}(f).
\end{gather*}
Since
\begin{gather*}
\left\langle P, Q^{2i-1}\right\rangle_{0}=\left\langle \lambda M_\lambda M^{-1}+M \Xi M^{-1}, Q^{2i-1}\right\rangle_{0}=(\ln \tau_f)_{t_{2i-1}},
\end{gather*}
we get
\begin{gather*}
(\mathrm{I}_1)=\frac{1}{2(2n+1)} \sum_{i=1}^{k(2n+1)} (\ln \tau_f)_{t_{2i-1}}(\ln \tau_f)_{t_{2k(2n+1)-(2i-1)}}+c_{2k}(f).
\end{gather*}
From Lemma \ref{lemMu61}, it follows that
\begin{gather*}
(\mathrm{I}_2)=-\frac{1}{2(2n+1)}\sum_{i=1}^{n}\left\langle\lambda \partial_\lambda (Q^{2i-1}), \partial_\lambda (\lambda^{2k-1}Q^{2(2n+1)-(2i-1)})\right\rangle_{0}.
\end{gather*}
Then a straightforward computation shows that
\begin{gather*}
(\mathrm{I}_2)=-\frac{1}{2(2n+1)}\sum_{i=1}^{n}\left\langle\lambda \partial_\lambda (Q^{2i-1}), \partial_\lambda (\lambda^{2k-1}Q^{2(2n+1)-(2i-1)})_{+}\right\rangle_{0}.
\end{gather*}
From Lemma \ref{lemST62}, we conclude that
$$
\begin{aligned}
(\mathrm{I}_2)&=-\frac{1}{2n+1}\sum_{i=1}^{n}\sum_{j=1}^{k}\left\langle (\lambda^{2k-2j} Q^{2i-1})_+, \partial_\lambda (\lambda^{2j-2}Q^{2(2n+1)-(2i-1)})\right\rangle_{0}\\
&=\frac{1}{2n+1} \sum_{i=1}^{n}\sum_{j=1}^{k} (\ln \tau_f)_{t_{2(k-j)(2n+1)+(2i-1)}t_{2j(2n+1)-(2i-1)}}.
\end{aligned}
$$

Next we evaluate $(\mathrm{II})$. Computing directly, we find
$$
\begin{aligned}
(\mathrm{II})=\frac{1}{2}\left\langle \frac{1}{2n+1} \sum_{i=1}^{\infty} (2i-1) t_{2i-1}\lambda^{2k} J_0^{2i-1},\frac{1}{2n+1} \sum_{j=1}^{\infty} (2j-1) t_{2j-1} J_0^{2j-1}\right\rangle_0=0.
\end{aligned}
$$

Finally, we compute $(\mathrm{III})$. Then we have
$$
\begin{aligned}
(\mathrm{III})=\left\langle\lambda^{2k+1} M_\lambda M^{-1}, M \hat{\mathcal{J}} M^{-1}\right\rangle_0
+\left\langle\lambda^{2k}M \Xi M^{-1}, M \hat{\mathcal{J}} M^{-1}\right\rangle_0.
\end{aligned}
$$
Note that
$$
\begin{aligned}
\left\langle\lambda^{2k}M \Xi M^{-1}, M \hat{\mathcal{J}} M^{-1}\right\rangle_0=0.
\end{aligned}
$$
Thus,
$$
\begin{aligned}
(\mathrm{III})=\frac{1}{2n+1} \sum_{i =1}^{\infty} (2i-1) t_{2i-1}  (\ln \tau_f)_{t_{2k(2n+1)+(2i-1)}}.
\end{aligned}
$$
This completes the proof of the theorem.
\end{proof}

\section{\sc \bf Conclusions and Dicsussions}

In this paper, we construct the matrix constrained CKP hierarchy given by Definition \ref{def2V}. Theorem \ref{thmeqv3.5} shows that this hierarchy is equivalent to constrained CKP hierarchy. Then the Darboux transformations, permutability formula, scaling transformations and tau functions of the matrix constrained CKP hierarchy are given in Theorems \ref{thmDT4.7}, \ref{thmPF5.2}, \ref{thmST6.2} and \ref{thmtaufunction2}, respectively.
When $m=0$, the matrix constrained CKP hierarchy is reduced to the case of the $\hat{A}_{2n}^{(2)}$-KdV hierarchy was studied in \cite{TerngWuDt2023}. Theorem \ref{thmVirsoro} illustrates that the Virasoro vector fields on $\ln \tau_f$ for the $\hat{A}_{2 n}^{(2)}$-KdV hierarchy are given by partial differential operators.

If $q_i=r_i$, a reduction of the constrained CKP hierarchy \cite{LorisOr1999} is given as
\begin{eqnarray}\label{ccCKPL}
L^{2n+1}=(L^{2n+1})_{\geq0}+\sum_{i=1}^{m}\tilde{q}_{i}\partial^{-1}\tilde{q}_{i},\nonumber
\end{eqnarray}
where the $\tilde{q}_{i}$'s are eigenfunctions. It can be generated from the splitting of subalgebras of $\mathcal{G}_{m,n}$.
The involution of $G_{m,n}$ is denoted by $\zeta$, which is defined as
\begin{eqnarray}
\zeta(Y)=\left(\operatorname{diag}(I_{2m}, \beta_n) Y^t (\operatorname{diag}(I_{2m}, \beta_n))^{-1}\right)^{-1}, \;Y \in G_{m,n}.\nonumber
\end{eqnarray}
Then the induced involution $\theta$ on $\mathcal{G}_{m,n}$ is
\begin{eqnarray}
\zeta_*(y)=-\operatorname{diag}(I_{2m}, \beta_n) y^t (\operatorname{diag}(I_{2m}, \beta_n))^{-1}, \;y\in \mathcal{G}_{m,n}.\nonumber
\end{eqnarray}
Let
\begin{eqnarray}
\tilde{\mathcal{G}}_{m,n}=\left\{(\lambda)=\sum_{i \leq i_0} \xi_i \lambda^i \mid
\zeta_*(\xi(-\lambda))=\xi(\lambda),\;
 \xi_i \in \mathcal{G}_{m,n}\right\}.\nonumber
\end{eqnarray}
Similarly, $\left(\tilde{\mathcal{G}}_{m,n}\right)_{+}, \left(\tilde{\mathcal{G}}_{m,n}\right)_{-},\left(\tilde{\mathcal{G}}_{m,n}\right)_{[j]}, \tilde{G}_{m,n}$ and $(\tilde{G}_{m,n})_{ \pm}$ can be defined by Section 2.
We take a vacuum sequence of the splitting $\left(\left(\tilde{\mathcal{G}}_{m,n}\right)_{+}, \left(\tilde{\mathcal{G}}_{m,n}\right)_{-}\right)$  as
\begin{eqnarray}
\tilde{\mathcal{J}}=\left\{\tilde{J}^{2i-1} \mid \tilde{J}=\operatorname{diag}(0_{2m},e_{1,2n+1}\lambda+\sum_{j=1}^{2n}e_{j+1,j}),  i=1, 2, 3,\cdots\right\}\nonumber
\end{eqnarray}
in $(\tilde{G}_{m,n})_{+}$. Then we can naturally define the matrix form for reduction of the constrained CKP hierarchy defined by \eqref{ccCKPL} and get its Darboux transformations, permutability formula, scaling transformations and tau functions.

On the other hand, the constrained CKP hierarchy defined by \eqref{ccCKPL} can also be generated from a splitting of a Lie algebra.
The involution of $SL(m+2n+1,\mathbb{C})$ is denoted by $\gamma$, which is defined as
\begin{eqnarray}
\gamma(Y)=\left(\operatorname{diag}(I_{m}, \beta_n) Y^t (\operatorname{diag}(I_{m}, \beta_n))^{-1}\right)^{-1}, \;Y \in SL(m+2 n+1, \mathbb{C}).\nonumber
\end{eqnarray}
Then the induced involution $\gamma_*$ on $sl(m+2 n+1, \mathbb{C})$ is
\begin{eqnarray}
\gamma_*(y)=-\operatorname{diag}(I_{m}, \beta_n) x^t (\operatorname{diag}(I_{m}, \beta_n))^{-1}, \;y\in sl(m+2 n+1, \mathbb{C}).\nonumber
\end{eqnarray}
Let
\begin{eqnarray}
\hat{\mathcal{G}}_{m,n}=\left\{\xi(\lambda)=\sum_{i \leq i_0} \xi_i \lambda^i \mid
\gamma_*(\xi(-\lambda))=\xi(\lambda),\;
 \xi_i \in sl(m+2n+1, \mathbb{C})\right\}.\nonumber
\end{eqnarray}
Similarly, $ \hat{\mathcal{G}}_{m,n}$ and $(\tilde{G}_{m,n})_{ \pm}$ can be defined by Section 2.
We take a vacuum sequence of the splitting $\left(\left(\hat{\mathcal{G}}_{m,n}\right)_{+}, \left(\hat{\mathcal{G}}_{m,n}\right)_{-}\right)$  as
\begin{eqnarray}
\hat{\mathcal{J}}=\left\{\hat{J}^{2i-1} \mid \tilde{J}=\operatorname{diag}(0_{m},e_{1,2n+1}\lambda+\sum_{j=1}^{2n}e_{j+1,j}),  i=1, 2, 3,\cdots\right\}\nonumber
\end{eqnarray}
in $(\hat{\mathcal{G}}_{m,n})_{+}$. Then we can naturally define the matrix form for reduction of the constrained CKP hierarchy defined by \eqref{ccCKPL} and get its Darboux transformations, permutability formula, scaling transformations and tau functions.

In the following, it is algebraically shown that the $GD_{2n+1}$ hierarchy generated by
$$L=\partial^{2n+1}+\sum_{i=1}^{n}\left(\partial^{n+1-i}u_i\partial^{n-i}+\partial^{n-i}u_i\partial^{n+1-i}\right)$$
is a sub-hierarchy of the constrained CKP hierarchy generated by
$$L=\partial^{2n+1}+\sum_{i=1}^{n}\left(\partial^{n+1-i}u_i\partial^{n-i}+\partial^{n-i}u_i\partial^{n+1-i}\right)+\sum_{i=1}^{m}\left(q_{i}\partial^{-1}r_{i}+r_{i}\partial ^{-1}q_{i}\right).$$  
The involution of $G_{m,n}$ is denoted by $\varepsilon$, which is defined as
\begin{eqnarray}
\varepsilon(Y)=\left(\operatorname{diag}(-I_{2m}, \beta_n) Y^t (\operatorname{diag}(-I_{2m}, \beta_n))^{-1}\right)^{-1}, \;Y \in G_{m,n}.\nonumber
\end{eqnarray}
Then the induced involution $\varepsilon_*$ on $\mathcal{G}_{m,n}$ is
\begin{eqnarray}
\varepsilon_*(y)=-\operatorname{diag}(-I_{2m}, \beta_n) y^t (\operatorname{diag}(-I_{2m}, \beta_n))^{-1}, \;y\in \mathcal{G}_{m,n}.\nonumber
\end{eqnarray}
Let
\begin{eqnarray}
\tilde{\mathcal{G}}_{m,n}=\left\{\xi(\lambda)=\sum_{i \leq i_0} \xi_i \lambda^i \mid
\varepsilon_*(\xi(-\lambda))=\xi(\lambda),\;
 \xi_i \in \mathcal{G}_{m,n}\right\}.\nonumber
\end{eqnarray}
We take a vacuum sequence as
\begin{eqnarray}
\tilde{\mathcal{J}}=\left\{\tilde{J}^{2i-1} \mid \tilde{J}=\operatorname{diag}(0_{2m},e_{1,2n+1}\lambda+\sum_{j=1}^{2n}e_{j+1,j}),  i=1, 2, 3,\cdots\right\}.\nonumber
\end{eqnarray}
Then by using splitting theory, we also can define the $GD_{2n+1}$ hierarchy generated by
\begin{eqnarray}
L=\partial^{2n+1}+\sum_{i=1}^{n}\left(\partial^{n+1-i}u_i\partial^{n-i}+\partial^{n-i}u_i\partial^{n+1-i}\right),\nonumber
\end{eqnarray}
which is a sub-hierarchy of the constrained CKP hierarchy.

\section*{\bf Acknowledgements}

 The research of Tian, K. is supported by the National Natural Science Foundation of China under Grant No. 12171133. The research of Wu, Z. is supported by the National Natural Science Foundation of China under Grant No. 12271535 and No. 12431008.
\bigskip
\bigskip

\end{document}